% SIAM Article Template
\documentclass{siamart190516}

% Information that is shared between the article and the supplement
% (title and author information, macros, packages, etc.) goes into
% ex_shared.tex. If there is no supplement, this file can be included
% directly.

\usepackage{lipsum}
\usepackage{amsfonts}
\usepackage{graphicx}
\usepackage{epstopdf}
\usepackage{algorithmic}
\usepackage{mathrsfs}
\usepackage{calligra} 
\usepackage{xcolor,soul,framed} %,caption
\usepackage{bm}
\usepackage[font=small,labelfont=bf]{caption}
\colorlet{shadecolor}{yellow}
\usepackage{import}
% load hyperref last to minimize conflicts
\usepackage{array}
\usepackage{dsfont}
\usepackage{xcolor}
\usepackage{tabularx}
\usepackage{makecell}
\usepackage{tikz}
\usepackage{blkarray}
\usepackage{amsmath,amsfonts,amssymb}
\usepackage{mathtools}
\usepackage{cuted}
\usepackage{cite}
\usepackage{cleveref}
\usepackage{amssymb}
\usepackage{cases}
\usepackage{empheq}
\newcommand{\matr}[1]{\mathbf{#1}} %For matrix notation
\newcommand{\Z}{\mathds{Z}}

\newcommand{\x}{\bm x}
\newcommand{\s}{\mathtt{s}}

\newcommand{\h}{\bm h}

\newcommand{\1}{\mathtt{1}}

\newcommand{\n}{\mathtt{n}}
\newcommand{\m}{\mathtt{m}}
\newcommand{\p}{\mathsf{p}}

\ifpdf
  \DeclareGraphicsExtensions{.eps,.pdf,.png,.jpg}
\else
  \DeclareGraphicsExtensions{.eps}
\fi

% Add a serial/Oxford comma by default.

% Used for creating new theorem and remark environments
\newsiamremark{remark}{Remark}
\newsiamremark{example}{Example}
\crefname{hypothesis}{Hypothesis}{Hypotheses}
\newsiamthm{claim}{Claim}
\newsiamthm{assumption}{Assumption}
\newsiamremark{cor}{Corollary}
% Sets running headers as well as PDF title and authors
\headers{Identifying Vaccination Dynamics}{A. Aghaeeyan, M. A. Lewis, and P. Ramazi}

% Title. If the supplement option is on, then "Supplementary Material"
% is automatically inserted before the title.
\title{Identifiability analysis of vaccination decision-making dynamics\thanks{Submitted to the editors DATE.
\funding{
This work was partially supported by
 Natural Science and Engineering Research Council of Canada (NSERC) Discovery Grants.
MAL gratefully acknowledges the Gilbert and Betty Kennedy Chair in Mathematical Biology.}}}

% Authors: full names plus addresses.
\author{Azadeh Aghaeeyan\thanks{Department of Mathematics and Statistics, Brock University, ON, Canada
  (\email{aaghaeeyan@brocku.ca}).}
\and Mark A. Lewis\thanks{Department of Mathematics and Statistics and Department of Biology,
University of Victoria, BC, Canada
(\email{mlewis@uvic.ca}).}
\and Pouria Ramazi\thanks{Department of Mathematics and Statistics, Brock University, ON, Canada
  (\email{pramazi@brocku.ca}).}
}

\usepackage{amsopn}

%% Added on Overleaf: enabling xr
\makeatletter
\newcommand*{\addFileDependency}[1]{% argument=file name and extension
  \typeout{(#1)}% latexmk will find this if $recorder=0 (however, in that case, it will ignore #1 if it is a .aux or .pdf file etc and it exists! if it doesn't exist, it will appear in the list of dependents regardless)
  \@addtofilelist{#1}% if you want it to appear in \listfiles, not really necessary and latexmk doesn't use this
  \IfFileExists{#1}{}{\typeout{No file #1.}}% latexmk will find this message if #1 doesn't exist (yet)
}
\makeatother

%%% END HELPER CODE
%%% Local Variables: 
%%% mode:latex
%%% TeX-master: "ex_article"
%%% End: 

% Optional PDF information
\ifpdf
\hypersetup{
  pdftitle={Identifiability analysis of vaccination decision-making dynamics},
  pdfauthor={}
}
\fi

% The next statement enables references to information in the
% supplement. See the xr-hyperref package for details.

%% Use \myexternaldocument on Overleaf
% \myexternaldocument{ex_supplement}

% FundRef data to be entered by SIAM
%<funding-group>
%<award-group>
%<funding-source>
%<named-content content-type="funder-name"> 
%</named-content> 
%<named-content content-type="funder-identifier"> 
%</named-content>
%</funding-source>
%<award-id> </award-id>
%</award-group>
%</funding-group>

\begin{document}

\maketitle

% REQUIRED
\begin{abstract}
%Despite extensive efforts to promote vaccination, coverage rates for certain vaccines remain insufficient among some groups. 
%Follow-up studies have revealed significant variations in individuals' perceptions of vaccination and their decision-making processes. 
Variations in individuals'  perceptions of vaccination and  decision-making processes can give rise to poor vaccination coverage.
The future vaccination promotion programs
will benefit from  understanding this heterogeneity amongst groups within a population and, accordingly, tailoring the communication strategies.
Motivated by this, we developed a mechanistic model  consisting of a system of ordinary differential equations that
 categorizes individuals based on two factors:
 \emph{(i)}
perceived payoff gains for vaccination
and \emph{(ii)} decision-making strategies where we assumed that individuals may behave as either \emph{myopic rationalists}, going for a dose of vaccine if doing so maximizes their perceived  payoff gain, or \emph{success-based learners}, waiting to observe feedback on vaccination before deciding.
 We then investigated the global identifiability of group proportions  and  perceived payoff gains, that is, the possibility of globally retrieving these parameters by observing the error-free cumulative proportion of vaccinated individuals over time.
 To do so, for each group, we assumed a piecewise constant payoff gain and, for each time interval, obtained the so-called \emph{generalized input-output equation}.
We then proved the 
 %Our analysis yielded sufficient conditions for the 
 global identifiability of these parameters under certain conditions.
 Global identifiability opens the door to  reliable estimations of the group proportions and their perceived payoffs.
 %and, in turn, a better understanding of the heterogeneity among individuals in the context vaccination.
 %This, in turn, enables the development of more tailored vaccination promotion programs.
\end{abstract}
% In this way, we attribute the observed heterogeneity in vaccine uptake to the differences in individuals' perceived payoffs and their decision-making strategies.
% REQUIRED
\begin{keywords}
  
\end{keywords}

% REQUIRED
\begin{AMS}
 
\end{AMS}

\section{Introduction}
The results of past vaccination programs have revealed the heterogeneity of individuals' responses to  the opportunity of immunization.
Indeed, each individual perceives costs and benefits associated with vaccine uptake
due to their social environment, lived experiences  \cite{dolu2023covid}, gathered information  \cite{jones2012parents}, or their personal circumstances  \cite{goncu2021covid}, which make them  welcoming, strongly hesitant, or anywhere in between \cite{olson2020addressing}. 
% In the framework of game theory, the so-called \emph{payoff} functions are used to model these perceptions.

These perceived benefits and costs manifest in  individuals' vaccine uptake behavior through their decision-making strategies.
More specifically, 
 some individuals might behave as \emph{myopic rationalists} or \emph{innovators} \cite{van2007new}, those who decide to get immunized if doing so maximizes their perceived instant payoff gain,  whereas others might behave as \emph{success-based learners} or \emph{imitators} \cite{van2015focus,pingle1995imitation}, those who rely on interactions with others, and, intuitively, decide to get vaccinated if their own perceived payoff for remaining unvaccinated fall short of the perceived payoff of vaccinated individuals.

In the context of COVID-19 vaccination program,
% the variety of responses to this opportunity reflectedthe heterogeneity in individuals' perceived benefits or costs and decision-making strategies \cite{?}.
these perceived benefits and costs did not remain fixed over time; events such as
the emergence of delta variant or  vaccine mandate policies elicited different responses from society \cite{oliu2022effect}.
If public health authorities had known the population proportions of individuals with different perceptions  and decision-making strategies, 
% On the other hand, companies seek incentive programs to persuade potential buyers.
% If companies knew the population proportions of different groups of potential buyers,
they could have tailored their communication strategies more effectively to maximize vaccination coverage.
Yet, these proportions are not typically accessible to measurement.

In eyes of a modeler, 
this desire can be framed as developing a plausible model which captures the decision-making processes of a heterogeneous population where the proportions of groups appear as parameters, and then estimating  these parameters by  fitting the model to the data on vaccine uptake.
However, what if the model is fitted equally well with different proportions?

Thankfully, this question can be addressed by \emph{a priori} identifiability analysis of the developed model.
%Previous work and gap
 Identifiability investigates whether it is at all possible to globally retrieve the parameters of a system with a known and error-free structure and noise-free inputs and outputs \cite{audoly2001global}.
In the context of vaccination, the vaccine supply is the input, and the vaccination progress is the measured output.

% In the literature, there are two distinct notions of identifiability; 
% \emph{a priori} identifiability and \emph{practical} identifiability or \emph{estimateability}.
% \emph{A priori} identifiability investigates the identifiability 
% under the ideal conditions of error-free model and noiseless data.
The theory and applications of  identifiability have attracted researchers over the years from different fields, such as automatic control \cite{glover1974parametrizations,denis2004equivalence, sarathchandra2024extending}, epidemiology \cite{sauer2021identifiability,dankwa2022structural}, and host infection \cite{xia2003identifiability,ciupe2022identifiability}.
Several approaches have been proposed to investigate identifiability, including but not limited to power series expansion \cite{pohjanpalo1978system}, local state isomorphism \cite{vajda1989state}, implicit function \cite{xia2003identifiability}, and differential algebra \cite{ljung1994global}, see \cite{chis2011structural,miao2011identifiability,anstett2020priori} for a comprehensive review.
Accordingly, several online tools and software packages have been dedicated to identifiability analysis of the constructed models \cite{bellu2007daisy,hong2019sian,julia }.
% Among the available approaches for identifiability analysis, \emph{differential algebra} has become popular due to its potential to be pursued using symbolic programming \cite{bellu2007daisy}.
% There are a vast number of research studies consider the identifiability of the parameters of a general dynamical system in the following form
% \begin{equation} \label{eq:introduction}
%     \Sigma^{\bm \theta} = 
%     \left\{
%     \begin{aligned}
%         & \dot{\bm x}(t) = \bm f\big(\bm x(t), \bm u(t), \bm \theta \big), \\
%         & \bm x(0) = \bm x_0(\bm \theta),\\
%         & \bm y(t, \bm \theta) = \bm g \big(\bm x(t), \bm \theta \big),
%         & \bm d(\bm \theta) = \bm 0,
%     \end{aligned}
%     \right.
% \end{equation}
% where $\bm x$, $\bm u$, $\bm \theta$ are respectively the vector of states, inputs, and unknown parameters, $\bm y$ is the measured output, the functions
% $\bm f$ and $\bm g$ are rational functions of their arguments, and $\bm d$ is a vector of known constraints on the parameters.
% More specifically, they investigate the identifiability of the vector of parameters $\bm \theta$, that is, if there is another system $ \Sigma^{\bm \theta'}$ satisfying
% $$ {y}(t, \bm \theta') = {y}(t, \bm \theta)$$
% for all $t \in [0,T)$, for a large enough $T$, can we conclude that $\bm \theta$ is equal to $\bm \theta'$?
Differential algebra approach was introduced by \emph{Ritt} which involves in  constructing the so-called \emph{characteristic sets} of the  ideal generated by the differential equations describing the system \cite{ritt1950differential}.
The characteristic set then includes \emph{input-output relation}, that is, a set of normalized differential polynomials (set to zero) in terms of only known time-varying variables and unknown fixed parameters. 
The monomials of the input-output relations, under the so-called \emph{solveability}, have identifiable coefficients, see \cite{saccomani2003parameter}  for more details and \cite{hong2020global,ovchinnikov2022input,ovchinnikov2021computing} for consequences of violating the solvability condition.
The identifiability of the parameters then reduces to the one-to-oneness of the map from parameter spaces to the obtained coefficients.
The existence of the generalized input-output equations is guaranteed if
 the system dynamics and the output are rational in terms of the systems states and parameters.

%Your contribution
In this paper, we investigate  identifiability of vaccination decision-making dynamics.
In this regard, in \cref{sec:problem-formulation} we provide a mechanistic model capturing the vaccine uptake in a well-mixed population.
The population is stratified into several groups in terms of  perceived payoff gain for vaccination and the decision-making strategies.
The developed model is a generalization of the one proposed in \cite{aghaeeyan2024revealing}, where the perceived payoff gain for vaccination was uniform across the population.
In \cite{aghaeeyan2024revealing}, the identifiability of the proportions of myopic rationalists and, consequently, imitators was shown.
The proportions in American and Canadian jurisdictions were then estimated by fitting the model to COVID-19 vaccine uptake data \cite{aghaeeyan2024revealing, aghaeeyan2024majority}.
% The model also takes into account the supply limitation.

In \cref{sec:main}, we then delve into the identifiability of the parameters appearing in the model including 
the proportion of each group.
The available results on identifiability  are not readily applicable to the developed model in \cref{sec:problem-formulation}; the perceived payoff for vaccine uptake may vary over time and get negative for some time interval resulting in change in the effective order of the system and, in turn, the input-output equation. 
In addition, we are interested in the identifiability of a general population consisting of $\n$ groups.
Hence, investigating  the 
 identifiability of the proportions of groups of individuals in vaccine uptake is more subtle and demands more through analysis.
 Our paper provides sufficient conditions for the identifiability of the parameters in the vaccination decision-making dynamics.
% The outline is not required, but we show an example here.
 \section{Problem Formulation} \label{sec:problem-formulation}
  We consider a well-mixed, fixed-size, and large population of individuals deciding whether to get vaccinated over the time interval $[0,T)$, where $T>0$.
       Once individuals get vaccinated, they remain vaccinated forever.    
   At each time $t$ individuals share an equal 
    dimensionless perceived payoff for vaccination $\pi(t)$, whereas the dimensionless perceived payoff  for remaining unvaccinated $\bar{\pi}(t)$ may vary across the population.
      In terms of decision-making strategies, each individual is either a \emph{(myopic) rationalist} or an \emph{imitator (also known as success-based learner)} \cite{lim2016experimental, van2015focus, aghaeeyan2024revealing}.
     At time $t$, an unvaccinated rationalist is a vaccine seeker  if only if her payoff for remaining unvaccinated $\bar{\pi}(t)$ falls short of the payoff for vaccination $\pi(t)$.
 An unvaccinated imitator, however, decides on vaccination through meeting vaccinated others and comparing her own current payoff  with those of others.
% If her own current payoff falls short of those of vaccinated others,the imitator becomes vaccine seeker with a probability proportional to the difference between these two payoffs.

Individuals with the same decision-making strategies who have the same perceived payoff for remaining unvaccinated over the time $t \in [0,T)$ build up a group.
We assume that there are altogether $\n$ disjoint groups which are labeled by numbers $1,2,\ldots, \n.$
% The groups are ordered in the ascending order of their population proportions. %Azadeh: I guess this does not make sense, because if we want to assume some ordering in payoff values, then we are saying the in second interval, the one with the lowest proportion has, say the second most payoff value.
 The set of indices of rationalist (resp. imitative) groups is denoted by $\mathcal{R}$ (resp. $\mathcal{M}$).
The proportion of vaccinated individuals in group $i$, $i \in \{1,2,\ldots,\n\}$, at time $t$ is denoted by $\gamma_i(t)$, which satisfies $\gamma_i \in [0, \rho_i]$ where $\rho_i$ is the proportion of individuals in group $i$ satisfying $\sum_{i=1}^\n \rho_i = 1$.
  
 The proportion of vaccine seekers in rationalist group $i \in \mathcal{R}$ at time $t$, denoted by $s_i(t)$, equals the total unvaccinated proportion $\rho_i - \gamma_i(t)$  if their excess payoff  $\Delta \pi_i(t) = \pi(t) - \bar{\pi}_i(t)$ is positive, where  $\bar{\pi}_i(t)$ is the perceived payoff of group $i$ for remaining unvaccinated at time $t$, and zero otherwise.
For imitative group $i \in \mathcal{M}$, the proportion of vaccine seekers $s_i(t)$ equals the proportion of unvaccinated individuals  who imitate the vaccinated individuals.
This imitation 
is modeled by the probability of meeting vaccinated individuals,
$\sum_{i=1}^{\n} \gamma_i(t)$,
times the probability of imitation, which is given by the pairwise comparison term $\Delta \pi_i(t)$ if it is positive, and zero otherwise.
%which is equal to the proportion of vaccinated individuals up to time $t$,  $\sum_{i=1}^\n \gamma_i(t)$, the probability of meeting a vaccinated individual, times the excess payoff value $\Delta \pi_i(t)$ if it is positive, and zero otherwise.
Thus, the proportion of vaccine seekers in group $i$ at time $t$ equals
\begin{equation} \label{eqn:vax-seeker}
     s_i(t) = 
    \begin{cases}
     \big(\rho_i - \gamma_i(t)\big) \1\big(\Delta \pi_i(t)\big),\quad &\text{ if } i \in \mathcal{R},\\
    \big(\rho_i - \gamma_i(t)\big) \big(\sum_{i=k}^{\n} \gamma_k(t)\big)  \Delta{\pi}_i(t) \1\big(\Delta \pi_i(t)\big), &\text{ if } i \in \mathcal{M},
   \end{cases}
\end{equation}
where
 $\1(x)$ is one for $x>0$ and zero otherwise.
% The term $\1\big(\Delta \pi_i(t)\big)$ in the lower equation capture in the upper equation in \eqref{eqn:vax-seeker} models the decision-making strategy of rationalist group $i$, that is all unvaccianted individuals are vaccine seekers if only if the payoff for getting vaccinated is larger than that of remaining vaccinated.
% The term 
% $\sum_{i=1}^{\n} \gamma_i(t)\Delta\pi_i(t)\1\big(\Delta\pi_i(t)\big)$ 
%  captures the imitation behavior, i.e.,
% the unvaccinated portion of imitative group $i$ becomes a vaccine seeker only if they meet vaccinated individuals, whose proportion equals $\sum_{i=1}^{\n} \gamma_i(t)$, and 
% the excess payoff for getting vaccinated is positive, i.e., $\1\big(\Delta\pi_i(t)\big)>0$, if so, then the probability of imitation depends on the value of the excess payoff $\Delta \pi_i(t)$.
We drop the word ``excess.''

Whether vaccine seekers can receive a dose of vaccine depends on the vaccine supply denoted by $u(t)$.
If demand falls short of supply, i.e., $\sum_{j=1}^\n s_j(t) < u(t)$, all vaccine seekers can get vaccinated with the maximum  rate of vaccination denoted by $\kappa$.
If the demand exceeds the supply, we assume that the available supply is distributed uniformly and randomly among the vaccine seekers and, in turn, the rate of vaccination reduces to $\kappa \frac{u(t)}{\sum_{j=1}^\n s_j(t)}$.
The  proportion of vaccinated individuals in group $i$, $\gamma_i$, will change with the following rate:
\begin{align} \label{eqn:mean-dynamics}
    \dot{\gamma}_i(t) =  \kappa s_i(t) \min \{1, \frac{u(t)}{\sum_{j=1}^\n s_j(t)} \}.
 \end{align}
 Equivalently,  if group $i$ is a rationalist, we have
  \begin{align}
   \label{eq:dot-adopters-br}
    &\dot{\gamma}_i(t) = \kappa \big(\rho_i -\gamma_i(t)\big)\1\big(\Delta  \pi_i(t_l)\big)
      \\
 &  \times\! \min \{1, \frac{u(t)}{\sum\limits_{\!j \in \mathcal{M}}\! \big(\rho_j -\gamma_j(t)\big)\big(\!\sum\limits_{k=1}^{\n}\! \gamma_k(t)\big)\Delta \pi_j(t)\1\big(\Delta \pi_j(t)\big) \!+ \!\sum\limits_{\!j \in \mathcal{R}}\!\big(\rho_j -\gamma_j(t)\big) \1\big(\Delta \pi_j(t)\big)}  \} \nonumber
 \end{align}
 and  if group $i$ is an imitator, we have
%\begin{equation}
\begin{align} \label{eq:dot-adopters-imitator}
    &\dot{\gamma}_i(t)  = \kappa \big(\rho_i -\gamma_i(t)\big) \big(\sum_{k=1}^{\n} \!\gamma_k(t)\big)\Delta \pi_i(t)\1\big(\Delta \pi_i(t)\big)
   \\
 & \times \min \{1, \frac{u(t)}{\sum\limits_{\!j \in \mathcal{M}} \!\big(\rho_j -\gamma_j(t)\big)\big(\sum\limits_{k=1}^{\n} \gamma_k(t)\big)\Delta \pi_j(t)\1\big(\Delta \pi_j(t)\big) \!+ \!\sum\limits_{\!j \in \mathcal{R}}\!\big(\rho_j -\gamma_j(t)\big) \1\big(\Delta \pi_j(t)\big)}  \}. \nonumber
 \end{align}
 %\end{equation}
 \begin{remark}
Individuals' perceived payoffs may  change in response to some impactful events.
In the COVID-19 vaccine campaign,
  events such as the announcement of vaccine certificates \cite{oliu2022effect} or reports of severe vaccine-related side effects \cite{zhang2023impact} may have impacted the individuals' payoffs.
% It has been reported that the COVID-19 pandemic had a ``spillover effect'' on perceptions of other immunizations \cite{lacour2024attitudes}.
%Regarding HPV vaccines,
 It is well-known that clinician recommendations improve attitudes toward HPV vaccination \cite{vollrath2018meaningful}.
%personal experiences, such as a loved one being diagnosed with an HPV vaccine-preventable cancer, can lead to a sudden shift in attitudes toward HPV vaccination \cite{greyson2022mothers}.
For flu vaccination, public immunization programs increased vaccine uptake in Ontario, Canada \cite{flushot}, which could serve as a  proxy for changes in individuals' perceived payoffs.
A piecewise constant function seems to effectively approximate the  changes in payoff values in response to such events and interventions, the timing of which can often be identified through gray literature.
Accordingly, we assume that these time instants are known \emph{a priori}.
\end{remark}
% We assume that the payoff functions are \emph{piece-wise constant} in the following sense.
\begin{assumption}\label{ass:subintervals-vaccine-surplus-period}
There is a \emph{a priori} known set $\mathcal{T}$ which partitions $[0,T)$ by $\s$ disjoint non-empty subintervals $[T_{j-1}, T_{j})$, $j = 1,2,\ldots, \s,$ where  $T_0 = 0$, $T_{\s} = T$,  $T_{j-1} < T_j$, and 
   for every $[T_{j-1},T_j) \in \mathcal{T}$, the  payoff  values of all $\n$ groups are constant.
 \end{assumption}
\begin{remark} \label{remark:labeling}
    The labeling of the groups could be based on some factors such as gender, socioeconomic status, demographic profiles, affiliations with organizations, etc.
    As will be shown later, this labeling allows us to leverage relevant information about the groups in each subinterval to rank each group's payoff.
\end{remark}
 Motivated by COVID-19 vaccine distribution, where there was  shortage in supply for the first few months of vaccine roll-out, we make the following assumption.
\begin{assumption} \label{ass:two-subintervals-shortage-surplus}
There exists a time instant $t^* \in [0,T_1)$  such that for all 
$t$ in the interval $[0,t^*)$ 
the demand exceeds the vaccine supply, which is integrable.
In addition, the demand falls short of the vaccine supply for all $t\geq t^*$. 
% And the supply is not identically zero over th
%but not for $t$ equal to or greater than $t^*$.
\end{assumption}
 In view of \Cref{ass:two-subintervals-shortage-surplus}, \emph{(i)} the time interval $[0,T)$ is split into two parts: 
 period of limited vaccine supply $[0,t^*)$ and vaccine surplus period $[t^*,T)$ and  \emph{(ii)} the payoff values of the groups remain constant during the period of limited vaccine supply and for some time afterward.
 We  refer to the time instant $t^*$ as ``vaccine abundance time.''
% \begin{remark}
%   For vaccination programs where sufficient vaccine doses are available from the start, or 
%   vaccine supply is identically zero in $[0,t^*)$, as will be discussed later in \Cref{rem:identifiability-of-kappa}, 
%    the results of this paper remain valid.
% \end{remark}
\begin{remark}
There might be some 
   groups whose payoff values remain non-positive over the entire time interval $[0,T)$, disregard of their decision-making strategies. 
   We refer to these groups as \emph{vaccine refusing} groups.
A \emph{non-vaccine refusing} group is then a group with a positive payoff value during at least one subinterval.
\end{remark}
\begin{example} \label{example:first}
 Consider a  population of individuals stratified into two age-based groups: seniors and non-seniors.
A 40-day vaccine campaign begins on day zero, during which all individuals have the opportunity to receive a vaccine dose, with sufficient supply available from the start.
Regarding vaccination behavior, each age-based group is  stratified into three subgroups: vaccine refusers, imitators, and rationalists.
Group 1 (resp. Group 2) consists of rationalist seniors (resp. non-seniors), group 3 (resp. group 4) of imitative seniors (resp. non-seniors), and group 5 (resp. group 6) of vaccine-refusing seniors (resp. non-seniors).
The dynamics read as 
\begin{equation}
    \begin{aligned}
        \dot{\gamma}_i(t) &= \kappa \big(\rho_i - \gamma_i(t)\big) \1\big(\Delta \pi_i(t)\big), \quad i =1,2, \\
        \dot{\gamma}_i(t) &= \kappa \big(\rho_i - \gamma_i(t)\big) \big(\sum_{k=1}^6 \gamma_k(t) \big) \Delta \pi_i(t)\1\big(\Delta \pi_i(t)\big), \quad i =3,4, \\
        \dot{\gamma}_i(t) &= 0, \quad i=5,6,
    \end{aligned}
\end{equation}
where $\gamma_i(0) = 0$, for $i = 1,2,\ldots,6.$
Individuals' perceived excess payoffs for vaccination remain constant until day $20$ at which public health   offers some incentive for vaccination.
Neither the proportions of the groups nor their payoff values are known, and
the available temporal data  is the cumulative proportion of vaccinated individuals, \Cref{fig:example-introduction}.
Revealing the proportions of groups and the evolution of the payoff values over time helps to better explain the overall vaccination progress. 
\end{example}
\begin{figure} 
    \centering
    \includegraphics[width=0.95\linewidth]{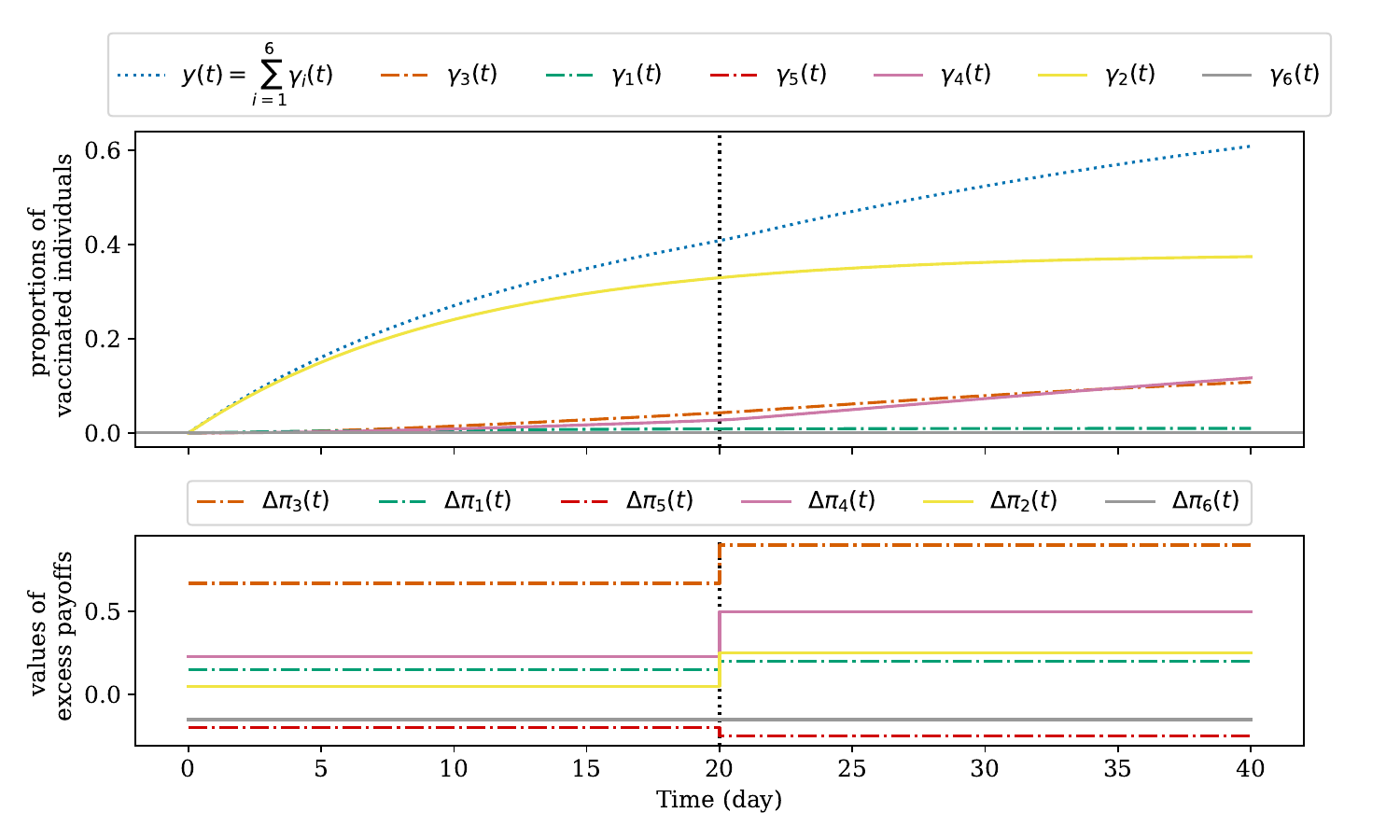}
    \caption{Vaccination progress in \Cref{example:first}. 
    Upper panel: the available data is the cumulative  proportion of vaccinated individuals $y(t)$ (dotted line).
    The proportions of vaccinated rationalist seniors, imitative seniors, and vaccine-refusing seniors are denoted by $\gamma_1(t)$, $\gamma_3(t)$, and $\gamma_5(t)$, respectively (dash-dotted lines).
    For non-seniors, the corresponding proportions are denoted by $\gamma_2(t)$, $\gamma_4(t)$, and $\gamma_6(t)$, respectively (solid lines).
    Lower panel:
    the perceived  payoff for each group is denoted using the same subscript as that of the ratio of vaccinated individuals.
    The payoffs change at day $20$ in response to the announcement of new incentives.
    }\label{fig:example-introduction}
    \end{figure}
Motivated by the above example, we are interested in the \emph{global} identifiability of the parameters appearing in  \eqref{eq:dot-adopters-br} and
\eqref{eq:dot-adopters-imitator}, for $i = 1,2,\ldots,\n$, given the opportunity to record the cumulative  proportion of vaccinated individuals over time.
In the following, we provide the definition of \textit{global} identifiability in the sense of \cite{audoly2001global}.
\begin{definition}[Identifiability] \label{def:identifiability}
%AZADEH: in this ref, f and g are polynomial or rational, and f is not function of t
    Consider the system
\begin{equation} \label{eqn:complete-system}
    \Sigma^{\bm \theta} = 
    \left\{
    \begin{aligned}
        & \dot{\bm \gamma}(t) = \bm f\big({t},\bm \gamma(t), \bm u(t), \bm \theta \big), \\
        & \bm \gamma(0) = \bm \gamma_0(\bm \theta),\\
        & \bm y(t, \bm \theta) = \bm g \big(\bm \gamma(t), \bm \theta \big),
        & \bm d(\bm \theta) = \bm 0,
    \end{aligned}
    \right.
\end{equation}
    where $\bm \gamma$ is the $\mathtt{x}$-dimensional state variable vector, 
    $\bm \theta \in \bm \Theta $ is the ${\theta}$-dimensional parameter vector,
    $\bm u$ is the 
    $\mathtt{u}$-dimensional  input vector, $\bm y$ is the $\mathtt{y}$-dimensional output vector,  $\bm \Theta$ is the parameter space,
     $\bm f: \mathbb{R} \times \mathbb{R}^{\mathtt{x}} \times \mathbb{R}^{\mathtt{u}} \times  \bm \Theta  \to \mathbb{R}^{\mathtt{x}}$, $\bm g:  \mathbb{R}^{\mathtt{x}} \times  \bm \Theta  \to \mathbb{R}^{\mathtt{y}}$, and $\bm d$ is the $\mathtt{d}$-dimensional vector of equality constraints.
    The parameter $\theta_i$, $i \in \{1,\ldots, \mathtt{\theta} \}$, is  \emph{globally identifiable}
    if and only if for almost any $\bm \theta' \in \bm \Theta$, we have
    \begin{equation} \label{eqn:identifiability}
      \big(\forall t\geq 0\ {y}(t, \bm \theta) = {y}(t, \bm \theta') \big) \implies \theta_i = \theta'_i.
      \end{equation}
\end{definition}
\begin{remark}
   In reference \cite{audoly2001global}, an equivalent definition is provided for \emph{a priori} global identifiability, where the term ``a priori'' emphasizes the  assumptions of an error-free model and perfect and noise-free data.
   Here, we also consider these  assumptions without using the term ``a priori.''
   Similar definitions are also given for \emph{structural} identifiability, where the term ``structural'' indicates that the identifiability results hold for almost any parameter vector in the parameter space \cite{walter1981unidentifiable}. 
   \Cref{def:identifiability} also captures this characteristic.
  However, as definitions of structural identifiability often disregard initial conditions \cite{eisenberg2013identifiability}, we avoid using the term.
\end{remark}
In the context of vaccine uptake 
dynamics governed by equations \eqref{eq:dot-adopters-br} and  \eqref{eq:dot-adopters-imitator}, for $i = 1,2,\ldots, \n,$ the values of $\mathtt{x}$, $\mathtt{y}$, and $\mathtt{u}$ are respectively $\n$, $1$, and $1$. 
 \Cref{ass:subintervals-vaccine-surplus-period} imposes a piece-wise constant structure on the  payoff function of each group $i$, and accordingly the identifiability of the  payoffs boils down to that of the  vectors $\bm \Delta \bm \pi_i$, $i =1,\ldots, \n$, where
  $\bm \Delta  \bm \pi_i = \big(\Delta \pi_i(T_0),\ldots,\Delta \pi_{i}(T_{\s-1}) \big)$.
Parameter $\kappa$ and the proportions $\rho_i$, $i = 1,2,\ldots, \n$, are also unknown. 
% The equality constraint $\sum_{i=1}^\n \rho_i =1$ reduces the number of unknown parameters by one.
% Thus, $\mathtt{\theta}$ equals $(\s+1)\n$.
\begin{definition}[System]
    We define the following dynamical system
    \begin{equation} \label{eqn:complete-system1}
        \Sigma^{\bm \theta} = 
            \left\{
            \begin{aligned}
                & \dot{\bm \gamma}(t) = \bm f(t,\bm \gamma(t), u(t), \bm \theta), \quad \bm \gamma(0) = \bm 0, \\
                & f_i(t,\bm \gamma(t), u(t), \bm \theta) = 
                \begin{cases}
             \eqref{eq:dot-adopters-br},       & i \in \mathcal{R}, \\
            \eqref{eq:dot-adopters-imitator}, &i \in \mathcal{M},  
                \end{cases}
                \\
                &{y}(t, \bm \theta)= \sum_{i=1}^\n \gamma_i(t), \\
                &  \sum_{i=1}^\n \rho_i = 1,
    \end{aligned}
    \right.
    \end{equation}
    where 
        $f_i$ is the $i^{\text{th}}$ component of $\bm f,$
      $\bm \theta = (\kappa,  \rho_1,\ldots, \rho_{\n}, \bm \Delta  \bm \pi_1, \bm \Delta \bm \pi_2, \ldots, \bm \Delta  \bm \pi_{\n})$ 
      and 
      $\bm \Delta  \bm \pi_i = \big(\Delta \pi_i(T_0),\ldots,\Delta \pi_{i}(T_{\s-1}) \big)$, for $i \in \{1,\ldots, \n\}$.
      %is of dimension ${(\s+1)\n}+1$.
     %  and $\theta_p$ denote the $p$-th element of $\bm \theta$. 
\end{definition}
In the next section, we present our results on the identifiability of the vector of parameters $\bm \theta$. 
From now on, we will omit the dependency of the output on the parameters and use the notation $y(t)$.
% Before that, we make the following technical assumptions:
%  \begin{assumption} \label{ass:signs-of-payoffs-known}
%      The sign of the excess payoff value of group $i$, $i \in \{1,\ldots,\n\}$, is known \emph{a priori} for each interval $[T_j, T_{j+1}) \in \mathcal{T}$, for $j \in \{0,1,\ldots,\s-1\}$.
%  \end{assumption}
%  Based on \Cref{ass:signs-of-payoffs-known}, for each time interval in $\mathcal{T}$, the groups that are getting vaccinated and those that are not are known.
%  \begin{assumption} \label{ass:payoff-imitators-distinct}
%     During each subinterval $(T_l, T_{l+1})$, $l \in \{0,\ldots,\s-1\}$,
% if the imitative groups have positive payoff values, they are distinct; otherwise,  the groups sharing common values are known.
% \end{assumption}
% \begin{assumption} \label{ass:v-smooth-function}
%          The input $u(t)$ is a smooth  function of time  
% during the vaccine shortage $t \in [0,t^*)$.
% \end{assumption}
% The following assumption ensures the locally Lipschitzness of the dynamics during the vaccine shortage.
% \begin{assumption} \label{ass:positive-payoff-for-one-br}
%        At least one rationalist group has positive payoff value during the time interval $t \in [0,T_1)$.
% \end{assumption}
\section{Results}
\label{sec:main}
% In the right-hand sides of \eqref{eq:dot-adopters-imitator} and \eqref{eq:dot-adopters-br}, the proportion of individuals in group $i$, $\rho_i$, and the proportion of vaccinated individuals in group $i$, $\gamma_i$, appear as a difference.
Let $x_i(t)$ denote  
  the proportion of unvaccinated individuals in group $i$ at time $t$, i.e.,
  %i.e., $\rho_i - \gamma_i(t)$
 % by $x_i(t)$., i.e.,
 %$x_i(t)$, the proportion of unvaccinated individuals in group $i$ 
 %at time $t$, i.e.,
% \begin{align} 
% \label{eq:gamma}
    $x_i(t)  = \rho_i - \gamma_i(t),$
% \end{align}
the output $y(t)$ can then be rewritten as follows:
 \begin{equation}\label{eq:y}
    y(t) = 1 -\sum_{i=1}^{\n} x_i(t).
\end{equation}
In view of \cref{ass:subintervals-vaccine-surplus-period} and \cref{ass:two-subintervals-shortage-surplus},
 the rate of change of the proportion of unvaccinated individuals in group $i$ 
 over $t\in [T_l,T_{l+1}) \in \mathcal{T}$, for $l \in \{0,1,\ldots, \s-1\}$, is governed by 
 \begin{flalign}\label{eq:dotGamma}
    \dot{x}_i(t) = &&
 \end{flalign}
 \vspace{-20pt}
 \begin{equation*} 
 \scalebox{0.82}{$
  \begin{cases}
         \!-\kappa\! x_i(t)\!\1\big(\!\Delta \pi_i(T_l)\big)\! \min \{1, \frac{u(t)}{\!\displaystyle\sum_{i \in \mathcal{M}} x_i(t)y(t)\Delta \pi_i(T_l)\1\big(\Delta \pi_i(T_l)\big)\! + \!\displaystyle\sum_{i \in \mathcal{R}}x_i(t) \1\big(\Delta \pi_i(T_l)\big)}  \},
        & i \in \mathcal{R}, \\
         \!-\kappa \!x_i(t)\! y(t)\Delta \pi_i(T_l)\1\big(\Delta \pi_i(T_l)\big)\! \min \{\!1,\! \frac{u(t)}{\!\displaystyle\sum_{i \in \mathcal{M}} x_i(t)y(t)\Delta \pi_i(T_l)\1\big(\Delta \pi_i(T_l)\big)\! + \!\displaystyle\sum_{i \in \mathcal{R}}x_i(t) \1\big(\Delta \pi_i(T_l)\big)}  \}, & i \in \mathcal{M},
     \end{cases}
     $}
 \end{equation*}
 %  \begin{align}
 %   \label{eq:dotGamma-br}
 %    \dot{x}_i(t) \!=\! -\kappa x_i(t)\1\big(\!\Delta \pi_i(T_l)\big)
 %       \min \{1, \frac{u(t)}{\!\displaystyle\sum_{i \in \mathcal{M}} x_i(t)y(t)\Delta \pi_i(T_l)\1\big(\Delta \pi_i(T_l)\big)\! + \!\displaystyle\sum_{i \in \mathcal{R}}x_i(t) \1\big(\Delta \pi_i(T_l)\big)}  \}, 
 % \end{align}
 %  if group $i$ is a rationalist, and
 %  \begin{align} \label{eq:dotGamma-imitator}
 %     \dot{x}_i(t) &= -\kappa x_i(t) y(t)\Delta \pi_i(T_l)\1\big(\Delta \pi_i(T_l)\big)
 %   \\
 % & \quad \times \min \{1, \frac{u(t)}{\sum_{i \in \mathcal{M}} x_i(t)y(t)\Delta \pi_i(T_l)\1\big(\Delta \pi_i(T_l)\big) + \sum_{i \in \mathcal{R}}x_i(t) \1\big(\Delta \pi_i(T_l)\big)}  \}, \nonumber
 % \end{align}
 % if group $i$ is an imitator,
 %  \begin{align}
 %   \label{eq:dotGamma-br}
 %    \dot{x}_i(t) \!=\! -\kappa x_i(t)\1\big(\!\Delta \pi_i(T_l)\big)
 %       \min \{1, \frac{u(t)}{\!\displaystyle\sum_{i \in \mathcal{M}} x_i(t)y(t)\Delta \pi_i(T_l)\1\big(\Delta \pi_i(T_l)\big)\! + \!\displaystyle\sum_{i \in \mathcal{R}}x_i(t) \1\big(\Delta \pi_i(T_l)\big)}  \}, 
 % \end{align}
 %  if group $i$ is a rationalist, and
 %  \begin{align} \label{eq:dotGamma-imitator}
 %     \dot{x}_i(t) &= -\kappa x_i(t) y(t)\Delta \pi_i(T_l)\1\big(\Delta \pi_i(T_l)\big)
 %   \\
 % & \quad \times \min \{1, \frac{u(t)}{\sum_{i \in \mathcal{M}} x_i(t)y(t)\Delta \pi_i(T_l)\1\big(\Delta \pi_i(T_l)\big) + \sum_{i \in \mathcal{R}}x_i(t) \1\big(\Delta \pi_i(T_l)\big)}  \}, \nonumber
 % \end{align}
 and $x_i(0) = \rho_i$ for $i \in \{1,\ldots,\n\}$.
 It can be shown that the systems governed by \eqref{eq:y}, \eqref{eq:dotGamma} is \emph{input-output-equivalent} to the system defined in \eqref{eqn:complete-system1}, that is, both systems generate the same output trajectory given the same parameter vector and input function.
 Input-output equivalent systems have the same identifiable and unidentifiable parameters \cite{eisenberg2013input}.
% For the sake of readability, the word ``dimensionless'' is dropped from the term ``dimensionless excess payoff''.
Hence,  we investigate the identifiability of the vector of parameters $\bm \theta$ defined in \eqref{eqn:complete-system} through analyzing the system governed by \eqref{eq:y} and \eqref{eq:dotGamma}.

We proceed with analyzing the identifiability of parameter $\kappa$ and the payoff values of the groups.
 We then investigate the identifiability of the proportions of the groups.
% The rationale behind the adopted order of the parameters is that the identifiability analysis of the former parameter aids in that of the latter.
%For the sake of readability, t
The proof of the results are provided in \Cref{sec:proof}.
% \subsection{Identifiability of the maximum rate of vaccination} \label{subsec:identifiability-k}
% The proof of the identifiability of the maximum rate of vaccination $\kappa$ is rather straightforward compared to other parameters.
\begin{proposition} \label{lem:identifiability-kappa}
  The maximum rate of vaccination $\kappa$ is globally identifiable if  \Cref{ass:subintervals-vaccine-surplus-period} and \Cref{ass:two-subintervals-shortage-surplus} hold and either there exists at least one non-vaccine refusing rationalist group or the vaccine abundance time $t^*$ is greater than zero and the vaccine supply $u(t)$ is not identically zero.
\end{proposition}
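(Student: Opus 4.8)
The plan is to leverage the two-regime structure of \Cref{ass:two-subintervals-shortage-surplus} and to treat the two sufficient conditions by separate arguments, both anchored in a single structural identity obtained by summing the reduced dynamics \eqref{eq:dotGamma}. First I would establish that identity. During the surplus period the saturation factor $\min\{1,\cdot\}$ in \eqref{eq:dotGamma} equals $1$, whereas during the shortage period $[0,t^*)$ it equals $u(t)$ divided by the common denominator (the total demand) $\sum_{i\in\mathcal{M}} x_i(t) y(t)\,\Delta\pi_i(T_l)\1(\Delta\pi_i(T_l)) + \sum_{i\in\mathcal{R}} x_i(t)\1(\Delta\pi_i(T_l))$. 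Summing \eqref{eq:dotGamma} over $i=1,\dots,\n$ and invoking \eqref{eq:y}, the numerators of the $\dot x_i$ reassemble, on $[0,t^*)$, into exactly this denominator, which then cancels. This yields the clean relation $\dot y(t)=\kappa\,u(t)$ for every $t\in[0,t^*)$, valid regardless of which groups are active.

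For the branch $t^*>0$ with $u\not\equiv0$, I would integrate this relation over $[0,t^*)$ using $y(0)=0$, obtaining $y(t^*)=\kappa\int_0^{t^*}u(\tau)\,d\tau$. Since $u$ is nonnegative, integrable, and not identically zero on the shortage window, the integral is strictly positive, so $\kappa=y(t^*)\big/\!\int_0^{t^*}u(\tau)\,d\tau$ is a function of the observed output and the known input alone. Consequently any two parameter vectors yielding the same output must share the same $\kappa$, which is exactly global identifiability in the sense of \Cref{def:identifiability}.

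For the branch with a non-vaccine-refusing rationalist group $g$, I would instead argue in the surplus regime, where the rationalist equation collapses to the linear mode $\dot x_g=-\kappa x_g$, i.e. $x_g(t)=x_g(T_l)e^{-\kappa(t-T_l)}$ on a subinterval with $\Delta\pi_g(T_l)>0$. The idea is to read $\kappa$ off the exponential relaxation that the rationalist states impart to $y=1-\sum_i x_i$: at the onset of surplus, where $y$ vanishes in this branch (a shortage with zero supply leaves $y\equiv0$), every imitator rate carries the factor $y$ and hence drops out of $\dot y$, so the active rationalists alone drive the output and pin down $\kappa$.

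The hard part will be this last branch. In the surplus regime the demand no longer cancels, so $\kappa$ is entangled with the unknown proportions $\rho_i$ and with the nonlinear, $y$-coupled imitator terms; evaluating finitely many derivatives of $y$ at a single instant does not isolate it, since each successive derivative introduces a new unknown moment of the form $\sum_i \rho_i\,\Delta\pi_i^{\,k}$. I expect to resolve this by exploiting that the rationalist states depend on $t$ genuinely through $e^{-\kappa t}$, while each imitator state depends on $t$ only through $\int y$, making the rationalist mode structurally distinguishable; formally, this is what the generalized input-output equation on the subinterval encodes, and $\kappa$ is recovered from its identifiable coefficients.
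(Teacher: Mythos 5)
Your first branch ($t^*>0$, $u\not\equiv 0$) is correct and is essentially the paper's argument: summing \eqref{eq:dotGamma} over the shortage window collapses the demand denominator and gives $\dot{y}(t)=\kappa u(t)$ on $[0,t^*)$; the paper then reads off $\kappa=\dot{y}(t)/u(t)$ pointwise where $u\neq 0$, while you integrate, which is a harmless (and arguably cleaner) variant, modulo the implicit use of $u\geq 0$ to ensure $\int_0^{t^*}u>0$.

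The rationalist branch, however, has a genuine gap. Your onset argument fails on its own terms: at $t=t^*$ (where $y(t^*)=0$) one gets $\dot{y}(t^*)=\kappa x'(t^*)$, and $x'(t^*)$ is an unknown combination of the $\rho_i$, so nothing is pinned down; each further derivative at $t^*$ introduces a new unknown moment $\sum_i x_i(t^*)\Delta\pi_i^k$, as you yourself note. The problem is that your proposed fix --- ``$\kappa$ is recovered from the identifiable coefficients of the generalized input-output equation'' --- is a statement of intent, not a proof: the generic coefficients of that equation are products such as $\kappa^{n-l}\sigma_{n-l}$ and $\kappa^n\sigma_n(1-\bar{x})$, which confound $\kappa$ with the unknown elementary symmetric functions $\sigma_j$ of the payoffs and with $\bar{x}$, so one must exhibit a \emph{specific} coefficient in which the payoff part is a priori known. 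You also cannot shortcut this by citing \Cref{lem:case2}, since the paper's payoff-identifiability results themselves presuppose that $\kappa$ has already been identified (the proof of \Cref{lem:case1} explicitly says ``assuming the identifiability of $\kappa$''), so that route is circular. The paper closes this branch with two concrete sub-cases you do not carry out: if on the relevant subinterval some rationalists but no imitators have positive payoffs, the input-output equation degenerates to $\dot{y}(t)-\kappa\big(1-y(t)-\bar{x}\big)=0$, where $\kappa$ appears alone as the coefficient of the monomial $y$ and is identifiable by \Cref{thm:Eisenberg}; if imitators are simultaneously active, the paper isolates in the normalized higher-order input-output equation a monomial (of the form $y^{(n+1)}y^{m}\dot{y}\dot{y}/y^{n+1}$ after clearing denominators) whose coefficient is $a_n\sigma_0/\kappa$ with $a_n$ a known constant and $\sigma_0=1$, from which $\kappa$ follows. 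Your intuition that the rationalist mode enters through $e^{-\kappa t}$ while imitators enter through $\int y$ is the right heuristic, but turning it into a proof requires exactly this coefficient bookkeeping, which is missing.
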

 \subsection{Identifiability analysis of the payoff functions} \label{subsec:identifiability-payoff-steady-state}
  In what follows, we investigate the identifiability of the 
  payoff values of the groups in each subinterval, separately.
  % In view of \eqref{eq:dotGamma}, the non-positive values of the  payoff of both rationalist and imitative groups appear through the indicator functions, which yield zero.
  % Thus, the unidentifiability of the non-positive components of the vectors $\bm \Delta \bm \pi_i$, for $i \in \{1,\ldots, \n\},$ is immediate.
  
% Without loss of generality, 
% We now focus on  which,
 %\footnote{Throughout \Cref{subsec:identifiability-payoff-steady-state} we consider the subinterval $(t^*,T_1) \in [T_0,T_1)$ if $l=0$.}, 
 Depending on the signs of the payoff values, a subinterval $[T_{l},T_{l+1}) \in \mathcal{T}$ falls into one of the following  distinct categories:
\emph{(i)} $\Delta \pi_i(T_{l}) \leq 0$ for all $i \in \{1,\ldots,\n\}$, 
\emph{(ii)} $\Delta \pi_i(T_{l}) \leq 0$ for all $i \in \mathcal{M}$ and $\Delta \pi_i(T_{l}) > 0$  for some $i \in \mathcal{R}$,
\emph{(iii)} $\Delta \pi_i(T_{l}) \leq 0$ for all $i \in \mathcal{R}$ and $\Delta \pi_i(T_{l}) > 0$  for some $i \in \mathcal{M}$, and
\emph{(iv)} none of the previous cases.
%The first case implies that the rate of change of the proportion of unvaccinated individuals in each  group is zero, $\dot{\bm x}(t) = \bm 0$ for $t \in [T_l, T_{l+1})$, and, in turn, the output $y$ remains constant. 
 %No additional data is then obtained from observing the output for $t \in [T_{l},T_{l+1})$. 
 
For the first case, we have the following results: 
%\Cref{lem:unidentifiability-of-negative-payoffs} implies that the $(l+1)^\text{th}$ component of all vectors $\bm \Delta \bm \pi_i$, $\Delta \pi_i(T_l)$, are unidentifiable.
 \begin{lemma} \label{lem:unidentifiability-of-negative-payoffs}
      The non-positive components of the vector of  payoff values $\bm \Delta \bm \pi_i$, for $i \in \{1,\ldots,\n\}$, are unidentifiable.
  \end{lemma}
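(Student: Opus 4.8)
The plan is to prove unidentifiability by constructing, for any targeted non-positive component, a whole family of parameter vectors that are indistinguishable from the output, thereby negating the one-to-one requirement of \Cref{def:identifiability}. Concretely, I fix a group $i$ and a subinterval index $l$ with $\Delta \pi_i(T_l) \le 0$ and show that replacing this single scalar by \emph{any} other non-positive value leaves the entire output trajectory $y(\cdot)$ unchanged. The driving mechanism is the indicator: since $\1(x)=1$ only for $x>0$, a non-positive payoff satisfies $\1\big(\Delta \pi_i(T_l)\big)=0$, so on $[T_l,T_{l+1})$ group $i$ neither vaccinates nor contributes to the aggregate demand, and the precise magnitude of $\Delta \pi_i(T_l)$ becomes invisible to the output.

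First I would inspect every place the scalar $\Delta \pi_i(T_l)$ enters the right-hand side of \eqref{eq:dotGamma} on $[T_l,T_{l+1})$: it appears in the $i$-th equation itself and in the common denominator $\sum_{j \in \mathcal{M}} x_j\, y\, \Delta \pi_j(T_l)\,\1(\Delta \pi_j(T_l)) + \sum_{j \in \mathcal{R}} x_j\,\1(\Delta \pi_j(T_l))$ that all components share through the $\min$ term. In both the rationalist and the imitator forms, every occurrence of $\Delta \pi_i(T_l)$ is multiplied by $\1\big(\Delta \pi_i(T_l)\big)$, which vanishes for all non-positive values. Hence the vector field defining \eqref{eq:dotGamma}, viewed as a function of $t$ and the state, does not depend on the value of $\Delta \pi_i(T_l)$ so long as it stays $\le 0$; on the remaining subintervals $[T_m,T_{m+1})$, $m\neq l$, the parameter does not appear at all. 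In particular the demand is unaffected, so the regime switch at the vaccine-abundance time $t^*$ of \Cref{ass:two-subintervals-shortage-surplus} occurs at the same instant.

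I would then set $\bm\theta'$ equal to $\bm\theta$ in every entry except $\Delta \pi_i(T_l)$, replaced by an arbitrary $c \le 0$ with $c \neq \Delta \pi_i(T_l)$. By the previous step the two systems share the same vector field as a function of $(t,\text{state})$ over all of $[0,T)$ and the same initial condition $x_j(0)=\rho_j$ (the proportions are untouched); uniqueness of solutions of \eqref{eq:dotGamma} then gives $x_j(t;\bm\theta)=x_j(t;\bm\theta')$ for every $j$ and $t$, whence $y(t;\bm\theta)=y(t;\bm\theta')$ for all $t$ by \eqref{eq:y}. As $c$ ranges over a continuum of non-positive values this yields a positive-measure set of $\bm\theta'$ with identical output but differing component, so the implication \eqref{eqn:identifiability} fails and $\Delta \pi_i(T_l)$ is unidentifiable; applying this to each non-positive entry proves the lemma (and specializes to case \emph{(i)}, where every entry of the subinterval is non-positive).

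The main obstacle I anticipate is not the algebra but the well-posedness bookkeeping: the right-hand side of \eqref{eq:dotGamma} is only piecewise smooth, because of the indicators and of the $\min$ that switches at $t^*$ while $u$ is merely integrable. I would handle this by observing that, since the fields of $\bm\theta$ and $\bm\theta'$ are literally the same function of $(t,\text{state})$, both systems switch on the same pieces and are governed by the same rational field on each, so a standard Carath\'eodory existence-and-uniqueness argument applied piece by piece transfers the identity of trajectories across the switch. A secondary care point is phrasing: the construction delivers a one-parameter family of indistinguishable $\bm\theta'$, which is exactly what is required to negate the ``almost any'' one-to-one condition of \Cref{def:identifiability}.
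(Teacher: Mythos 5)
Your proof is correct and takes essentially the same approach as the paper's: both rest on the observation that every occurrence of a non-positive payoff is multiplied by the indicator $\1(\cdot)$, so replacing it with any other non-positive value leaves the vector field, and hence the output trajectory, unchanged, contradicting identifiability. If anything, your write-up is slightly more thorough than the paper's, which only tracks $\dot{x}_i$ on the affected subinterval and leaves implicit both that the perturbed payoff also vanishes from the shared demand denominator entering all groups' equations and that uniqueness of solutions is what transfers this invariance to the output.
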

  % In \Cref{example:first}, the payoff values of both vaccine-refusing groups are unidentifiable.
  
In the second case, \Cref{lem:unidentifiability-of-negative-payoffs} yields the unidentifiability of the  payoff values of the imitators and those rationalists with non-positive  payoff values.
As for positive  payoff values of rationalists, we have the following proposition:
 \begin{proposition} \label{prop:unidentifiability-of-rationalist-payoff}
     Let $[T_{l}, T_{l+1}) \in \mathcal{T}$ be a time interval   
    during which  $\Delta \pi_i(T_{l}) \leq 0$ for all $i \in \mathcal{M}$ and $\Delta \pi_i(T_{l}) > 0$  for some $i \in \mathcal{R}$.
    Then the $(l +1)^{\text{th}}$ component of $\bm \Delta \bm \pi_i$ for all $i\in\{1,2,\ldots,\n\}$ 
    is unidentifiable.
 \end{proposition}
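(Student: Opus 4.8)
The plan is to split the $\n$ groups into two classes and dispatch each separately. For every group whose subinterval-$l$ payoff is non-positive—this includes all imitators, since $\Delta\pi_i(T_l)\le 0$ for every $i\in\mathcal{M}$ in this case, together with every rationalist $i\in\mathcal{R}$ having $\Delta\pi_i(T_l)\le 0$—the $(l+1)^{\text{th}}$ component of $\bm\Delta\bm\pi_i$ is non-positive, so its unidentifiability is precisely the content of \Cref{lem:unidentifiability-of-negative-payoffs}. The only genuinely new work concerns the rationalists with $\Delta\pi_i(T_l)>0$, and for those I would exhibit an explicit output-preserving perturbation of the payoff \emph{magnitude}.

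The key structural observation is that on $[T_l,T_{l+1})$ the magnitudes of the active rationalist payoffs never enter the dynamics. Since $\1\big(\Delta\pi_i(T_l)\big)=0$ for every imitator, the imitator sum in the saturation denominator of \eqref{eq:dotGamma} vanishes identically and $\dot x_i\equiv 0$ for all $i\in\mathcal{M}$. The denominator therefore reduces to $\sum_{i\in\mathcal{R}}x_i(t)\1\big(\Delta\pi_i(T_l)\big)$, and the rationalist equations become
\begin{equation*}
  \dot x_i(t)=-\kappa\, x_i(t)\,\1\big(\Delta\pi_i(T_l)\big)\,
  \min\Big\{1,\ \tfrac{u(t)}{\sum_{j\in\mathcal{R}}x_j(t)\1(\Delta\pi_j(T_l))}\Big\},\qquad i\in\mathcal{R},
\end{equation*}
which depend on the subinterval-$l$ payoffs only through their signs, i.e. through the indicators, and not through their magnitudes. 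This holds whether the subinterval lies in the shortage regime or the surplus regime, since in either case the quantity being saturated carries no payoff magnitude.

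Building on this, I would fix an arbitrary admissible $\bm\theta$ realizing this case with some $i_0\in\mathcal{R}$, $\Delta\pi_{i_0}(T_l)>0$, and define $\bm\theta'$ identical to $\bm\theta$ except that the $(l+1)^{\text{th}}$ component of $\bm\Delta\bm\pi_{i_0}$ is replaced by a second positive value $\Delta\pi_{i_0}'(T_l)\neq\Delta\pi_{i_0}(T_l)$. I would then compare the two state trajectories piecewise: on $[0,T_l)$ they coincide because only the payoffs of earlier subintervals act there and those are unchanged, so in particular $\bm x(T_l)$ agrees; on $[T_l,T_{l+1})$ the right-hand sides agree by the displayed reduction, since both positive values yield the same indicator, whence $\bm x(T_{l+1})$ agrees; and on $[T_{l+1},T)$ the trajectories coincide because they share the same state at $T_{l+1}$ and all later payoffs are identical. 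Consequently $y(t,\bm\theta)=y(t,\bm\theta')$ for all $t\ge 0$ while $\Delta\pi_{i_0}(T_l)\neq\Delta\pi_{i_0}'(T_l)$, which is exactly unidentifiability of the $(l+1)^{\text{th}}$ component of $\bm\Delta\bm\pi_{i_0}$.

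The main obstacle I anticipate is not the perturbation itself but the two consistency checks that make the piecewise argument rigorous: first, that the active/inactive branch of the $\min$ is selected identically under $\bm\theta$ and $\bm\theta'$ at each $t\in[T_l,T_{l+1})$—this follows because the demand driving the switch depends only on the $x_j$ and their signs, which agree—and second, that the family of output-equivalent perturbations is genuinely non-trivial, here an entire ray of positive values $\Delta\pi_{i_0}'(T_l)>0$, so that \Cref{def:identifiability} fails on a positive-measure set rather than merely at isolated points. Combining this ray construction for the positive rationalists with \Cref{lem:unidentifiability-of-negative-payoffs} for the remaining groups then yields unidentifiability of the $(l+1)^{\text{th}}$ component of $\bm\Delta\bm\pi_i$ for every $i\in\{1,\ldots,\n\}$, as claimed.
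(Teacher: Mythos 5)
Your proposal is correct, but it takes a genuinely different route from the paper's. The paper stays inside its differential-algebra framework: it derives the input--output equation for this regime---$\dot y(t)=\kappa u(t)$ on the shortage portion and $\dot y(t)=\kappa\big(1-y(t)-\bar x\big)$ on the surplus portion, where $\bar x$ is the proportion of unvaccinated individuals with non-positive payoffs---and concludes that, since no payoff value appears in this relation (its only parameters being $\kappa$ and $\bar x$), none of the payoffs can be identified; it treats all $\n$ groups at once and never invokes \Cref{lem:unidentifiability-of-negative-payoffs}. You instead argue by direct indistinguishability: you dispatch every group whose $(l+1)^{\text{th}}$ payoff component is non-positive via \Cref{lem:unidentifiability-of-negative-payoffs}, and for rationalists with positive payoff you exhibit a ray of perturbed parameter vectors whose state trajectories coincide piecewise with the original ones, because on $[T_l,T_{l+1})$ the vector field \eqref{eq:dotGamma} sees these payoffs only through their indicators (the imitator terms in the saturation denominator vanishing identically). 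Both proofs rest on the same structural fact---that in this regime the dynamics carry no payoff magnitudes---but they package it differently. Your construction is more elementary and self-contained: it avoids the somewhat implicit inference that absence of a parameter from the input--output relation (together with payoff-independent initial data) forces unidentifiability, and it matches the style of the paper's own proof of \Cref{lem:unidentifiability-of-negative-payoffs}, which likewise perturbs a single payoff component. The paper's version is shorter given the machinery already in place, and it simultaneously produces the input--output relation that is reused elsewhere (e.g., in the proof of \Cref{lem:identifiability-kappa}). One cosmetic caveat: a one-parameter ray of output-equivalent $\bm\theta'$ is not literally a positive-measure subset of the full parameter space, but exhibiting such a family of distinct, output-indistinguishable parameter vectors is exactly what the paper's notion of unidentifiability (and its own unidentifiability proofs) require, so this does not affect correctness.
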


 \begin{corollary} \label{cor:unidentifiability-of-payoff-rationalists}
     Let  $i \in \mathcal{R}$.
     Then, none of the components of the vector $\bm \Delta \bm \pi_i$ is identifiable.
 \end{corollary}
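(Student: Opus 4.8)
The plan is to exploit the fact that a rationalist's payoff enters the dynamics only through its sign. First I would fix a rationalist index $i \in \mathcal{R}$ and a subinterval index $l \in \{0,1,\ldots,\s-1\}$ and argue that the $(l+1)^{\text{th}}$ component $\Delta\pi_i(T_l)$ is unidentifiable; since $l$ is arbitrary, the statement about the whole vector $\bm\Delta\bm\pi_i$ follows. Inspecting the right-hand side of \eqref{eq:dotGamma}, the scalar $\Delta\pi_i(T_l)$ occurs in $\bm f$ for $t \in [T_l,T_{l+1})$ only as the argument of the indicator $\1(\Delta\pi_i(T_l))$: both in the numerator of the $i^{\text{th}}$ (rationalist) equation and inside the rationalist sum $\sum_{j\in\mathcal{R}} x_j\1(\Delta\pi_j(T_l))$ that appears in every denominator, and it does not occur at all for $t \notin [T_l,T_{l+1})$. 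Hence $\bm f$ depends on this coordinate of $\bm\theta$ only through $\mathrm{sign}(\Delta\pi_i(T_l))$.

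Next I would construct an indistinguishable alternative. Given a parameter vector $\bm\theta$, define $\bm\theta'$ to agree with $\bm\theta$ in every coordinate except $\Delta\pi_i(T_l)$, which is replaced by any other real number of the same sign (an arbitrary positive value if $\Delta\pi_i(T_l)>0$, or an arbitrary non-positive value if $\Delta\pi_i(T_l)\leq 0$). By the previous step the vector fields coincide, $\bm f(t,\bm\gamma,u(t),\bm\theta)=\bm f(t,\bm\gamma,u(t),\bm\theta')$ for all admissible $(t,\bm\gamma)$; the initial condition $x_j(0)=\rho_j$ is independent of this payoff coordinate; so by uniqueness of solutions the state trajectories agree and therefore, via \eqref{eq:y}, $y(t,\bm\theta)=y(t,\bm\theta')$ for all $t\geq 0$ while $\Delta\pi_i'(T_l)\neq\Delta\pi_i(T_l)$. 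Because the replacement value ranges over an open half-line, this yields a positive-measure family of output-matching parameter vectors differing in the $(l+1)^{\text{th}}$ component of $\bm\Delta\bm\pi_i$, contradicting identifiability of that component in the sense of \Cref{def:identifiability}. The non-positive components are already covered by \Cref{lem:unidentifiability-of-negative-payoffs}, so the genuinely new content is the uniform treatment of the positive components, which additionally extends \Cref{prop:unidentifiability-of-rationalist-payoff} to the remaining sign configurations (case (iv)).

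The main obstacle is a bookkeeping verification rather than a conceptual one: I must confirm that $\Delta\pi_i(T_l)$ truly enters $\bm f$ nowhere except through $\1(\Delta\pi_i(T_l))$. This requires scanning not only the $i^{\text{th}}$ equation but the denominators of the $\min\{1,\,u(t)/(\cdot)\}$ factor in every group's equation, checking that the rationalist contributions there are weighted by the indicator alone, unlike the imitator contributions, which carry the factor $\Delta\pi_j(T_l)$ and are therefore sensitive to magnitude. A secondary point is to confirm that each constructed $\bm\theta'$ stays in the admissible parameter space and respects the ``almost any'' quantifier of \Cref{def:identifiability}; both are immediate because only a single unconstrained coordinate is varied over an interval, leaving the constraint $\sum_{i=1}^{\n}\rho_i=1$ and all sign patterns intact.
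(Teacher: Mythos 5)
Your proposal is correct, and it takes a more direct route than the paper. The paper gives no standalone proof of this corollary: it is presented as an immediate consequence of \Cref{lem:unidentifiability-of-negative-payoffs} (non-positive components) together with \Cref{prop:unidentifiability-of-rationalist-payoff}, whose proof works by deriving the input-output equation $\dot{y}(t) = \kappa\big(1-y(t)-\bar{x}\big)$ (or $\dot{y}(t)=\kappa u(t)$ during shortage) for intervals of type (ii) and observing that no payoff value appears in it. Strictly read, that chain leaves uncovered the case of a \emph{positive} rationalist payoff in an interval where some imitator also has a positive payoff (the paper's case (iv)), since \Cref{prop:unidentifiability-of-rationalist-payoff} assumes all imitators have non-positive payoffs there. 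Your argument closes exactly this gap: by observing that $\Delta\pi_i(T_l)$ for $i\in\mathcal{R}$ enters the vector field \eqref{eq:dotGamma} only through the indicator $\1\big(\Delta\pi_i(T_l)\big)$ --- both in the $i^{\text{th}}$ equation and in every denominator, where rationalist terms carry no magnitude factor, unlike imitator terms --- you get trajectory equality for any same-sign replacement, uniformly across all four sign configurations. This is in fact the same perturbation mechanism the paper uses to prove \Cref{lem:unidentifiability-of-negative-payoffs} (replacing $c$ by $c-\epsilon$ preserves the indicator), extended to positive values, so your proof both subsumes the paper's lemma for rationalists and makes the corollary self-contained without invoking the input-output machinery. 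One small imprecision: your closing claim that the construction yields a ``positive-measure family'' of output-matching vectors is not the right framing (for a fixed $\bm\theta$ the matching $\bm\theta'$ form a half-line, which is null in $\bm\Theta$); what actually defeats \Cref{def:identifiability} is that the failure occurs for \emph{every} $\bm\theta$, hence certainly not only on an exceptional null set --- your construction delivers this, so the conclusion stands.
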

 % In view of \Cref{cor:unidentifiability-of-payoff-rationalists}, in \Cref{example:first}, the payoff values of the rationalist seniors and non-seniors, $\bm \Delta \bm \pi_1$ and $\bm \Delta \bm \pi_2$, are unidentifiable.
 
In the following subsections, we investigate the identifiability of the  payoff values  for a time interval where at least one imitative group has positive payoff value.
In this regard, we adopt a common approach to identifiability  which involves 
% investigate the identifiability of the imitators' payoff values, for each time interval $(T_l, T_{l+1})$
deriving the so-called \emph{(generalized) input-output equation}, that is, a normalized differential polynomial equation consisting of monomials in terms of the input, output and their (higher-order) derivatives, and the parameters \cite{eisenberg2013input}.
By ``monomial,'' we refer to a product of powers of output and its derivatives with nonnegative integer exponents. 
Based on \Cref{thm:Eisenberg}, the coefficients of monomials in a normalized input-output equation are globally identifiable--under solveability condition.
 Hence, if
 there is a one-to-one map between the coefficients of the  monomials of the  input-output equation and the payoff values,  the identifiability of the payoff values is immediate. 
 
 In the subsections
 \ref{subsec:identifiability-payoff-steady-state-imitating-only},
 \ref{subsec:identifiability-payoff-steady-state-imitating-and-br}, and
 \ref{sub:Identifiability analysis of the proportions of unvaccinated individuals at},
 we only focus on vaccine surplus period $[t^*,T)$.
For the ease of notation, we define 
 \begin{equation}
    t_j =
    \begin{cases}
        T_j & \text{ if } j \in \Z_{>0},\\
        t^* & \text{ if } j = 0.\\
    \end{cases}
\end{equation}
 % Before proceeding, we define the following notations.
 Denote by $\bar{x}(t_l)$ the  proportion of unvaccinated individuals at time instant $t_l$ who have non-positive  payoff values during time interval $[t_l,t_{l+1})$, i.e.,
\begin{equation} \label{eq:barx}
    \begin{aligned} 
       \bar{x}(t_l) &=  \sum_{\substack{{j: j \in \{1,2,\ldots,\n\},}\\ {\Delta \pi_j(t_l)\leq 0}}} x_j(t_l).
    \end{aligned}
\end{equation}
Let $\mathcal{P}(t_l)$ denote the set of positive payoff values of imitators during  $[t_l, t_{l+1})$, i.e.,
\begin{equation} \label{eq:set-of-payoff}
\mathcal{P}(t_l) = \bigcup_{\substack{{i: i \in \mathcal{M},}\\ {\Delta \pi_i(t_l)>0}}} \{\Delta \pi_i(t_l)\}.
\end{equation}
The cardinality of the set $\mathcal{P}(t_l)$ is denoted by $n_l$.
Indeed, $n_l$ is the number of  distinct positive  payoff values of imitators during the subinterval $[t_l, t_{l+1})$.
Order the members of the set $\mathcal{P}(t_l)$ in an arbitrary order as a vector $\hat{\bm \pi}(t_l)$ of dimension $n_l$.
Define
\begin{equation} \label{eq:xhat-for-vandermonde}
    \hat{x}_i(t) = \sum_{\substack{{j: j \in \mathcal{M},}\\ {\Delta \pi_j(t_l) =  \hat{\pi}_i(t_l)}}} x_j(t), \quad  i=1,2,\ldots,n_l,
\end{equation}
where $\hat{\pi}_i(t_l)$, $i \in \{1,\ldots,n_l\}$, denotes the $i^{\text{th}}$ component of the vector $ \hat{\bm\pi}(t_l).$
The variable $\hat{x}_i(t)$ equals the  proportion of unvaccinated imitators having the payoff value $\hat{\pi}_i(t_l)$ during the time interval $[t_l,t_{l+1}).$
\subsubsection{A subinterval in which no rationalists but at least one imitative group gets vaccinated} \label{subsec:identifiability-payoff-steady-state-imitating-only}

Assume that during some subinterval $[t_l, t_{l+1}),$ $l \in \{0,1,\ldots,\s-1\},$ no rationalist groups have positive  payoff values, but
at least one imitative group has. 
The output \eqref{eq:y} then satisfies 
\begin{equation}\label{eq:the-output-for-vandermonde}
    \begin{aligned}
    % y(t) + \sum_{i=1}^n x_i(t) + \sum_{i=n}^\n x_i(t) -1 &= 0\\
  y(t) &= 1-  \sum_{i=1}^{\n} {x}_i(t)  & \\
  & =  1-  \sum_{i=1}^{n_l} \hat{x}_i(t) - \bar{x}(t_l).
    \end{aligned}
\end{equation}
% For simplicity, we denote $\Delta \pi_i(t)$ by $c_i$, .
From \eqref{eq:dotGamma} and \eqref{eq:xhat-for-vandermonde}, the derivative of $\hat{x}_i$ for $i = 1,\ldots,n_l$ and $t \in [t_{l},t_{l+1})$ equals
$
    \dot{\hat{x}}_i(t) = -\kappa \hat{x}_i(t) y(t) \hat \pi_i(t_l).
$
Moreover, the value of $\bar{x}(t_l)$ is  constant over the subinterval $[t_l, t_{l+1})$.
Hence, by differentiating \eqref{eq:the-output-for-vandermonde}, the derivative of the output reads
% \begin{align}\label{eq:1-derivative-output-for-vandermonde}
$\dot{y}(t) 
    = \kappa y(t) \sum_{i=1}^{n_l}\hat \pi_i(t_l)\hat{x}_i(t).$
% \end{align}
By defining
 $f_{(j,k)}(t)$ for $k,j \in \Z$ as
    \begin{equation}
        \begin{cases} \label{eq:f-j-k}
    f_{(j,k)}(t)  =\dot{f}_{(j,k-1)}(t) - \kappa y(t) f_{(j-1,k-1)}(t), & \\
    f_{(1,1)}(t) = \kappa y(t), & \\
    f_{(j,k)}(t) = 0 \quad \text{if } j > k \text{ or }  j \leq 0 \text{ or } k\leq 0, 
        \end{cases}
    \end{equation}    
  the $k^{\text{th}}$ order derivative of the output $y$, denoted by $y^{(k)}$, can be written in the following compact form.
\begin{lemma} \label{lem:induction-For-Dot-y}
   For a time interval $[t_l,t_{l+1})$, $l \in \{0,1,\ldots,\s-1\}$, where the payoff values of rationalist groups are not positive, but those of some imitative groups are,  we have
    $
    y^{(k)}(t) = \sum_{j=1}^k f_{(j,k)}(t) \sum_{i=1}^{n_l}  \hat \pi_i^j(t_l) \hat{x}_i(t), \quad  k \in \Z_{>0}.
    $
\end{lemma}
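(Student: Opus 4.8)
The plan is to prove the formula by induction on $k$, using the two first-order relations established just above the statement, namely $\dot{y}(t) = \kappa y(t) \sum_{i=1}^{n_l} \hat\pi_i(t_l)\hat{x}_i(t)$ and $\dot{\hat{x}}_i(t) = -\kappa \hat{x}_i(t) y(t) \hat\pi_i(t_l)$, together with the recursion defining $f_{(j,k)}$ in \eqref{eq:f-j-k}. To keep the bookkeeping light I would first abbreviate the power sums $S_j(t) := \sum_{i=1}^{n_l} \hat\pi_i^j(t_l)\hat{x}_i(t)$, so that the claimed identity reads $y^{(k)}(t) = \sum_{j=1}^k f_{(j,k)}(t)\, S_j(t)$. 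Differentiating $S_j$ and substituting $\dot{\hat{x}}_i = -\kappa y \hat\pi_i \hat{x}_i$ gives the single clean relation $\dot{S}_j(t) = -\kappa y(t)\, S_{j+1}(t)$, which is the engine of the whole argument.

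For the base case $k=1$, the right-hand side is $f_{(1,1)}(t)\, S_1(t) = \kappa y(t) \sum_{i=1}^{n_l} \hat\pi_i(t_l)\hat{x}_i(t)$, which is exactly $\dot{y}(t)$. For the inductive step I would assume $y^{(k)} = \sum_{j=1}^k f_{(j,k)} S_j$, differentiate, and apply the product rule to obtain $y^{(k+1)} = \sum_{j=1}^k \dot{f}_{(j,k)} S_j - \kappa y \sum_{j=1}^k f_{(j,k)} S_{j+1}$, where the second sum already uses $\dot{S}_j = -\kappa y S_{j+1}$. Reindexing that sum by $j \mapsto j-1$ turns it into $-\kappa y \sum_{j=2}^{k+1} f_{(j-1,k)} S_j$, and collecting the coefficient of each $S_j$ produces precisely $\dot{f}_{(j,k)} - \kappa y\, f_{(j-1,k)}$, which is $f_{(j,k+1)}$ by \eqref{eq:f-j-k}.

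The only place requiring care, and the step I would treat as the main (mild) obstacle, is matching the endpoints $j=1$ and $j=k+1$, since there the two sums fail to overlap. Both are handled automatically by the zero-padding convention $f_{(j,k)} = 0$ whenever $j \leq 0$ or $j > k$: at $j=1$ the missing term $f_{(0,k)} = 0$, so the coefficient is just $\dot{f}_{(1,k)} = f_{(1,k+1)}$; at $j=k+1$ the missing term $\dot{f}_{(k+1,k)} = 0$, so the coefficient is $-\kappa y\, f_{(k,k)} = f_{(k+1,k+1)}$. With these endpoints verified, the expression collapses to $\sum_{j=1}^{k+1} f_{(j,k+1)} S_j$, which closes the induction and, after unfolding $S_j$, yields the stated formula.
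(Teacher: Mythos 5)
Your proof is correct and follows essentially the same route as the paper's: induction on $k$, the key relation $\frac{d}{dt}\sum_{i} \hat\pi_i^j \hat{x}_i = -\kappa y \sum_i \hat\pi_i^{j+1}\hat{x}_i$, the product rule, reindexing, and the recursion \eqref{eq:f-j-k} with its zero-padding convention to absorb the endpoint terms. Your treatment is in fact slightly more careful than the paper's, which only invokes $f_{(0,\cdot)}=0$ implicitly and does not spell out the $j=k+1$ endpoint, but the substance is identical.
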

We are looking for the input-output relation, which can be found if we could write the term $\sum_{i=1}^{n_l}  \hat \pi_i^{k}(t_l)\hat{x}_i$, for some $k$, 
in terms of the output and its (higher-order) derivative.
In view of \eqref{eq:f-j-k}, the function
$f_{(j,k)}(t)$ is described by the output and its derivatives. 
Hence,
based on \Cref{lem:induction-For-Dot-y}, the term
$\sum_{i=1}^n  \hat \pi_i^{k}(t_l)\hat{x}_i$ for $k >1$
 can be recursively written in terms of  $\kappa$, the output and its  derivatives, and $\sum_{i=1}^j  \hat \pi_i^{j}(t_l)\hat{x}_i$ where $j<k$.
% \begin{align}
%   %  f_{(j,k)} &= \dot{f}_{(j,k-1)} - \kappa y f_{(j-1,k-1)}, \quad f_{(1,1)} = \kappa y, 
%   %  \\
%     \sum_{i=1}^n  c_i^{k}x_i &= \frac{y^{(k)} - \sum_{j=1}^{k-1} f_{(j,k)} \sum_{i=1}^n c_i^j x_i} {(-1)^{k+1} \kappa^k y^k}.
%     \label{eq:c-gamma-based-on-y}
% \end{align}
More specifically, let
\begin{equation} \label{eq:alpha-series}
    \begin{cases}
    % \alpha_0 & = \sum_{i=1}^n  x_i + \Bar{x} =1-y \\
    % \alpha_1 &= \sum_{i=1}^n c_i x_i = \frac{\dot{y}}{\kappa y} \\
    % \vdots \\
    \displaystyle
    \alpha_k(t) = \frac{y^{(k)}(t) - \sum_{j=1}^{k-1} f_{(j,k)}(t) \alpha_j(t)} {(-1)^{k+1}\kappa^k y^k(t)},  & \\
    \displaystyle
    \alpha_1(t) = \frac{\dot{y}(t)}{\kappa y(t)},\\
        \displaystyle
    \alpha_0(t) = 1-y(t), &
    \end{cases}
\end{equation}
then for $k \in \Z_{> 0}$ and $t \in [t_l, t_{l+1})$, $l \in \{0,1,\ldots,\s-1\}$, where no rationalist groups have positive payoff values, we have
$
  \sum_{i=1}^{n_l}  \hat \pi_i^{k}(t_l)\hat{x}_i(t)    =  \alpha_k(t).
$
Note that the output $y(t)$ is strictly greater than zero for $t \geq t^* + \epsilon$, for an arbitrarily small positive $\epsilon$, and, in turn, the function $\alpha_k(t)$, $k\in  \mathbb{Z}_{>0}$, is bounded.
In view of \eqref{eq:f-j-k} and \eqref{eq:alpha-series}, the function $\alpha_k(t)$ is written based on the output and its  derivatives and parameter $\kappa$.
We can then construct the matrix equation 
\begin{equation} \label{eq:V-gamma-alpha}
\matr{V} \hat{\bm x}(t) = \bm \alpha(t),
\end{equation}
where $\bm \alpha(t) = \big(\alpha_0(t), \alpha_1(t), \ldots, \alpha_{n_l}(t)\big)^\top$, 
$\hat{\bm x}(t)= \big(\hat{x}_1(t),\hat{x}_2(t), \ldots, \hat{x}_{n_l}(t), \bar{x}(t_l)\big)^\top$, and
\begin{equation} \label{eq:vandermonde}
\matr{V}  =
    \begin{pmatrix}
       1 & 1  & \cdots & 1 & 1  \\
       \hat \pi_1(t_l) & \hat \pi_2(t_l) & \cdots & \hat \pi_{n_l}(t_l) & 0\\
       \vdots & \vdots & \ddots & \vdots & \vdots \\
        \hat \pi_1^{n_l-1}(t_l) &  \hat \pi_2^{n_l-1}(t_l)  & \cdots &  \hat \pi_{n_l}^{n_l-1}(t_l)   & 0\\[0.2cm]
        \hat \pi_1^{ n_l}(t_l) &  \hat \pi_2^{n_l}(t_l)  & \cdots &  \hat \pi_{n_l}^{n_l}(t_l)   & 0
    \end{pmatrix}.
\end{equation}
Matrix $\matr{V}$ is indeed a Vandermonde matrix of size $\big(n_l+1)\times(n_l+1\big)$. 
It is known that provided  $\hat \pi_i(t_l) \neq \hat \pi_j(t_l)$ for $i \neq j$, matrix $\matr{V}$ is invertible \cite{man2017computing}.
%need to have the following assumption about the uniquenss of $\Delta \pi_i(t_l)$s.
%Under \Cref{ass:payoff-imitators-distinct},
 % The vector $\hat{\bm x}(t)$ can be represented as 
 From \eqref{eq:V-gamma-alpha} we have
\begin{equation} \label{eq:Gamma-based-on-Vinv}
\hat{\bm x}(t) = \matr{V}^{-1} \bm \alpha(t)
 % \begin{pmatrix}
    %     \hat{x}_1(t) \\
    %     \hat{x}_2(t)\\
    %     \vdots \\
    %     \bar{x}
    % \end{pmatrix} =  \begin{pmatrix}
    %    1 & 1  & \cdots & 1 & 1  \\
    %    \Delta \pi_1(t_l) & \Delta \pi_2(t_l) & \cdots & \Delta \pi_n_l & 0\\
    %    \vdots & \vdots & \ddots & \vdots & \vdots \\
    %     \Delta \pi_1^n_l &  \Delta \pi_2^n_l  & \cdots &  \Delta \pi_n^n_l   & 0
    % \end{pmatrix}^{-1} \begin{pmatrix}
    %     \alpha_0(t) \\
    %     \alpha_1(t)\\
    %     \vdots \\
    %     \alpha_n(t)
    % \end{pmatrix},
\end{equation}
which holds for all $[t_l, t_{l+1})$ during which no rationalist groups have positive payoff values.
The above relation is a set of $\big(n_l+1\big)$ equations where their
right-hand sides are in terms of the functions $\alpha_j(t)$, $j=0,1,\ldots,n_l$, and the parameters $\hat \pi_i(t_l)$, $i=1,2,\ldots,n_l$,  and consequently free of the state variables $\hat{x}_i(t)$.
The last component of vector $\hat{\bm x}(t)$, $\Bar{x}(t_l)$, is the proportion of unvaccinated individuals who have non-positive  payoff values for $t \in [t_l, t_{l+1})$, which is constant.
More specifically,
during the time interval $[t_l, t_{l+1})$, the term $\Bar{x}(t_l)$ is 
indeed a parameter.
Hence,
 the last equation in \eqref{eq:Gamma-based-on-Vinv} is the 
 input-output relation.
Analyzing this input-output relation yields the identifiability of the set $\mathcal{P}(t_l)$.
\begin{proposition} \label{lem:case1}
       For a subinterval $[t_l,t_{l+1}),$ $l \in \{0,1,\ldots,\s-1\}$, where the payoff values of rationalist groups are not positive, but those of some  imitative groups 
    are,
   % the elementary symmetric polynomials     in the payoff values of these $n$ groups $\Delta \pi_1(t_l), \Delta \pi_2(t_l), \ldots,\Delta \pi_n_l$
   the  set $\mathcal{P}(t_l)$
    is globally identifiable if Assumptions \ref{ass:subintervals-vaccine-surplus-period}, and \ref{ass:two-subintervals-shortage-surplus} hold.
    %under \Cref{ass:payoff-imitators-distinct}. 
\end{proposition}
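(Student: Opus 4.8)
The plan is to promote the last scalar equation of \eqref{eq:Gamma-based-on-Vinv} to a bona fide differential polynomial in the output and then read the payoff values off its coefficients. First I would substitute the recursive definitions \eqref{eq:alpha-series} of $\alpha_0,\ldots,\alpha_{n_l}$ into that equation and clear the denominators $\kappa^{j}y^{j}(t)$, obtaining a polynomial identity $P\big(y,\dot y,\ldots,y^{(n_l)};\bm\theta\big)=0$ valid on $[t_l,t_{l+1})$. The bottom row of $\matr{V}^{-1}$ encodes the interpolation $\sum_{k=1}^{n_l}c_k\hat\pi_i^{k}=1$, whose coefficients satisfy $c_{n_l}=(-1)^{n_l+1}/e_{n_l}$ and, more generally, are ratios of the elementary symmetric polynomials $e_j=e_j\big(\hat\pi_1(t_l),\ldots,\hat\pi_{n_l}(t_l)\big)$; consequently every monomial of $P$ carries a coefficient of the form $c\,\kappa^{m}e_j$ or $c\,\kappa^{m}e_j\big(1-\bar{x}(t_l)\big)$ with $c\in\Z$. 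The cases $n_l=1,2$ exhibit this shape: for $n_l=1$ one gets $\dot y-\kappa\hat\pi_1\big(1-\bar{x}(t_l)\big)y+\kappa\hat\pi_1 y^{2}=0$, and for $n_l=2$ the monomials $y^{2}\dot y$ and $y^{4}$ carry the coefficients $-\kappa e_1$ and $-\kappa^{2}e_2$, while the top-order monomials $\dot y^{2}$ and $y\ddot y$ carry the parameter-free coefficients $+1$ and $-1$.

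Next I would apply \Cref{thm:Eisenberg}. Because the top-order contribution to $P$ comes from $\tfrac{1}{e_{n_l}}\alpha_{n_l}$, the factor $e_{n_l}$ cancels when the denominators are cleared, so $P$ contains a monomial with a nonzero constant coefficient; normalizing $P$ by this monomial makes it a normalized input--output equation whose remaining coefficients are, by \Cref{thm:Eisenberg}, globally identifiable once the solveability condition is checked. I would verify solveability by showing that the finitely many distinct monomials in $y,\dot y,\ldots,y^{(n_l)}$ occurring in $P$ are linearly independent as functions of time on the subinterval; this holds because $y(t)$ is real-analytic and strictly positive on $(t^{*}+\epsilon,T)$, so its successive derivatives cannot be tied by a nontrivial constant linear relation for almost every $\bm\theta$. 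Establishing this solveability/linear-independence step for almost all parameter vectors, rather than on a single simulated trajectory, is where I expect the main difficulty to lie.

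Finally I would recover $\mathcal{P}(t_l)$ from the identifiable coefficients. Since $\kappa$ is already globally identifiable by \Cref{lem:identifiability-kappa}, I would divide the relevant coefficients by the known powers $\kappa^{j}$ to extract $e_1,\ldots,e_{n_l}$; the monomials not multiplied by the factor $1-\bar{x}(t_l)$ isolate these cleanly, so the unknown $\bar{x}(t_l)$ does not obstruct the extraction. These quantities are exactly the coefficients of the monic polynomial $q(z)=\prod_{i=1}^{n_l}\big(z-\hat\pi_i(t_l)\big)=z^{n_l}-e_1 z^{n_l-1}+\cdots+(-1)^{n_l}e_{n_l}$, whose multiset of roots is uniquely determined by its coefficients. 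As \Cref{ass:subintervals-vaccine-surplus-period} and \Cref{ass:two-subintervals-shortage-surplus} together with the standing hypothesis guarantee that the $\hat\pi_i(t_l)$ are distinct and positive---the very condition that makes $\matr{V}$ in \eqref{eq:vandermonde} invertible---the roots of $q$ coincide with $\mathcal{P}(t_l)$. Hence any two parameter vectors producing the same output yield the same $e_j$, the same $q$, and the same root set, so $\mathcal{P}(t_l)$ is globally identifiable.
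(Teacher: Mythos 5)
Your proposal follows essentially the same route as the paper's proof: take the last row of the Vandermonde inverse as the input--output relation, substitute the recursive $\alpha_k$'s and clear denominators, normalize so the highest-derivative monomial has unit coefficient, invoke \Cref{thm:Eisenberg}, divide the identified coefficients by powers of the already-identifiable $\kappa$ to recover the elementary symmetric polynomials, and read off $\mathcal{P}(t_l)$ as the root set of the resulting monic polynomial (your $n_l=1,2$ computations match the general structure the paper derives). The one step you assert rather than prove---that the key monomials $y^{m-l}y^{(l)}$, $y^{m}$, and $y^{m+1}$ carry exactly the coefficients $\pm\kappa^{n-l}\sigma_{n-l}$, $\kappa^{n}\sigma_{n}(1-\bar{x})$, and $\kappa^{n}\sigma_{n}$ without contamination from the cross terms $f_{(j,l)}\alpha_j$---is precisely what the paper establishes via \Cref{lem:f-j-k-derivative-output} and \Cref{lem:induction-alpha-contains-ydot}, which show every such cross term contains at least two derivative factors and hence cannot collide with these single-derivative (or derivative-free) monomials.
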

\Cref{lem:case1} shows the  identifiability of the set of positive payoff values of imitative groups $\mathcal{P}_l$.
However, if we consider only one subinterval, it is not possible to associate each member of the set with a specific imitative group unless, based on some  contextual factor (see \Cref{remark:labeling}), there is a known ordering among them; that is for each $i,j \in \{k \in \mathcal{M} \vert \Delta \pi_k(t_l) > 0\}$, the value of $\max \{\Delta \pi_i(t_l), \Delta \pi_j(t_l\}$ is known.
%then each $\Delta \pi_i(t_l)$ can be identified.
\begin{corollary} \label{cor:identifiability-of-the-cis-imitators}
   Under the conditions of \Cref{lem:case1},
   the $(l+1)^{\text{th}}$ component of the vector $\bm \Delta \bm \pi_i$, for $i\in \mathcal{M}$, is
      identifiable if it is positive and the ordering of the positive payoff values of imitators is known.
\end{corollary}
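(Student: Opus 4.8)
The plan is to upgrade the set-level identifiability granted by \Cref{lem:case1} to the identifiability of each individual imitator payoff, using the a priori known ordering to eliminate the residual permutation ambiguity. First I would note that the $(l+1)^{\text{th}}$ component of $\bm\Delta\bm\pi_i$ is precisely $\Delta\pi_i(T_l)$, which on the surplus subinterval $[t_l,t_{l+1})$ coincides with $\Delta\pi_i(t_l)$, since by \Cref{ass:subintervals-vaccine-surplus-period} the payoff of each group is constant on $[T_l,T_{l+1})$ and, for $l=0$, on $[t^*,T_1)\subseteq[0,T_1)$. Thus it suffices to identify $\Delta\pi_i(t_l)$ for every $i\in\mathcal{M}$ with $\Delta\pi_i(t_l)>0$.

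By \Cref{lem:case1}, the set $\mathcal{P}(t_l)$ of distinct positive imitator payoffs is globally identifiable: for almost any $\bm\theta'\in\bm\Theta$ reproducing the same output, the associated set $\mathcal{P}'(t_l)$ equals $\mathcal{P}(t_l)$. Since $\mathcal{P}(t_l)$ is a finite set of distinct positive reals, it admits a unique strictly increasing enumeration $p_1<p_2<\cdots<p_{n_l}$, and the equality $\mathcal{P}(t_l)=\mathcal{P}'(t_l)$ forces $n_l'=n_l$ together with $p_r'=p_r$ for every rank $r\in\{1,\dots,n_l\}$. In other words, identifiability of the set is equivalent to identifiability of the sorted list of its values, and the only thing left unresolved is which group is attached to which rank.

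This is exactly where the known ordering enters. The hypothesis that, for every pair $i,j$ of positive-payoff imitators, $\max\{\Delta\pi_i(t_l),\Delta\pi_j(t_l)\}$ is known a priori (see the discussion following \Cref{lem:case1} and \Cref{remark:labeling}) fixes the rank $r_i$ of each such group within the increasing enumeration, independently of the particular parameter vector, with tied groups sharing a common rank. Because this ordering is a priori information about the system rather than a free parameter, every parameter vector considered in \Cref{def:identifiability} is constrained to respect it; in particular the alternative $\bm\theta'$ assigns group $i$ the same rank $r_i$ as $\bm\theta$ does. Combining with the previous paragraph, $\Delta\pi_i(t_l)=p_{r_i}=p_{r_i}'=\Delta\pi_i'(t_l)$, which is the claimed identifiability.

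The main obstacle is conceptual rather than computational: one must argue carefully that set identifiability leaves only a permutation ambiguity and that the a priori ordering resolves it within the formal framework of \Cref{def:identifiability}. In particular, care is needed to treat ties, where several imitative groups share a single positive payoff value and therefore map to one element of $\mathcal{P}(t_l)$, so that the rank $r_i$ is well defined and the common value is assigned consistently to each tied group. Once this bookkeeping is in place, the identifiability of each positive imitator payoff follows immediately from \Cref{lem:case1}.
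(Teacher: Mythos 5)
Your proposal is correct and matches the paper's intent: the paper declares this corollary ``immediate'' from \Cref{lem:case1}, and your argument---set identifiability of $\mathcal{P}(t_l)$ plus the a priori ordering fixing each group's rank in the sorted enumeration (with ties handled consistently)---is exactly the reasoning the paper leaves implicit. No gap; you have simply written out the bookkeeping the authors omitted.
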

\begin{corollary} \label{cor:identifiability-xbar-imitators}
    Under the conditions of \Cref{lem:case1}, the proportion of unvaccinated individuals with non-positive  payoff values for the time interval $(t_l, t_{l+1})$, $l \in \{0,1,\ldots,\s-1\}$, $\bar{x}(t_l)$, is globally identifiable.
\end{corollary}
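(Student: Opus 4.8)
The plan is to read $\bar{x}(t_l)$ off directly from the last equation of \eqref{eq:Gamma-based-on-Vinv} and to verify that every quantity on its right-hand side has already been shown to be identifiable. Recall that \eqref{eq:Gamma-based-on-Vinv} reads $\hat{\bm x}(t) = \matr{V}^{-1}\bm\alpha(t)$, and that the last component of $\hat{\bm x}(t)$ is precisely the constant proportion $\bar{x}(t_l)$. Writing $\bm w^\top$ for the last row of $\matr{V}^{-1}$, this yields the input-output relation $\bar{x}(t_l) = \bm w^\top \bm\alpha(t)$, valid for every $t \in [t_l, t_{l+1})$ with $t \geq t^* + \epsilon$, so that $y(t) > 0$ and $\bm\alpha(t)$ is well defined.

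Next I would check the two ingredients of this relation. First, by the definitions \eqref{eq:f-j-k} and \eqref{eq:alpha-series}, each $\alpha_j(t)$ is an explicit rational expression in $\kappa$, the output $y(t)$, and its derivatives $y^{(1)}(t),\ldots,y^{(j)}(t)$; since \Cref{lem:identifiability-kappa} gives the identifiability of $\kappa$ and the output together with all its derivatives is observable along the trajectory, the vector $\bm\alpha(t)$ is an identifiable function of the data. Second, \Cref{lem:case1} establishes that the set $\mathcal{P}(t_l) = \{\hat\pi_1(t_l),\ldots,\hat\pi_{n_l}(t_l)\}$ is globally identifiable, so the Vandermonde matrix $\matr{V}$ in \eqref{eq:vandermonde} is determined up to a permutation of its first $n_l$ columns.

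The one point requiring care — and which I expect to be the main obstacle — is that \Cref{lem:case1} identifies only the \emph{unordered} set $\mathcal{P}(t_l)$, whereas $\matr{V}$, and hence $\matr{V}^{-1}$, depends on an ordering of the $\hat\pi_i(t_l)$. I would resolve this by a permutation-invariance argument. The last column of $\matr{V}$ equals $(1,0,\ldots,0)^\top$ and is fixed, so reordering the payoffs acts by a permutation matrix $P$ that fixes the last coordinate, sending $\matr{V} \mapsto \matr{V}P$ and therefore $\matr{V}^{-1} \mapsto P^\top \matr{V}^{-1}$; this permutes only the first $n_l$ rows of $\matr{V}^{-1}$ and leaves its last row $\bm w^\top$ unchanged. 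Consequently $\bar{x}(t_l) = \bm w^\top \bm\alpha(t)$ depends only on the unordered set $\mathcal{P}(t_l)$ and on $\bm\alpha(t)$, both identifiable. Thus, for any $\bm\theta'$ producing the same output one obtains $\bm w = \bm w'$ and $\bm\alpha(t) = \bm\alpha'(t)$, whence $\bar{x}(t_l) = \bar{x}'(t_l)$, establishing the claimed global identifiability.
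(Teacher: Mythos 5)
Your proof is correct, but it takes a genuinely different route from the paper's. The paper proves this corollary \emph{inside} the proof of \Cref{lem:case1}: it multiplies the last row of \eqref{eq:Gamma-based-on-Vinv} through by $\sigma_n y^m \kappa^n$ to obtain the normalized polynomial input--output equation \eqref{eq:inmiddle}, invokes \Cref{thm:Eisenberg} to conclude that the coefficients of its monomials are identifiable, and then uses the auxiliary lemmas on $f_{(j,k)}$ and $\alpha_j$ to check that the monomials $y^{m+1}$ and $y^{m}$ occur only with coefficients $\kappa^n\sigma_n$ and $\kappa^n\sigma_n(1-\bar{x})$; identifiability of $\bar{x}$ then follows, and in fact can be read off from the ratio of these two coefficients without ever invoking the identifiability of $\kappa$. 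You instead treat the statement as a true corollary: taking the conclusions of \Cref{lem:case1} (the unordered set $\mathcal{P}(t_l)$ is identifiable) and \Cref{lem:identifiability-kappa} ($\kappa$ is identifiable) as given, you observe that $\bar{x}(t_l)=\bm w^\top\bm\alpha(t)$ is an explicit function of identifiable parameters and of the observed output and its derivatives, and your permutation-matrix argument correctly disposes of the ordering ambiguity — indeed \eqref{eq:IO-sigmaBar} confirms it, since the last row of $\matr{V}^{-1}$ is $\tfrac{1}{\sigma_n}\bigl(\sigma_n,-\sigma_{n-1},\ldots,(-1)^n\sigma_0\bigr)$, a vector of symmetric functions of the payoffs. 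What each approach buys: yours is more elementary (one evaluation plus a symmetry observation, with no second pass through the coefficient-matching machinery), while the paper's stays entirely within the input--output framework and makes the identification of $\bar{x}$ independent of $\kappa$ at that step; since the paper's own proof of \Cref{lem:case1} already assumes $\kappa$ identifiable, your extra reliance on \Cref{lem:identifiability-kappa} does not weaken the overall chain. Both arguments share the same implicit assumption (not addressed in the paper either) that a competing parameter vector $\bm\theta'$ producing the same output induces the same sign structure on the subinterval, so that the primed quantities satisfy the same relation.
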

%As a result, if we introduce some ordering among the payoff values of the imitators, these values are globally identifiable.
%  \begin{remark} \label{rem:consequences-of-not-knowing-the-payoff-signs}
%      In identifiability analysis, the structure of the systems, including the governing dynamics are considered to be known \textit{a priori}.
%     In view of \eqref{eq:dotGamma-imitator} and \eqref{eq:dotGamma-br}, a non-positive excess payoff value for group $i$ results in $\dot{x}_i = 0$,  and the system  behaves as if it has a reduced order. 
% This is because the state variable 
% $x_i$  becomes constant.
% Now, assume that more than one state has a zero derivative.
% In this case, it becomes impossible to  distinguish these states based on their dynamics.
% % It is also immediately apparent that the negative values of the payoffs are not identifiable because they do not appear in the dynamics.
% Motivated with this observation, we assume that the signs of the excess payoff values of the groups are known \emph{a priori}.
%  \end{remark}
\subsubsection{A subinterval in which both rationalist and imitative groups get vaccinated} \label{subsec:identifiability-payoff-steady-state-imitating-and-br}
Here,
we take a similar approach to the one we took in the preceding section, i.e, we derive the  input-output equation.
Consider the time interval $[t_l,t_{l+1})$, $l \in \{0,1,\ldots,\s-1\}$, during which at least one rationalist and one imitative groups have positive payoff values.
In view of \eqref{eq:dotGamma}, for rationalist groups 
we  have 
$
\dot{x}_i(t) = -\kappa x_i(t) \quad \text{for } i \in \mathcal{R} \text{  s.t. } \Delta \pi_i(t_l)>0.
$
% Let us define 
% $$\gamma'(t) = \sum_{\substack{{i: i \in \mathcal{R} \text{ and}}\\ {\Delta \pi_i(t_l)>0}}}  \gamma_i(t), \quad \text{ and } \quad \rho' = \sum_{\substack{{i: i \in \mathcal{R} \text{ and}}\\ {\Delta \pi_i(t_l)>0}}} \rho_i.$$ 
% Then the derivative of ${\gamma}'(t)$ will be
% \begin{equation}
% \dot{\gamma}'(t) = \kappa \big( \rho'- \gamma'(t) \big) \quad \text{for }t\in (t_l,t_{l+1}).
% \end{equation}
% Hence, when some rationalist groups have positive payoff values during $(t_l,t_{l+1})$, the evolution of the proportion of unvaccinated individuals among them  can be replaced with a single rationalist group with proportion $\rho'$.
Let us define $x'$ as the  proportion of unvaccinated rationalists with positive payoff value during this subinterval
\begin{equation} \label{eq:xprime}
 {x}'(t) = \sum_{\substack{{i: i \in \mathcal{R},}\\ {\Delta \pi_i(t_l)>0}}}  x_i(t).
 \end{equation}
 We then have
$
\dot{x}'(t) = -\kappa x'(t).
$
% If  $\vert \mathcal{P}_l \vert = n$, where the set $\mathcal{P}_l$ is defined in   \eqref{eq:set-of-payoff},
%and in the presence of with positive payoff values (for the ease of notation, we assume they are the first $n$ groups), 
It can be shown that the $k^{\text{th}}$ order derivative of the output for $k \in \mathbb{Z}_{>0}$ will be 
% \begin{equation} \label{eq:ys-M-R}
% \begin{aligned} 
%     &y + \sum_{i=1}^n x_i + x' + \Bar{x} -1  = 0, \\
%     &\dot{y} - \kappa y \sum_{i=1}^n c_ix_i -\kappa x' = 0, \\
%    & \ddot{y} - \kappa \dot{y} \sum_{i=1}^n c_ix_i + \kappa^2 y^2 \sum_{i=1}^n c_i^2 x_i +\kappa^2 x' = 0, \\
%    & \dddot{y} - \kappa \ddot{y} \sum_{i=1}^n c_i x_i + 3 \kappa^2 y \dot{y} \sum_{i=1}^n c_i^2 x_i - \kappa^3 y^3 \sum_{i=1}^n  c_i^3 x_i  - \kappa^3 x' = 0 \\
%   & y^{(4)} - \kappa \dddot{y} \sum_{i=1}^n c_i x_i + \kappa^2(3  \dot{y}^2  + 4  y \ddot{y}) \sum_{i=1}^n c_i^2 x_i - 6 \kappa^3 y^2 \dot{y} \sum_{i=1}^n c_i^3 x_i + \kappa^4 y^4 \sum_{i=1}^n c_i^4 x_i + \kappa^4 x'= 0 \\
% &  y^{(5)} - \kappa y^{(4)} \sum_{i=1}^n c_i x_i +\kappa^2 (10 \dot{y} \ddot{y}    + 5  y \dddot{y}) \sum_{i=1}^n c_i^2 x_i-\kappa^3(15 \dot{y}^2 y  + 10  y^2 \ddot{y} )\sum_{i=1}^n c_i^3 x_i \\
%   &+ 10 \kappa^4 y^3 \dot{y} \sum_{i=1}^n c_i^4 x_i - \kappa^5 y^5 \sum_{i=1}^n c_i^5 x_i - \kappa^5 x' = 0
  %  \\
  % & y^{(6)} - \kappa y^{(5)}\sum_{i=1}^n c_i x_i  + \kappa^2(6yy^{(4)} + 10 \ddot{y}^2 + 15 \dot{y}\dddot{y}  )\sum_{i=1}^n c_i^2 x_i + \\ &
  % \quad \kappa^3 (-60  y \dot{y} \ddot{y} -15 y^2 \dddot{y} + 15 \dot{y}^3)\sum_{i=1}^n c_i^3 x_i + \kappa^4(45y^2 \dot{y}^2 + 20 y^3 \ddot{y})\sum_{i=1}^n c_i^4 x_i  -15\kappa^5 y^4\dot{y}\sum_{i=1}^n c_i^5 x_i + \kappa^6 x' \\
  % & + \kappa^6 y^6 \sum_{i=1}^n c_i^6 x_i =0
 % & y^{(n)} + (-1)^n \kappa^n y^n \sum_{i=1}^n c_i^n x_i - \kappa y^{(n-1)} \sum_{i=1}^n c_i x_i  +\ldots + (-1)^n \kappa^nx'= 0,
% \end{aligned}
% \end{equation}
% and so forth.
% In the compact form we have
\begin{equation}\label{eq:y-and-its-derivatives-R-and-M}
    y^{(k)}(t) = \sum_{j=1}^k f_{(j,k)}(t) \sum_{i \in \mathcal{M} } \Delta \pi_i^j(t_l) \1\big( \Delta \pi_i(t_l)\big) x_i(t) + (-1)^{k-1} \kappa^k x'(t),
\end{equation}
% where $f_{(j,k)}(t)$ is defined in \eqref{eq:f-j-k}.
The input-output equation can be obtained similar to the previous case, just that
here besides function $\alpha_k$, we need to define an additional recursive function $\beta_k$.
 If $t^* = 0$ or the vaccine supply is identically zero over $[0,t^*)$, we consider $[\epsilon, t_1)$ for $l=0$ and an arbitrarily small positive $\epsilon$.
\begin{lemma} \label{lem:ci-gammai-based-on-alphas-betas}
Under Assumptions \ref{ass:subintervals-vaccine-surplus-period} and \ref{ass:two-subintervals-shortage-surplus},
    for a subinterval $[t_{l}, t_{l+1})$, $l \in \{0,1,\ldots,\s-1\}$, where at least one rationalist group and one imitative group have positive payoff values and for $k \in \mathbb{Z}_{>0}$
    we have
    $\sum_{i \in \mathcal{M}} \Delta \pi_i^k(t_l) \1 \big(\pi_i(t_l)\big)x_i(t) = \alpha_k(t) - x'(t) \beta_k(t),$
    where
    \begin{equation}
    \label{eq:betas}
    \begin{cases}
    \displaystyle
        \beta_k(t) = \frac{-\sum_{j=1}^{k-1} f_{(j,k)(t)}\beta_j(t) + (-1)^{k-1} \kappa^k}{(-1)^{k+1}\kappa^k y^k(t)}, &\\
        \displaystyle
        \beta_1(t) = 1/y(t),&
         \\
        \displaystyle
        \beta_0(t) = 1,&
    \end{cases}
\end{equation}
    and $x'(t)$ and $\alpha_k(t)$ are defined in \eqref{eq:xprime} and \eqref{eq:alpha-series}, respectively.
    %\footnote{Note that we consider the subinterval $(t^*,t_1)$ for $l = 0$. 
%As a result, the output $y(t)$ is strictly greater than zero and, in turn, the variable $\beta_k(t)$, $k\in  \mathbb{Z}_{>0}$, is bounded.}.
\end{lemma}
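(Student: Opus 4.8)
The plan is to prove the identity by induction on $k$, starting from the expression \eqref{eq:y-and-its-derivatives-R-and-M} for $y^{(k)}$ and abbreviating the target quantity as $S_k := \sum_{i\in\mathcal{M}}\Delta\pi_i^k(t_l)\1\big(\Delta\pi_i(t_l)\big)x_i$. The recursions defining $\alpha_k$ in \eqref{eq:alpha-series} and $\beta_k$ in \eqref{eq:betas} share the same denominator and the same weighted sum against the coefficients $f_{(j,k)}$, so the whole argument hinges on one structural fact about the diagonal terms $f_{(k,k)}$. First I would establish, directly from the recursion \eqref{eq:f-j-k}, that $f_{(k,k)} = (-1)^{k-1}\kappa^k y^k$: since $f_{(k,k-1)} = 0$ (because $k>k-1$), the recursion collapses to $f_{(k,k)} = -\kappa y\,f_{(k-1,k-1)}$, and iterating from $f_{(1,1)} = \kappa y$ yields the claim. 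Noting $(-1)^{k-1} = (-1)^{k+1}$, this is exactly the denominator appearing in both \eqref{eq:alpha-series} and \eqref{eq:betas}, which is what makes the matching go through.

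For the base case $k=1$, equation \eqref{eq:y-and-its-derivatives-R-and-M} reads $\dot y = \kappa y\,S_1 + \kappa x'$, so $S_1 = \dot y/(\kappa y) - x'/y = \alpha_1 - x'\beta_1$, using $\alpha_1 = \dot y/(\kappa y)$ and $\beta_1 = 1/y$. For the inductive step I would isolate the diagonal term in \eqref{eq:y-and-its-derivatives-R-and-M},
\[
f_{(k,k)}\,S_k = y^{(k)} - \sum_{j=1}^{k-1} f_{(j,k)}\,S_j - (-1)^{k-1}\kappa^k x',
\]
and substitute the induction hypothesis $S_j = \alpha_j - x'\beta_j$ for $j<k$. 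Grouping the $x'$-free terms separately from those proportional to $x'$ and dividing by $f_{(k,k)} = (-1)^{k+1}\kappa^k y^k$, the $x'$-free part becomes precisely $\alpha_k$ by its definition \eqref{eq:alpha-series}, while the coefficient of $x'$ becomes $\big(\sum_{j=1}^{k-1}f_{(j,k)}\beta_j - (-1)^{k-1}\kappa^k\big)\big/\big((-1)^{k+1}\kappa^k y^k\big)$, which is exactly $-\beta_k$ by \eqref{eq:betas}. Hence $S_k = \alpha_k - x'\beta_k$, closing the induction.

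The real content is that the recursion for $\beta_k$ is engineered so that $\beta_k$ plays for the rationalist block $x'$ the same role that $\alpha_k$ plays for the output; the $+(-1)^{k-1}\kappa^k$ term in the numerator of $\beta_k$ is there precisely to cancel the stray $-(-1)^{k-1}\kappa^k x'$ produced when the diagonal term is isolated. The main obstacle is therefore only the sign-and-denominator bookkeeping — keeping $(-1)^{k-1} = (-1)^{k+1}$ consistent and verifying this cancellation term by term. A preliminary point I would dispatch first is the validity of \eqref{eq:y-and-its-derivatives-R-and-M} itself, proved by a parallel induction (in the spirit of \Cref{lem:induction-For-Dot-y}) using $\dot x_i = -\kappa x_i y\,\Delta\pi_i(t_l)$ for the positive-payoff imitators and $\dot x' = -\kappa x'$ for the positive-payoff rationalists; all $\alpha_k$ and $\beta_k$ are well defined on the relevant subinterval since $y(t)>0$ there (taking $[\epsilon,t_1)$ when $l=0$ and $t^*=0$ or $u\equiv 0$ on $[0,t^*)$), as already observed.
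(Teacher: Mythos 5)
Your proposal is correct and follows essentially the same route as the paper's proof: a (strong) induction that isolates the diagonal term $f_{(k,k)}=(-1)^{k+1}\kappa^k y^k$ in \eqref{eq:y-and-its-derivatives-R-and-M}, substitutes $S_j=\alpha_j-x'\beta_j$ for $j<k$, and identifies the $x'$-free part with $\alpha_k$ and the coefficient of $x'$ with $-\beta_k$ via \eqref{eq:alpha-series} and \eqref{eq:betas}. The only cosmetic differences are that the paper states the closed form of $f_{(k,k)}$ as a separate auxiliary lemma and begins its base case at $k=2$ (deriving the $k=1$ identity inside it), whereas you start cleanly at $k=1$.
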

% \begin{assumption} \label{ass:sum_beta_neq_0}
%  For a subinterval $(t_{l}, t_{l+1})$, $l \in \{1,\ldots,\s-1\}$, where $n \leq| \mathcal{M} |$ imitating and some rationalist groups have positive payoff values, 
%    $$ \sum_{j=0}^n (-1)^j \beta_j(t) \sigma_{n-j} \neq 0. $$
% \end{assumption}
\begin{proposition} \label{lem:case2}
  For a subinterval $[t_{l}, t_{l+1})$, $l \in \{0,1,\ldots,\s-1\}$, where at least one rationalist and one imitative group have positive payoff values,
 the set of positive payoff values of the imitative groups $\mathcal{P}(t_l)$ is globally identifiable if  Assumptions \ref{ass:subintervals-vaccine-surplus-period} and \ref{ass:two-subintervals-shortage-surplus} are satisfied.
\end{proposition}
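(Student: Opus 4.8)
The plan is to follow the same route as in \Cref{lem:case1}: derive a normalized input--output equation whose monomial coefficients are, by \Cref{thm:Eisenberg}, globally identifiable, and then show that these coefficients determine the set $\mathcal{P}(t_l)$. The only --- but decisive --- new feature relative to \Cref{lem:case1} is the presence of the rationalist mode $x'(t)$ defined in \eqref{eq:xprime}, which satisfies $\dot{x}'(t)=-\kappa x'(t)$ and therefore contributes a pure exponential $x'(t)=x'(t_l)e^{-\kappa(t-t_l)}$ to the output, in addition to the $n_l$ imitator modes $\hat{x}_i(t)$.

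First I would assemble the linear system analogous to \eqref{eq:V-gamma-alpha}. By \Cref{lem:ci-gammai-based-on-alphas-betas}, for $k=1,\ldots,n_l+1$ we have $\sum_{i=1}^{n_l}\hat\pi_i^k(t_l)\hat{x}_i(t)=\alpha_k(t)-x'(t)\beta_k(t)$, while the output gives $\sum_{i=1}^{n_l}\hat{x}_i(t)+x'(t)+\bar{x}(t_l)=\alpha_0(t)$. Stacking these $n_l+2$ relations yields $\matr{M}(t)\,\bm w(t)=\bm\alpha(t)$ with $\bm w(t)=(\hat{x}_1(t),\ldots,\hat{x}_{n_l}(t),x'(t),\bar{x}(t_l))^\top$, where $\matr{M}(t)$ is the Vandermonde matrix \eqref{eq:vandermonde} extended to powers up to $n_l+1$ and augmented with two columns: the column $(1,\beta_1(t),\ldots,\beta_{n_l+1}(t))^\top$ carrying $x'(t)$ and the column $(1,0,\ldots,0)^\top$ carrying the constant $\bar{x}(t_l)$. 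Solving and reading off the last (constant) component then yields a relation free of the state variables. Equivalently, one may eliminate $x'(t)$ directly: the power sums $p_k(t)=\sum_{i=1}^{n_l}\hat\pi_i^k(t_l)\hat{x}_i(t)$ obey the linear recurrence with characteristic polynomial $\prod_{i=1}^{n_l}(z-\hat\pi_i(t_l))$, so substituting $p_k=\alpha_k-x'\beta_k$ turns each instance of the recurrence (for $k\geq n_l+1$) into a relation $A^{(k)}(t)=x'(t)B^{(k)}(t)$, whose $\alpha$- and $\beta$-parts are weighted by the elementary symmetric functions $e_1,\ldots,e_{n_l}$ of $\hat\pi_1(t_l),\ldots,\hat\pi_{n_l}(t_l)$; cross-multiplying two such instances removes $x'(t)$ and produces a normalized differential polynomial in $y$ and its derivatives whose coefficients depend only on $\kappa$ and on $e_1,\ldots,e_{n_l}$.

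With the input--output equation in hand, \Cref{thm:Eisenberg} makes its monomial coefficients globally identifiable, so $e_1,\ldots,e_{n_l}$ become identifiable once $\kappa$ is known, and $\kappa$ is identifiable by \Cref{lem:identifiability-kappa} under the present hypotheses. Since $e_1,\ldots,e_{n_l}$ are precisely the coefficients of the monic polynomial whose roots are the members of $\mathcal{P}(t_l)$, and a set of reals is recovered uniquely from its elementary symmetric functions, the map from $\mathcal{P}(t_l)$ to the identified coefficients is one-to-one, which gives the global identifiability of $\mathcal{P}(t_l)$.

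The main obstacle is the elimination of $x'(t)$ and the resulting loss of the clean Vandermonde structure: the extra column $(1,\beta_1(t),\ldots,\beta_{n_l+1}(t))^\top$ is time varying, so invertibility of $\matr{M}(t)$ no longer follows from distinctness of the $\hat\pi_i(t_l)$ alone, and the solvability hypothesis of \Cref{thm:Eisenberg} must be checked separately. I expect to resolve this by noting that the rationalist mode evolves as $e^{-\kappa(t-t_l)}$ whereas each imitator mode evolves as $e^{-\kappa\hat\pi_i(t_l)\int_{t_l}^{t}y}$; because $y$ is not identically constant on the surplus period, these $n_l+1$ modes are linearly independent, so $\matr{M}(t)$ is nonsingular for generic $t$ and the elimination is valid. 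Care is also needed where $y(t)$ vanishes, which is handled by restricting to $[\,t^*+\epsilon,\,t_{l+1})$ as in the preceding subsection so that the $\alpha_k$ and $\beta_k$ stay bounded.
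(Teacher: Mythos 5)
Your overall strategy is the same as the paper's: invoke \Cref{lem:ci-gammai-based-on-alphas-betas} to write the imitator power sums as $\alpha_k - x'\beta_k$, eliminate the rationalist mode $x'(t)$ and the constant $\bar{x}(t_l)$ to get a normalized input--output equation, apply \Cref{thm:Eisenberg}, and recover $\mathcal{P}(t_l)$ as the root set of the monic polynomial whose coefficients are the identified elementary symmetric functions (with $\kappa$ supplied by \Cref{lem:identifiability-kappa}). The paper's elimination, however, is different in one essential respect: it solves \eqref{eq:sigmaBar-R} for $x'(t)$, which requires dividing by ${B}(t,t_l)$ defined in \eqref{eq:denominator-betas}, substitutes into the expression for $y^{(n_l+1)}$, clears denominators, and then reads off the $\sigma$'s as the coefficients of the specific monomials $y^{(n_l+1)}y^{m-l}$.

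The genuine gap in your proposal is the degenerate case, which the paper treats explicitly and you do not. Your cross-multiplication of two recurrence instances $A^{(k)}(t)=x'(t)B^{(k)}(t)$ collapses to the identity $0=0$ precisely when the relevant $B^{(k)}$ vanish identically on the subinterval (equivalently, when your extended matrix $\matr{M}(t)$ is singular for all $t$ there); a zero equation cannot be normalized and \Cref{thm:Eisenberg} cannot be applied to it. Your proposed fix --- that linear independence of the modes $e^{-\kappa(t-t_l)}$ and $e^{-\kappa\hat{\pi}_i(t_l)\int_{t_l}^{t}y}$ forces $\matr{M}(t)$ to be nonsingular for generic $t$ --- is not a proof: singularity of $\matr{M}(t)$ on a subinterval means that the time-varying column $\big(\beta_0(t),\ldots,\beta_{n_l+1}(t)\big)^\top$ lies in the fixed subspace spanned by the Vandermonde columns and $(1,0,\ldots,0)^\top$, and this is a differential-algebraic condition on the auxiliary functions $\beta_k$ (which are built recursively from the output via \eqref{eq:betas}), not a statement about the solution modes; functional independence of the exponentials does not preclude it. Indeed, the paper does not claim the degeneracy is impossible: it proves only a dichotomy (\Cref{lem:coefficient-of-beta-not-zero}) and then handles \emph{Case 2} (${B}(t,t_l)\equiv 0$) separately, observing that in that case \eqref{eq:sigmaBar-R} itself becomes an input--output relation free of $x'$, namely $\sigma_{n_l}\bar{x}-\sum_{l}(-1)^{l}\alpha_l(t)\sigma_{n_l-l}=0$, to which the argument of \Cref{lem:case1} applies. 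To repair your proof you must either add the analogous fallback (when your $B^{(k)}\equiv 0$, the relation $A^{(k)}\equiv 0$ is itself a usable input--output equation) or actually prove the degeneracy cannot occur, which the mode-independence remark does not do. A secondary weakness: saying the coefficients of the cross-multiplied equation ``depend only on $\kappa$ and $e_1,\ldots,e_{n_l}$'' falls short of what \Cref{thm:Eisenberg} needs; as in the paper, you must exhibit monomials whose coefficients equal, up to known powers of $\kappa$, the individual $\sigma$'s (the paper does this bookkeeping via its lemmas on $f_{(j,k)}$), and after cross-multiplication your coefficients are quadratic in the $e_j$'s, so injectivity of the coefficient map is not automatic.
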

% Similar to \Cref{cor:identifiability-of-the-cis-imitators}, we have the following result.
\begin{corollary} \label{cor:identifiability-of-the-cis-case2}
   Under the conditions of \Cref{lem:case2},
   the $(l+1)^{\text{th}}$ component of the vector $\bm \Delta \bm \pi_i$, for $i\in \mathcal{M}$, is
      identifiable if it is positive and the ordering of the positive payoff values of imitators is known.
\end{corollary}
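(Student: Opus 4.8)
The plan is to obtain the claim as a direct consequence of \Cref{lem:case2}, in exact parallel with how \Cref{cor:identifiability-of-the-cis-imitators} is read off from \Cref{lem:case1}; the fact that rationalist groups now also vaccinate is immaterial once the set of imitator payoffs is identified. \Cref{lem:case2} already guarantees that, under \Cref{ass:subintervals-vaccine-surplus-period} and \Cref{ass:two-subintervals-shortage-surplus}, the set $\mathcal{P}(t_l)$ of \emph{distinct} positive payoff values of the imitative groups on $[t_l,t_{l+1})$ is globally identifiable. Since the $(l+1)^{\text{th}}$ component of $\bm \Delta \bm \pi_i$ equals $\Delta \pi_i(t_l)$, and it is assumed positive, it is by definition one of the $n_l$ members of $\mathcal{P}(t_l)$.

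First I would convert the identifiable set into an identifiable tuple by fixing a canonical order: list the elements of $\mathcal{P}(t_l)$ increasingly as $v_1 < v_2 < \cdots < v_{n_l}$. Sorting is a fixed deterministic map, so applying it to an identifiable set yields identifiable entries $v_1,\ldots,v_{n_l}$. What remains is to decide which $v_r$ coincides with the value $\Delta \pi_i(t_l)$ carried by the designated group $i$.

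This is precisely where the ordering hypothesis enters. By assumption (as discussed via \Cref{remark:labeling}), for any pair of imitative groups with positive payoffs it is known which value is larger, so the rank of $\Delta \pi_i(t_l)$ among the positive imitator payoffs is known \emph{a priori}. Reading off the corresponding index $r$ gives $\Delta \pi_i(t_l) = v_r$, an identifiable quantity; hence the $(l+1)^{\text{th}}$ component of $\bm \Delta \bm \pi_i$ is globally identifiable.

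The only subtlety worth flagging is the presence of ties: distinct imitative groups may share a common positive payoff, so $\mathcal{P}(t_l)$ records only the distinct values and $n_l$ may be smaller than the number of imitative groups with positive payoff. This does not obstruct the argument, because the corollary asserts identifiability of the \emph{value} $\Delta \pi_i(t_l)$ rather than a separation of the groups sharing it; the known ordering still assigns each such group a unique $v_r$. I therefore expect no substantial obstacle beyond this bookkeeping, the entire analytic content having been front-loaded into \Cref{lem:case2}.
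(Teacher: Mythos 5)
Your proposal is correct and follows exactly the route the paper intends: the paper treats this corollary (like its twin, \Cref{cor:identifiability-of-the-cis-imitators}, whose proof is stated as ``immediate'') as a direct consequence of \Cref{lem:case2} — the set $\mathcal{P}(t_l)$ is identifiable, and the known ordering of positive imitator payoffs then pins down which member corresponds to group $i$. Your extra bookkeeping about sorting and ties is a faithful elaboration of that same argument, not a different approach.
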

% In \Cref{example:first}, the sets $\mathcal{P}(0)$ and $\mathcal{P}(20)$ are identifiable. If the ordering of the payoff values among the imitative seniors and non-seniors is known, then the values of $\Delta \pi_3(0)$, $\Delta \pi_4(0)$,  $\Delta \pi_3(20)$, and $\Delta \pi_4(20)$  are also identifiable.

Elementary symmetric polynomials in $n$ variables $c_1,c_2,\ldots,c_n$ are defined as $\sigma_j=0 $ for  $j<0$,  $\sigma_0=1$,
$\sigma_1 = \sum_{i=1}^n c_i$, $\sigma_2 = \sum_{1\leq j < k \leq n} c_jc_k$, and so forth, ending with $\sigma_n = c_1c_2\ldots c_n$. 
% Let $n \leq \vert \mathcal{M}\vert $ denote the number of  distinct positive  payoff values of imitators, i.e., $\vert \mathcal{P}_l \vert = n$ during the time interval $[t_l, t_{l+1})$.
Define 
\begin{equation} \label{eq:denominator-betas}
   {B}(t,t_l) = \sum_{k=0}^{n_l} (-1)^k \beta_k(t) \sigma_{n_l-k}(t_l),
\end{equation}
where $\sigma_0(t_l), \ldots, \sigma_{n_l}(t_l)$  are elementary symmetric polynomials in members of the set $\mathcal{P}(t_l)$ defined in \eqref{eq:set-of-payoff}.
\begin{corollary} \label{cor:identifiability-xbar-IM-R}
Under the conditions of \Cref{lem:case2}, the proportion of unvaccinated individuals who, during the time interval $[t_l, t_{l+1})$, have non-positive payoff values, 
$\bar{x}(t_l)$,
is identifiable if there exists some subinterval $\mathcal{L}\in[t_l,t_{l+1})$ such  for all $t\in\mathcal{L}$ we have ${B}(t,t_l)\neq 0.$
\end{corollary}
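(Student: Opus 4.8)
The plan is to collapse the whole system into one scalar algebraic identity relating the (already identifiable) data-derived functions to the unknown constant $\bar{x}(t_l)$, and then to strip off the last remaining unknown, the unvaccinated-rationalist mass $x'(t)$, by exploiting its known exponential dynamics.

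First I would assemble the moment relations. By \Cref{lem:ci-gammai-based-on-alphas-betas}, for every $k\in\Z_{>0}$ the power sum $S_k(t):=\sum_{i=1}^{n_l}\hat\pi_i^{k}(t_l)\hat{x}_i(t)$ equals $\alpha_k(t)-x'(t)\beta_k(t)$, while at $k=0$ the identity $\alpha_0=1-y=\sum_{i=1}^{n_l}\hat{x}_i+x'+\bar{x}$ gives the corrected relation $S_0(t)=\alpha_0(t)-x'(t)\beta_0(t)-\bar{x}(t_l)$, using $\beta_0=1$ from \eqref{eq:betas}. The $n_l$ distinct positive imitator payoffs are exactly the roots of $P(z)=\prod_{i=1}^{n_l}(z-\hat\pi_i(t_l))=\sum_{k=0}^{n_l}(-1)^k\sigma_k(t_l)z^{n_l-k}$, so multiplying each $P(\hat\pi_i(t_l))=0$ by $\hat{x}_i(t)$ and summing over $i$ annihilates the entire imitator contribution and yields $\sum_{m=0}^{n_l}(-1)^{n_l-m}\sigma_{n_l-m}(t_l)S_m(t)=0$, valid for all $t\in[t_l,t_{l+1})$. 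This is the one relation that exceeds the Vandermonde system of \Cref{lem:case2} and therefore carries the information about $\bar{x}(t_l)$.

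Next I would substitute the moment relations into this annihilation identity and separate the $\alpha$-, $x'$-, and $\bar{x}$-parts. Recognizing that $\sum_{k=0}^{n_l}(-1)^{k}\beta_k(t)\sigma_{n_l-k}(t_l)$ is exactly $B(t,t_l)$ from \eqref{eq:denominator-betas}, and defining the companion quantity $A(t,t_l):=\sum_{k=0}^{n_l}(-1)^k\alpha_k(t)\sigma_{n_l-k}(t_l)$, the whole expression collapses (after dividing by $(-1)^{n_l}$) to the scalar identity $A(t,t_l)-x'(t)B(t,t_l)=\sigma_{n_l}(t_l)\bar{x}(t_l)$. Here $\alpha_k,\beta_k$ are functions of the output and $\kappa$, which is identifiable by \Cref{lem:identifiability-kappa}, and the $\sigma_j(t_l)$ are symmetric functions of the set $\mathcal{P}(t_l)$, identifiable by \Cref{lem:case2}; hence $A$ and $B$ are identifiable functions of time, and $\sigma_{n_l}(t_l)=\prod_{i=1}^{n_l}\hat\pi_i(t_l)>0$. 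The only remaining unknowns are the constant $\bar{x}(t_l)$ and the time-varying $x'(t)$. The final step eliminates $x'$: since $\dot{x}'=-\kappa x'$ on this subinterval by \eqref{eq:xprime}, the quantity $e^{\kappa t}x'(t)$ is constant, so on $\mathcal{L}$, where $B\neq0$, solving $x'(t)=(A-\sigma_{n_l}\bar{x})/B$ and imposing constancy gives $e^{\kappa t}\big(A(t,t_l)-\sigma_{n_l}(t_l)\bar{x}(t_l)\big)/B(t,t_l)\equiv c$ for all $t\in\mathcal{L}$. Evaluating at two instants $t_1,t_2\in\mathcal{L}$ leaves a single linear equation in $\bar{x}(t_l)$, which I would solve explicitly; since every coefficient coincides for any two parameter vectors producing the same output, $\bar{x}(t_l)$ is identifiable in the sense of \Cref{def:identifiability}.

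The main obstacle I anticipate lies in the very last solvability point rather than in the moment algebra. The linear equation for $\bar{x}(t_l)$ degenerates precisely when $e^{-\kappa t}B(t,t_l)$ is constant on $\mathcal{L}$ (equivalently $B\propto e^{\kappa t}$), in which case the $x'$-term is itself constant and cannot be separated from $\bar{x}(t_l)$. I would argue that this forces an exact functional relation on the output that holds only on a measure-zero set of parameters, so that the stated nonvanishing condition $B\neq0$ on $\mathcal{L}$ suffices for identifiability in the ``almost any $\bm\theta'$'' sense of \Cref{def:identifiability}. Making this genericity argument precise, and confirming that the $B\equiv0$ situation is the trivial case where $A=\sigma_{n_l}\bar{x}$ already delivers $\bar{x}(t_l)$ directly, is where the care is needed.
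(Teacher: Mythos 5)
Your first two steps are sound and in fact reproduce the paper's key identity: the annihilation trick with $P(z)=\prod_i\bigl(z-\hat\pi_i(t_l)\bigr)$, together with the corrected $k=0$ moment relation, yields exactly $\sum_{k=0}^{n_l}(-1)^k\alpha_k(t)\sigma_{n_l-k}(t_l)-x'(t)B(t,t_l)=\sigma_{n_l}(t_l)\bar{x}(t_l)$, which is the paper's equation \eqref{eq:sigmaBar-R} (the paper obtains it by inverting the Vandermonde matrix rather than via the annihilator polynomial, but the content is identical). The divergence, and the genuine gap, is in your final elimination of $x'$. You eliminate $x'$ by exploiting $\dot{x}'=-\kappa x'$ and evaluating $e^{\kappa t}\bigl(A(t,t_l)-\sigma_{n_l}\bar{x}\bigr)/B(t,t_l)$ at two instants; this linear solve degenerates exactly when $e^{-\kappa t}B(t,t_l)$ is constant on $\mathcal{L}$, and in that case one checks that $A(t,t_l)$ is also constant there, so the single relation $A=\sigma_{n_l}\bar{x}+c\,c_0$ (with $c=e^{\kappa t}x'$ and $c_0=e^{-\kappa t}B$ both constants, $c$ unknown) is one equation in two unknowns and cannot separate $\bar{x}$ from $c$. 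Your proposed fix is a genericity argument, but you do not carry it out, and it is not a routine step: the degeneracy is a functional condition on the output trajectory (hence on the solution of a nonlinear ODE), not an algebraic condition on $\bm\theta$, so showing that the offending parameter set has measure zero requires a real argument that is absent. Note also that the corollary's hypothesis is only $B(t,t_l)\neq 0$ on $\mathcal{L}$; it does not grant you the extra non-degeneracy your method needs.

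The paper closes this hole by using strictly more information than your construction does. It substitutes $x'=\bigl(A-\sigma_{n_l}\bar{x}\bigr)/B$ and the expressions \eqref{eq:gamma-im-br} for the $\hat{x}_i$ into the $(n_l+1)$-th derivative relation \eqref{eq:proof_of_y_n1}, obtaining the generalized input--output equation \eqref{eq:IO-4-R-M}, in which the order-$(n_l+1)$ derivative of the output appears. Then, using the structural lemmas on $f_{(j,k)}$ (\Cref{lem:f-j-k-derivative-output}, \Cref{lem:f-j-j+k-derivative-output}, \Cref{cor:alphas-M-R-have-derivatives}), it shows that the monomial $y^m$ occurs only with coefficient $(-1)^{n+1}\kappa^{n+1}\sigma_n(1-\bar{x})$, and invokes \Cref{thm:Eisenberg} (identifiability of input--output coefficients) together with the already-established identifiability of $\kappa$ and $\sigma_n$ to conclude. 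Your approach only exploits derivatives up to order $n_l$ (through $\alpha_k,\beta_k$) plus two-point evaluation, and it is precisely the $(n_l+1)$-th derivative information that resolves the degenerate case. To repair your proof you would either need to complete the genericity argument or fall back on the paper's input--output route for the degenerate configuration.
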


% It can be checked that the condition in \Cref{cor:identifiability-xbar-IM-R} is satisfied and, consequently, the proportions of vaccine-refusers, $\rho_5 + \rho_6$, are identifiable. Azadeh: checking this conditions requires having the output and its higher order derivative ...
Thus far, we have investigated the identifiability of the payoff values of the groups within a subinterval.
For the imitative groups it has been shown that the set of positive payoff values for each subinterval $[t_l,t_{l+1})$, $\mathcal{P}(t_l)$, is identifiable.
However, revealing the sets are not very informative unless we can associate their members over $[0,T)$.
One may assume \textit{a priori} known ordering of the payoff values of the imitative groups.
Otherwise, in the sequel we provide sufficient conditions for concatenation of the imitative payoff values throughout the entire $[t^*,T)$ and, in turn, $[0,T)$.
Denote by $\mathtt{r}(t_l)$ the number of rationalist groups with positive payoff values for the time interval $[t_l, t_{l+1})$.
 As it is shown in the proof of \Cref{lem:case2}, 
 for $\mathtt{r}(t_l) \neq 0$, the proportion of unvaccinated rationalists with positive payoff values for the time interval $[t_l,t_{l+1})$, $l \in \{0,1,\ldots,\s-1\}$, is equal to:
 \begin{equation} \label{eq:xprime-mainText}
    x'(t) =  \frac{\sum_{k=0}^{n_l} (-1)^k \alpha_k(t) \sigma_{n_l-k}(t_l) -\sigma_{n_l}(t_l) \bar{x}(t_l)}{{B}(t,t_l) },
 \end{equation}
 provided that    ${B}(t,t_l) \neq 0$.
The proportion of unvaccinated imitators with positive payoff value $\hat{ \pi}_i(t_l) \in \mathcal{P}(t_l)$ for $i \in \{1,\ldots,n_l\}$ and the time interval $[t_l, t_{l+1})$ also equals:
 \begin{equation} \label{eq:gamma-im-br-main-text}
 \scalebox{0.93}{$
    \begin{aligned}
        \hat{x}_i& (t)=   \frac{1}{{\!\displaystyle\prod_{j \neq i} }\big(\hat{\pi}_i(t_l) - \hat {\pi}_j(t_l)\big)}  \Bigg( \!\sum_{k=1}^{n_l+1} \hat{ \pi}_i^{n_l-k}(t_l) \bigg( \!\sum_{q=0}^{n_l} (-1)^{k+1+q}\alpha_q(t_l) \sigma_{k-1-q}(t_l) \bigg)\Bigg) \\
        &  - \1\big(\mathtt{r}(t_l)\big) x'(t_l) \frac{1}{{\displaystyle\prod_{j \neq i} }\big(\hat{\pi}_i(t_l) - \hat{\pi}_j(t_l)\big)}  
         \times \Bigg( \!\sum_{k=1}^{n_l+1} \hat {\pi}_i^{n_l-k}(t_l) \bigg( \!\sum_{q=0}^{n_l} (-1)^{k+1+q}\beta_q(t_l) \sigma_{k-1-q}(t_l) \bigg)\Bigg).
    \end{aligned}
    $}
\end{equation}
For two vectors of the same dimension, let
$\max (\bm v_1, \bm v_2)$ denote the vector obtained by applying the element-wise maximum between the corresponding components of $\bm v_1$ and $\bm v_2$.
Define 
\begin{equation}
    \begin{aligned}
        \bm {\mathcal{N}} = \big\{\big( x_i(t^*), \max (\bm \Delta \bm \pi_i, \bm 0)\big) \vert 
        \exists i\in \mathcal{M}, l \in \{0,\ldots,\s-1\}, \text{ s.t. } \Delta \pi_i(t_l) > 0  \big\}.
    \end{aligned}
\end{equation}

Each member of the set $ \bm {\mathcal{N}}$ is a tuple where the first element is the proportion of unvaccinated individuals at $t^*$ in an imitative group, and the second element is the vector of payoff values of that group, with  negative components set to zero.
If we can find the members of the set $\bm {\mathcal{N}}$, the trajectories of the  payoff values and the unvaccinated proportions of individuals in imitative groups will be revealed.
In the following, we provide sufficient conditions for this. 
% under which the set $\bm {\mathcal{N}}$ is retrievable using \eqref{eq:xprime-mainText} and \eqref{eq:gamma-im-br-main-text}. 
\begin{proposition} \label{lem:gi-xstar}
    Assume that
     for each subinterval $[t_l, t_{l+1})$, $l \in \{0,1,\ldots,\s-1\}$,
    \emph{(i)} 
   Of the imitative groups, $n_l$
  have distinct positive payoffs, while the rest have non-positive payoffs,
\emph{(ii)}
if
at least one rationalist group has a positive payoff value and $n_l>0$, then
 ${B}(t_l,t_l) \neq 0,$
and
 \emph{(iii)} if $n_l >0$, for $i,j \in \{1,2,\ldots,n_l\}$, $j \neq i$, we have
$\hat{x}_i(t_l) \neq \hat{x}_j(t_l)$.
Then, under \Cref{ass:subintervals-vaccine-surplus-period} and \Cref{ass:two-subintervals-shortage-surplus}, the
set  $\bm {\mathcal{N}}$
is identifiable.
\end{proposition}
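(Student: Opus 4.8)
The plan is to reduce the identifiability of the whole set $\bm{\mathcal{N}}$ to the per-subinterval results already established, and then to glue the per-subinterval information into full group trajectories using continuity of the state and the freezing of dormant imitators. First I would collect the identifiable ingredients on each subinterval $[t_l,t_{l+1})$. By \Cref{lem:identifiability-kappa}, $\kappa$ is identifiable. For a subinterval with $n_l>0$, the set of positive imitator payoffs $\mathcal{P}(t_l)$ is identifiable by \Cref{lem:case1} when no rationalist is active and by \Cref{lem:case2} when one is, whose hypotheses follow from conditions (i)--(ii). Condition (ii) gives $B(t_l,t_l)\neq 0$, so by continuity $B(t,t_l)\neq 0$ on a right-neighborhood of $t_l$, whence $\bar{x}(t_l)$ and $x'(t)$ are identifiable through \Cref{cor:identifiability-xbar-imitators}, \Cref{cor:identifiability-xbar-IM-R}, and \eqref{eq:xprime-mainText}. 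Substituting these into \eqref{eq:gamma-im-br-main-text} makes every $\hat{x}_i(t_l)$ identifiable, and since condition (i) forces each positive payoff to be attained by exactly one imitative group, $\hat{x}_i$ coincides with the unvaccinated proportion $x_{j}$ of that single group. Thus the output determines, for each subinterval, the finite collection of pairs $\big(\hat{\pi}_i(t_l),\hat{x}_i(t_l)\big)$ of active imitative groups together with the aggregate $\bar{x}(t_l)$.

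Next I would stitch these pairs across the boundaries $t_1,\ldots,t_{\s-1}$ into trajectories on $[t^*,T)$. The two structural facts I rely on are that each $x_j(\cdot)$ is continuous at every boundary and that an imitative group with non-positive payoff satisfies $\dot{x}_j=0$ by \eqref{eq:dotGamma}, so its proportion is frozen while it is dormant. Processing boundaries left to right, at $t_{l+1}$ the recovered start values $\{\hat{x}_i(t_{l+1})\}$ are pairwise distinct by condition (iii); I match each to the unique active group of subinterval $l$ whose trajectory, obtained by integrating $\dot{\hat{x}}_i=-\kappa\,\hat{x}_i\,y\,\hat{\pi}_i$ from its recovered start value, has left-limit at $t_{l+1}$ equal to that start value. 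A start value admitting no such match corresponds to a newly activated group, and an active group of subinterval $l$ admitting no continuation has deactivated and frozen. This produces, for each imitative group that is ever active, a full trajectory labeled by its payoff on every subinterval.

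From the reconstructed trajectories the tuples of $\bm{\mathcal{N}}$ follow immediately: the vector $\max(\bm\Delta\bm\pi_i,\bm 0)$ is read off as the recovered positive payoff on each active subinterval and $0$ on each dormant one, while $x_i(t^*)$ equals $\hat{x}$ at $t^*$ if the group is active on the first subinterval and, by the freezing property, equals its recovered start value at its first active subinterval otherwise. Since every ingredient is a function of the output alone, the set $\bm{\mathcal{N}}$ is identifiable in the sense of \Cref{def:identifiability}.

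I expect the stitching step to be the main obstacle. Continuity forces a continuing group's start value to equal its previous left-limit, and condition (iii) keeps the start values within a subinterval distinct, but nothing in (i)--(iii) a priori prevents a newly activated group from inheriting a frozen value that coincides with the left-limit of a simultaneously deactivating group, which would make the match across $t_{l+1}$ ambiguous. I would dispose of this by observing that such equalities among the continuously-varying boundary values carve out only a measure-zero subset of the parameter space, so that the reconstruction is unambiguous for almost every parameter vector consistent with the output, which is precisely the ``almost any $\bm\theta'$'' tolerance built into \Cref{def:identifiability}.
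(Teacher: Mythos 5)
Your proposal is correct and follows essentially the same route as the paper's own proof: per-subinterval identification of the positive imitator payoffs and start values $\hat{x}_i(t_l)$ via \Cref{lem:case1}/\Cref{lem:case2} together with conditions (i)--(iii), followed by left-to-right stitching across the boundaries using continuity of the states and the freezing of dormant imitators (the paper phrases this as matching the sets $\mathcal{Q}_l$ of start values against the sets $\mathcal{S}_l$ of left-limits). Your closing observation about a possible matching ambiguity when a newly activated group's frozen value coincides with the left-limit of a simultaneously deactivating group---which you resolve through the ``almost any $\bm\theta'$'' clause of \Cref{def:identifiability}---is a point the paper's proof passes over silently, so your treatment is, if anything, slightly more careful on that step.
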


% \begin{corollary} \label{cor:xprime-xstar}
%     Under the conditions of \Cref{lem:gi-xstar}, the trajectory of the proportion of unvaccinated rationalist individuals 
%     with positive payoff value
%     over the time interval $[t^*,T)$
%     is retrievable.
% \end{corollary}
% \begin{corollary} \label{cor:pi-xstar}
%     Under the conditions of \Cref{lem:gi-xstar}, the trajectories of the positive payoff values of non-vaccine refusing imitative groups
%     are retrievable over the time interval $[t^*,T).$
% \end{corollary}

% Assume that there are
%  $n^*$ non-vaccine refusing imitative groups. 
% \Cref{lem:gi-xstar} provides sufficient conditions under which the trajectories of the proportions of unvaccinated individuals in these $n^*$ groups over the time interval $[t^*,T)$  are retrievable. 
% Accordingly, in view of, the trajectories of the associated payoff values with these $n^*$ , when they are positive, are retrievable.
% This implies that we can identify a set of $n^*$ tuples, denoted by $\mathcal{N}$, where each tuple consists of two trajectories over the time interval $[t^*,T)$: one for the proportion of unvaccinated individuals in one of the $n^*$ groups, and the other one for  the payoff function of that group.
\subsection{Identifiability analysis of the  proportions of the groups} \label{subsec:identifiability-proportions}
The  proportion of unvaccinated individuals in group $i$, $i \in \{1,2,\ldots,\n\},$ at time $T_0 = 0$ equals the  proportion of this group, $\rho_i$.
In this section, we investigate the identifiability of the proportions $\rho_i$, $i \in \{1,2,\ldots, \n\},$ in two steps.
First, we provide sufficient conditions for the identifiability of the proportions of unvaccinated individuals in each group at vaccine abundance time $x_i(t^*)$ for $i \in \{1,\ldots,\n \}$.
Then, provided that the vector of these proportions $\x(t^*)$ is identifiable, the identifiability of its initial value $\x(0) = (\rho_1,\ldots, \rho_{\n})^\top$ is investigated.
 \subsubsection{Identifiability analysis of the proportions of unvaccinated individuals at $t^*$} \label{sub:Identifiability analysis of the proportions of unvaccinated individuals at}
  We would like to investigate the identifiability of the proportion of unvaccinated individuals in each group at  vaccine abundance time $x_i(t^*)$, for $i \in \{1,2,\ldots,\n\}$.
  To do so, we write the proportions of unvaccinated individuals at time $t$ in terms of their values at vaccine abundance time, the output and its higher-order derivatives.
Define matrix $\matr{A}$ as follows:
    \begin{equation} \label{eq:MatrixA-R-M}
\matr{A} =
    \begin{pmatrix}
       \eta_{1}(\Tilde{t}_1) & \eta_{2}(\Tilde{t}_1)  & \cdots & \eta_{\n}(\Tilde{t}_1)  \\
       \vdots & \vdots & \ddots & \vdots \\
        \eta_{1}(\Tilde{t}_m) & \eta_{2}(\Tilde{t}_m)  & \cdots & \eta_{\n}(\Tilde{t}_m)
    \end{pmatrix},
\end{equation}
 where $\Tilde{t}_j \in [t^*,T)$,  $j = 1,2,\ldots, m$, and 
\begin{equation} \label{eq:Definition-of-eta}
\eta_k(t) = 
    \begin{cases}
        \exp \bigg(-\kappa \displaystyle\sum_{j=0}^{q(t)} (t_j - t_{j-1}) \1 \big(\Delta \pi_k(t_{j-1})\big) \\
      \displaystyle   \qquad \quad -\kappa(t - t_{q(t)})\1\big(\Delta \pi_k(t_{q(t)})\big)  \bigg), & \text{ if } k \in \mathcal{R}, \\
        \exp \bigg(-\kappa \displaystyle\sum_{j=0}^{q(t)} \big(\Delta \pi_k(t_{j-1}) \1 \big(\Delta \pi_k(t_{j-1})\big)\int_{t_{j-1}}^{t_{j}} y(\tau)d\tau \big) \\ \qquad \quad  -\kappa
    \displaystyle \big(\Delta \pi_k(t_{q(t)}) \1\big(\Delta \pi_k(t_{q(t)})\big)\int_{t_{q(t)}}^{t} y(\tau)d\tau \big)
     \bigg),  & \text{ if } k \in \mathcal{M},
    \end{cases}
\end{equation}
where $t \in [t^*,T)$,       $ q(t) = \max \{ j \vert t \geq t_j \},$ and $t_j = t^*$ for $j <0$.

The $i^{\text{th}}$ row of matrix $\matr{A}$ corresponds to a time instant $\Tilde{t}_i$ in $[t^*,T)$, and the $k^{\text{th}}$ column of matrix $\matr{A}$ corresponds to group $k$, for $k \in \{1,2,\ldots, \n\}$.
\begin{lemma} \label{lem:ij-entry-matrix-A}
Under Assumptions \ref{ass:subintervals-vaccine-surplus-period} and \ref{ass:two-subintervals-shortage-surplus},
    the $(i,k)$ entry of matrix $\matr{A}$ equals the ratio of the proportion of unvaccinated individuals in group $k$ at time instant $\Tilde{t}_i$ to that at vaccine abundance time, i.e., $x_k(\Tilde{t}_i)/x_k(t^*)$.
\end{lemma}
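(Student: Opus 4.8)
The plan is to solve the group dynamics \eqref{eq:dotGamma} in closed form on the surplus period $[t^*,T)$ and to recognize the result as $\eta_k$. First I would invoke \Cref{ass:two-subintervals-shortage-surplus}: for every $t\ge t^*$ the demand falls short of the supply, so the denominator inside the minimum in \eqref{eq:dotGamma} is smaller than $u(t)$, the argument of the minimum exceeds one, and $\min\{1,\cdot\}=1$. This removes the coupling through the shared denominator, so each $x_k$ obeys a scalar equation. Combined with \Cref{ass:subintervals-vaccine-surplus-period}, which makes $\Delta\pi_k$ constant on each subinterval $[t_{j-1},t_j)\subseteq[t^*,T)$, the dynamics reduce to
\begin{equation*}
\dot x_k(t)=
\begin{cases}
-\kappa\,\1\big(\Delta\pi_k(t_{j-1})\big)\,x_k(t), & k\in\mathcal{R},\\
-\kappa\,\Delta\pi_k(t_{j-1})\,\1\big(\Delta\pi_k(t_{j-1})\big)\,y(t)\,x_k(t), & k\in\mathcal{M},
\end{cases}
\end{equation*}
where on $[t_{j-1},t_j)$ the payoff equals $\Delta\pi_k(t_{j-1})$ (consistent with $t_0=t^*$ and $\Delta\pi_k(t^*)=\Delta\pi_k(T_0)$ since $t^*\in[T_0,T_1)$).

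Second, each of these is a separable scalar linear ODE. Integrating over a full subinterval $[t_{j-1},t_j)$, the rationalist equation contributes the multiplicative factor $\exp\big(-\kappa(t_j-t_{j-1})\1(\Delta\pi_k(t_{j-1}))\big)$, while the imitative equation contributes $\exp\big(-\kappa\,\Delta\pi_k(t_{j-1})\1(\Delta\pi_k(t_{j-1}))\int_{t_{j-1}}^{t_j}y(\tau)\,d\tau\big)$; the time-varying modulation $y(t)$ is common to all imitative groups and therefore enters only through its integral.

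Third, I would chain these updates from $t^*$ to an arbitrary $t\in[t_{q(t)},t_{q(t)+1})$. The full subintervals $[t_0,t_1),\ldots,[t_{q(t)-1},t_{q(t)})$ produce the sum over $j=1,\ldots,q(t)$, the $j=0$ summand vanishing because $t_{-1}=t_0=t^*$ yields an empty interval; the terminal partial subinterval $[t_{q(t)},t)$ produces the extra term $-\kappa(t-t_{q(t)})\1(\Delta\pi_k(t_{q(t)}))$ for rationalists and its integral analogue for imitators. Collecting exponents reproduces exactly \eqref{eq:Definition-of-eta}, giving $x_k(t)=\eta_k(t)\,x_k(t^*)$; since $\rho_k>0$ ensures $x_k(t^*)>0$, evaluating at $t=\tilde t_i$ yields $\eta_k(\tilde t_i)=x_k(\tilde t_i)/x_k(t^*)$, as claimed.

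The computation itself is routine integration, so the only real work is the bookkeeping across subinterval boundaries and matching the index conventions ($t_0=t^*$, $t_j=t^*$ for $j<0$, $q(t)=\max\{j\mid t\ge t_j\}$). The mild subtlety I would flag is the imitator case: because $y$ is not constant, the per-subinterval factor is an exponential of $\int y$ rather than of elapsed time, but since $y$ enters every imitative group's dynamics in the same way, it factors out cleanly and the identical chaining argument applies.
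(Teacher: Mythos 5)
Your proof is correct and follows essentially the same route as the paper: both reduce \eqref{eq:dotGamma} to decoupled separable scalar ODEs on the surplus period (using \Cref{ass:two-subintervals-shortage-surplus} to make $\min\{1,\cdot\}=1$ and \Cref{ass:subintervals-vaccine-surplus-period} for piecewise-constant payoffs), integrate subinterval by subinterval, and chain the exponential factors back to $t^*$ to recognize $x_k(\tilde t_i)=\eta_k(\tilde t_i)\,x_k(t^*)$. The only cosmetic difference is that you state explicitly the decoupling step and the positivity of $x_k(t^*)$, which the paper leaves implicit.
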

%In previous sections, we provided the conditions upon which the parameters appearing in the entries of matrix $\matr{A}$, i.e., the rate of vaccination and the payoff values at different time instances, are identifiable.
%On the other hand, 
 
In view of \Cref{lem:ij-entry-matrix-A}, the dot product of the $i^{\text{th}}$ row of matrix $\matr{A}$ and $\x(t^*)$ equals the proportion of unvaccinated individuals at time instant $\Tilde{t}_i$, i.e., $1-y(\Tilde{t}_i)$, which is known.
Thus, the identifiability of the proportions of unvaccinated individuals in each group at vaccine abundance time $\x(t^*)$ stems from matrix $\matr{A}$ having identifiable entries and being of full rank. 
The first condition requires information on the signs of all payoffs and the ordering of the positive payoff values of imitators.
 \begin{proposition} \label{prop:identifiability-of-x-tstar}
    Under Assumptions \ref{ass:subintervals-vaccine-surplus-period} and \ref{ass:two-subintervals-shortage-surplus},
     assume that there exist
     $m \geq \n$ time instants $\Tilde{t}_j \in [t^*, T)$ 
     such that the signs of the payoff values of all $\n$ groups and the ordering of the positive payoff values of imitative groups are known for all $t \leq \Tilde{t}_j$.
     Then,
    the proportions of unvaccinated individuals in each group at vaccine abundance time $t^*$ are globally identifiable if  matrix $\matr{A}$, defined in \eqref{eq:MatrixA-R-M},  is of full rank.
 \end{proposition}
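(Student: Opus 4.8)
The plan is to recast the identifiability question as the solution of a linear system whose data are read off from the output. By \Cref{lem:ij-entry-matrix-A}, the $(i,k)$ entry of $\matr{A}$ equals $x_k(\Tilde{t}_i)/x_k(t^*)$, so the inner product of the $i^{\text{th}}$ row of $\matr{A}$ with $\x(t^*)$ is $\sum_{k=1}^{\n} x_k(\Tilde{t}_i) = 1 - y(\Tilde{t}_i)$. Collecting all $m$ rows, $\x(t^*)$ satisfies $\matr{A}\,\x(t^*) = \bm{b}$ with $\bm{b} = \big(1-y(\Tilde{t}_1),\ldots,1-y(\Tilde{t}_m)\big)^\top$, which is obtained directly from the recorded output and is therefore known. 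First I would record this reduction and note that the whole argument amounts to showing that the pair $(\matr{A},\bm{b})$ is fixed by the output and that it determines $\x(t^*)$ uniquely.

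Next I would verify that every entry of $\matr{A}$ is identifiable, i.e.\ takes the same value for any admissible parameter vector producing the same output. The parameter $\kappa$ is globally identifiable by \Cref{lem:identifiability-kappa}. For a rationalist column $k \in \mathcal{R}$, the expression \eqref{eq:Definition-of-eta} for $\eta_k$ depends only on $\kappa$, the a priori known switching times, and the indicators $\1\big(\Delta \pi_k(\cdot)\big)$, which are fixed by the assumed knowledge of the payoff signs; hence such entries are identifiable. For an imitator column $k \in \mathcal{M}$, the exponent additionally involves the positive payoff values $\Delta\pi_k(t_{j-1})$ and the integrals $\int y(\tau)\,d\tau$; the latter are computable from the observed output, while the former are identifiable by \Cref{lem:case1} and \Cref{lem:case2}, and the assumed knowledge of the ordering of the positive imitator payoffs (cf.\ \Cref{cor:identifiability-of-the-cis-imitators} and \Cref{cor:identifiability-of-the-cis-case2}) is precisely what lets me assign each identified value to the correct column $k$. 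Thus all entries of $\matr{A}$ are identifiable, so $\matr{A}$ is the same matrix for any two parameter vectors generating the same output.

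Finally, since $m \geq \n$ and $\matr{A}$ is assumed to have full (column) rank, the system $\matr{A}\,\x(t^*)=\bm{b}$ has a unique solution, given explicitly by $\x(t^*) = (\matr{A}^\top \matr{A})^{-1}\matr{A}^\top \bm{b}$. Consequently, if $\bm\theta$ and $\bm\theta'$ produce identical outputs on $[0,T)$, they induce the same $\matr{A}$ and the same $\bm{b}$, and therefore the same $\x(t^*)$, which establishes global identifiability of the proportions of unvaccinated individuals in each group at $t^*$.

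The main obstacle is the middle step: confirming that all entries of $\matr{A}$ are genuinely pinned down by the output. The delicate point is not the linear algebra---full rank is assumed---but the bookkeeping that ties each identified positive imitator payoff to its group, which hinges on the known-ordering hypothesis, together with the fact that the sign information fixes every indicator $\1\big(\Delta\pi_k(\cdot)\big)$ appearing in \eqref{eq:Definition-of-eta}. Once these are in place, identifiability of $\kappa$ and observability of the output integrals make each $\eta_k(\Tilde{t}_i)$ a known quantity, and the conclusion follows immediately.
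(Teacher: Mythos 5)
Your proposal is correct and follows essentially the same route as the paper's own proof: form the linear system $\matr{A}\,\x(t^*)=\bm b$ with $b_j = 1-y(\Tilde{t}_j)$ via \Cref{lem:ij-entry-matrix-A}, argue that every entry of $\matr{A}$ is pinned down by the output (identifiability of $\kappa$ from \Cref{lem:identifiability-kappa}, identifiability of the positive imitator payoffs from \Cref{lem:case1} and \Cref{lem:case2}, with the sign and ordering hypotheses fixing the indicators and the assignment of payoffs to columns), and conclude uniqueness of $\x(t^*)$ from the full-rank assumption. The only cosmetic difference is that you make the unique solution explicit via the normal equations $\x(t^*) = (\matr{A}^\top\matr{A})^{-1}\matr{A}^\top\bm b$, which the paper leaves implicit.
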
 
 \begin{remark} \label{rem:multiple-refusers-rationalists}
  In view of \Cref{lem:ij-entry-matrix-A}, all elements in the columns associated with vaccine refusing groups are one.
Hence, in the presence of multiple vaccine-refusing groups, matrix 
$\matr{A}$ is not full rank.
In this case, we can aggregate these groups and replace them with a single vaccine-refusing group whose proportion equals the sum of the proportions of all the vaccine-refusing groups.
A similar reasoning holds  for rationalist groups where the corresponding components of their payoff vectors have the same signs.
 \end{remark}
 %  \Cref{prop:identifiability-of-x-tstar} 
 %  requires \textit{a priori} known ordering 
 %   of the positive payoff values among imitators over $m>1$ subintervals.  
 % In the sequel, we provide an alternative condition.

%and, in turn, the proportions of unvaccinated individuals in these groups at vaccine abundance time instant.
%However, associating each tuple to each of these $n^*$ groups requires  additional information.
% Otherwise, the set consisting of the proportions of unvaccinated individuals in these $n^*$ groups, denoted by $\mathcal{N}$, is identifiable.

%On the other hand, it is straightforward to show that the proportions of \textit{vaccine-refusing}  groups, i.e., those groups
%with non-positive payoff values over the interval $[t^*,T)$, can not be disentangled from each other.
What if we do not know the ordering of the positive payoff values of the imitators? In the following, we provide an alternative. 
Recall that the cardinality of the set $\bm{\mathcal{N}}$, denoted by $\vert \bm{\mathcal{N}} \vert$, equals  the number of non-vaccine  refusing imitative groups.
Now, construct a matrix  $\matr{A}^*$ similar to matrix $\matr{A}$, but with $\n-\vert \bm{\mathcal{N}} \vert $ columns, where each column
is
associated with either a rationalist group or  the vaccine-refusing  group.
% Denote the resultant matrix by $\matr{A}^*.$
\begin{proposition} \label{prop:identifiability-x-star-2}
    Assume  there
    are $\vert \bm{\mathcal{N}} \vert$ non-vaccine refusing imitative groups and
     $m \geq \n - \vert \bm{\mathcal{N}} \vert$ time instants $\Tilde{t}_j \in [t^*, T)$, $j = 1,2,\ldots,m$, 
     such that the signs of the payoff values of all rationalist groups are known for all $t \leq \Tilde{t}_j$.
     Then, under the conditions of \Cref{lem:gi-xstar},
    the proportions of unvaccinated individuals in $\n$ groups at vaccine abundance time $t^*$ are globally identifiable if  matrix $\matr{A}^*$   is of full rank,
    the mapping from the elements of the set $\bm{\mathcal{N}}$ to the elements of the set $\mathcal{M}$ is a known function,
    and Assumptions \ref{ass:subintervals-vaccine-surplus-period} and \ref{ass:two-subintervals-shortage-surplus} are satisfied.
\end{proposition}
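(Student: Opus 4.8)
The plan is to reduce the identification of the full vector $\x(t^*)$ to a linear system whose only unknowns are the proportions of the rationalist and vaccine-refusing groups, after the imitative contributions have been peeled off using the previously identified set $\bm{\mathcal{N}}$. First I would invoke \Cref{lem:gi-xstar}: under its hypotheses (conditions \emph{(i)}--\emph{(iii)}), the set $\bm{\mathcal{N}}$ is identifiable, so for each of the $\vert\bm{\mathcal{N}}\vert$ non-vaccine-refusing imitative groups we know both its proportion of unvaccinated individuals at $t^*$, namely the first entry $x_i(t^*)$ of the corresponding tuple, and its truncated payoff trajectory $\max(\bm\Delta\bm\pi_i,\bm 0)$. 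The assumed known map from $\bm{\mathcal{N}}$ to $\mathcal{M}$ then attaches each of these tuples to its group label, so the imitative part of $\x(t^*)$ is already pinned down.

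Next I would evaluate the output identity $1-y(\tilde t_j)=\sum_{k=1}^{\n}\eta_k(\tilde t_j)\,x_k(t^*)$ furnished by \Cref{lem:ij-entry-matrix-A} at each of the $m$ sampling times $\tilde t_j$, and split the sum into the imitative non-refusing groups and the remaining $\n-\vert\bm{\mathcal{N}}\vert$ groups. The key observation is that each imitative coefficient $\eta_k(\tilde t_j)$ in \eqref{eq:Definition-of-eta} is computable from quantities already in hand: since $\Delta\pi_k\,\1(\Delta\pi_k)=\max(\Delta\pi_k,0)$, the exponent is built only from the payoff trajectory supplied by $\bm{\mathcal{N}}$, the parameter $\kappa$ (identifiable by \Cref{lem:identifiability-kappa}), and the integrals $\int y(\tau)\,d\tau$ of the observed output. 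Subtracting this now-known imitative contribution from $1-y(\tilde t_j)$ yields a quantity $b_j$ that equals $\sum_{k}\eta_k(\tilde t_j)\,x_k(t^*)$ summed only over the rationalist and vaccine-refusing groups.

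I would then recognize the resulting coefficients as the entries of $\matr{A}^*$: for a rationalist column $\eta_k$ depends in \eqref{eq:Definition-of-eta} only on $\kappa$, the assumed-known signs of that group's payoffs, and the subinterval lengths, all known; for the aggregated (per \Cref{rem:multiple-refusers-rationalists}) vaccine-refusing column every entry is $1$. Hence $\matr{A}^*$ has identifiable entries and the stacked equations read $\matr{A}^*\,\x_{\mathrm{rest}}(t^*)=\bm b$, an $m\times(\n-\vert\bm{\mathcal{N}}\vert)$ system with $m\ge \n-\vert\bm{\mathcal{N}}\vert$; full rank of $\matr{A}^*$ forces a unique solution for the rationalist and vaccine-refusing proportions, which together with the imitative proportions from the first step identifies all of $\x(t^*)$. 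The step I expect to be the main obstacle is establishing rigorously that the imitative coefficients are genuinely recoverable from $\bm{\mathcal{N}}$ rather than circularly assumed: one must verify that $\max(\bm\Delta\bm\pi_i,\bm 0)$ together with $\kappa$ and the output integrals determines $\eta_i(\tilde t_j)$ exactly, and that the known $\bm{\mathcal{N}}\to\mathcal{M}$ map resolves any labeling ambiguity when several imitative groups happen to share the same proportion and payoff trajectory.
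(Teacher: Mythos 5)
Your proposal is correct and takes essentially the approach the paper intends: the paper omits this proof entirely, stating only that it is ``similar to those of \Cref{lem:gi-xstar} and \Cref{prop:identifiability-of-x-tstar},'' and your argument---invoke \Cref{lem:gi-xstar} to identify $\bm{\mathcal{N}}$, use the known map to $\mathcal{M}$ to pin down the imitative proportions, subtract the now-computable imitative contributions (noting $\Delta\pi_k\1(\Delta\pi_k)=\max(\Delta\pi_k,0)$ is exactly what $\bm{\mathcal{N}}$ stores) from $1-y(\Tilde{t}_j)$, and solve the reduced system $\matr{A}^*\x_{\mathrm{rest}}(t^*)=\bm b$ using full rank---is precisely that combination, spelled out. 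Your closing concern about circularity is not a gap, since the entries of $\matr{A}^*$ involve only $\kappa$, known payoff signs, and output integrals, while the imitative coefficients never enter $\matr{A}^*$ at all.
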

\subsubsection{
 Identifiability of $\rho_i$}
% In what follows, it is assumed that 
% the proportions of unvaccinated individuals in each group at vaccine abundance time
% $\x(t^*) = \big(x_1(t^*), \ldots, x_\n(t^*)\big)$
% are globally identifiable,
% and 
% the positive payoff values of the imitative groups for the time interval $(t^*, t_1)$ are globally identifiable.

Define $\mathcal{X}$ as
$
\prod_{j=1}^{{\n}}(0,1).
$
During the vaccine shortage period $t \in [0,t^*)$, the evolution of $\x(t) = \big(x_1(t), \ldots, x_\n(t)\big)$ reads as follows
\begin{equation}\label{eq:gammaVector}
\begin{aligned} 
 &\dot{\bm x}(t) = \bm h(t, \bm x),\\
 \bm h:& [0,t^*) \times \mathcal{X} \to \mathcal{X}, %\mathbb{R}^{\n},
\end{aligned}
\end{equation}
where the $i^{\text{th}}$ component of $\bm h$, $\dot{x}_i$, equals
\begin{gather} \label{eq:dotGammaBRandIM}
 \dot{x}_i(t) =
    \begin{cases}
   \frac{-\kappa x_i(t) \1\big(\Delta \pi_i(t)\big)u(t)}{\sum_{j \in \mathcal{M}} x_j(t)\big(1 - \sum_{q=1}^\n x_q(t) \big)\Delta \pi_j(t)\1\big(\Delta \pi_j(t)\big) + \sum_{j \in \mathcal{R} }x_j(t) \1\big(\Delta \pi_j(t)\big)}, & \text{if } i \in \mathcal{R},\\
   \frac{-\kappa x_i(t) \big(1 - \sum_{q=1}^\n x_q(t) \big)\Delta \pi_i(t)\1\big(\Delta \pi_i(t)\big)u(t)}{\sum_{j \in \mathcal{M}} x_j(t)\big(1 - \sum_{q=1}^\n x_q(t) \big)\Delta \pi_j(t)\1\big(\Delta \pi_j(t)\big) + \sum_{j \in \mathcal{R}}x_j(t) \1\big(\Delta \pi_j(t)\big)}, & \text{if } i \in \mathcal{M}.
    \end{cases}
\end{gather}
The value of $x_i(t)$ at $t=0$ equals $\rho_i$.
If the value of $x_i(t^*)$  is identifiable, 
%if the conditions asserted in \Cref{prop:identifiability-of-x-tstar}  (resp. \Cref{prop:identifiability-x-star-2}) hold.
then, in view of \eqref{eq:dotGammaBRandIM}, unidentifiability of the proportion $\rho_i$ implies that
different initial values of $x_i(t)$ at $t=0$ result in the same value for $x_i(t)$ at $t=t^*$. 
% Under the conditions asserted in \Cref{prop:identifiability-of-x-tstar}, the values of $x_i(t^*)$ for $i = 1,2,\ldots,\n$ are identifiable.

\begin{lemma} \label{prop:identifiability-of-x0}
  %  The initial condition of the  dynamical system     \eqref{eq:gammaVector} is  identifiable     provided that
     Assume that for the time interval $[0,t^*)$
     at least one rationalist group has positive payoff value
     and  input $u(t)$ is smooth.
     Then, the initial condition of the  dynamical system     \eqref{eq:gammaVector} is globally identifiable
     if either 
       the conditions in \Cref{prop:identifiability-of-x-tstar} or
       \Cref{prop:identifiability-x-star-2} are satisfied.
       %payoff values  of the non-vaccine refusing imitative groups during the first subinterval is known.
\end{lemma}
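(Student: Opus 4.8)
The plan is to reduce the identifiability of the initial condition $\bm x(0)=(\rho_1,\ldots,\rho_{\n})$ to the injectivity of the time-$t^*$ flow map of the shortage-period system \eqref{eq:gammaVector}, using that $\bm x(t^*)$ is already identifiable by \Cref{prop:identifiability-of-x-tstar} or \Cref{prop:identifiability-x-star-2}. First I would observe that over $[0,t^*)$ the payoffs are constant by \Cref{ass:two-subintervals-shortage-surplus}, and that inspecting \eqref{eq:dotGammaBRandIM} shows the vector field $\bm h$ depends on the parameters only through $\kappa$, the input $u$, the signs $\1(\Delta\pi_i)$ of all payoffs, and the magnitudes of the positive imitator payoffs; the proportions $\rho_i$ do not appear in $\bm h$ at all, entering solely as the initial datum $\bm x(0)$.

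Next, let $\bm\theta'$ be any parameter vector producing the same output $y(\cdot)$. Under the hypotheses of \Cref{prop:identifiability-of-x-tstar} or \Cref{prop:identifiability-x-star-2}, the value $\kappa$ is identifiable by \Cref{lem:identifiability-kappa} (the positive-payoff rationalist group being non-vaccine-refusing), the signs of all payoffs and the positive imitator payoff values are identifiable, and $u$ is known. Hence the vector fields of $\bm\theta$ and $\bm\theta'$ coincide on $[0,t^*)$, and by the same propositions $\bm x(t^*)=\bm x'(t^*)$. Thus $\bm x(\cdot)$ and $\bm x'(\cdot)$ solve the \emph{same} ODE and agree at the single time $t^*$.

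The crux is then a backward-uniqueness argument, for which I would establish well-posedness of $\bm h$. Writing $D(t)$ for the common denominator in \eqref{eq:dotGammaBRandIM} --- which is exactly the total demand --- the positive-payoff rationalist $i_0$ contributes the term $x_{i_0}(t)$, so $D(t)\geq x_{i_0}(t)$; since demand exceeds supply on $[0,t^*)$ by \Cref{ass:two-subintervals-shortage-surplus}, $u/D<1$ gives $\dot x_{i_0}=-\kappa x_{i_0}u/D\geq -\kappa x_{i_0}$, whence $x_{i_0}(t)\geq\rho_{i_0}e^{-\kappa t^*}>0$ on $[0,t^*]$. Consequently $D$ is bounded away from zero along the trajectory; with $u$ smooth and the numerators polynomial in $\bm x$, the field $\bm h(t,\cdot)$ is locally Lipschitz in $\bm x$ uniformly on $[0,t^*]$, so the Picard--Lindel\"of theorem applies and the flow is reversible.

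Finally, solving the common ODE backward from the shared terminal value $\bm x(t^*)=\bm x'(t^*)$ forces $\bm x(t)=\bm x'(t)$ for all $t\in[0,t^*]$, in particular $\bm x(0)=\bm x'(0)$, i.e.\ $\rho_i=\rho_i'$ for every $i$; this is the claimed global identifiability. I expect the main obstacle to be the well-posedness step: guaranteeing that the denominator cannot vanish within $[0,t^*)$ so that the Lipschitz estimate, and hence backward uniqueness, holds all the way up to $t^*$ --- and it is precisely the positive-payoff rationalist hypothesis (keeping $x_{i_0}$, and thus the demand, bounded below) together with the smoothness of $u$ that close this gap.
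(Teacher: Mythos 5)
Your proposal is correct and follows essentially the same route as the paper: both arguments combine the identifiability of $\bm x(t^*)$ (from \Cref{prop:identifiability-of-x-tstar} or \Cref{prop:identifiability-x-star-2}) with uniqueness of solutions of the locally Lipschitz shortage-period ODE to propagate uniqueness backward from $t^*$ to $t=0$; the paper merely phrases this as a contradiction about two trajectories merging at some $\hat t^*\in(0,t^*]$, splitting into the cases $\hat t^*<t^*$ and $\hat t^*=t^*$, where your direct backward-uniqueness formulation absorbs the second case by extending the field continuously to $[0,t^*]$. Your explicit lower bound $x_{i_0}(t)\geq \rho_{i_0}e^{-\kappa t^*}>0$ on the positive-payoff rationalist, which keeps the demand denominator away from zero, is a welcome concretization of the step the paper only asserts (``the local Lipschitzness of the vector field $\bm h$ can be shown''), as is your observation that the identifiability of $\kappa$, the payoff signs, and the imitators' positive payoffs forces the two vector fields to coincide.
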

The main result of this paper is summarized in the following theorem.
\begin{theorem} \label{thm:the-final-theorem}
    Consider a population consisting of $\n$ groups of decision-making individuals with vaccination dynamics model \eqref{eqn:vax-seeker} and \eqref{eqn:mean-dynamics}, and the  cumulative proportion of vaccinated individuals as the output.
    Under the conditions of \Cref{prop:identifiability-of-x0},
    the  proportions of the $\n$ groups, that is $\rho_1, \ldots, \rho_\n$, are globally identifiable. 
    %provided that Assumptions     \ref{ass:subintervals-vaccine-surplus-period},    \ref{ass:two-subintervals-shortage-surplus},    \ref{ass:times-are-known} hold.
\end{theorem}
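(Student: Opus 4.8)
The plan is to obtain the theorem as the terminal node of the chain of results in \Cref{subsec:identifiability-proportions}, exploiting the fact that the group proportions enter the reduced dynamics precisely as initial data. First I would invoke the input-output equivalence recorded just after \eqref{eq:dotGamma}: the full model \eqref{eqn:vax-seeker}--\eqref{eqn:mean-dynamics} with output $y(t)=\sum_{i=1}^\n \gamma_i(t)$, packaged as \eqref{eqn:complete-system1}, produces the same output trajectory as the system governed by \eqref{eq:y} and \eqref{eq:dotGamma} for every admissible parameter vector and input function. Since input-output-equivalent systems share the same identifiable and unidentifiable parameters \cite{eisenberg2013input}, it suffices to establish the identifiability of $\rho_1,\ldots,\rho_\n$ within the reduced system.

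Next I would note that in the reduced formulation the proportions are nothing but initial conditions: from $x_i(t)=\rho_i-\gamma_i(t)$ together with $\gamma_i(0)=0$ we get $x_i(0)=\rho_i$ for every $i\in\{1,\ldots,\n\}$, so the vector of proportions coincides with the initial state $\x(0)=(\rho_1,\ldots,\rho_\n)^\top$ of the shortage-period system \eqref{eq:gammaVector}. The substantive content is then already delivered by \Cref{prop:identifiability-of-x0}, whose hypotheses are exactly those stipulated in the theorem: it asserts global identifiability of the initial condition of \eqref{eq:gammaVector}, conditional on the identifiability of $\x(t^*)$ furnished by either \Cref{prop:identifiability-of-x-tstar} or \Cref{prop:identifiability-x-star-2}. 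Applying it gives that each $x_i(0)$ is globally identifiable in the sense of \Cref{def:identifiability}, and the identification $x_i(0)=\rho_i$ transfers this conclusion verbatim to the proportions $\rho_1,\ldots,\rho_\n$.

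The closest thing to an obstacle here is logical bookkeeping rather than analysis. I would check that identifiability established for the reduced system carries back to the original parameterization -- guaranteed by the cited equivalence -- and that the ``initial condition is globally identifiable'' claim in \Cref{prop:identifiability-of-x0} is genuinely component-wise, so that each individual $\rho_i$ (and not merely the ordered tuple) satisfies the implication \eqref{eqn:identifiability}. Both of these follow directly from the definitions, so no new estimates are required; the proof amounts to assembling \Cref{prop:identifiability-of-x0} with the elementary observation $x_i(0)=\rho_i$.
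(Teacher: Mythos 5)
Your proposal is correct and follows essentially the same route as the paper, whose own proof simply states that the theorem is immediate from \Cref{prop:identifiability-of-x0} and omits the details. The bookkeeping you make explicit---input-output equivalence of \eqref{eqn:complete-system1} with the reduced system \eqref{eq:y}, \eqref{eq:dotGamma}, and the identification $x_i(0)=\rho_i$ so that the proportions are exactly the initial state of \eqref{eq:gammaVector}---is precisely what the paper leaves implicit.
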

\begin{remark}
If the mapping from the elements of the set $\bm{\mathcal{N}}$ to the elements of the set $\mathcal{M}$ and the ordering of the positive payoff values of imitators are not known,
then, at best,  the set of proportions of the imitative groups is identifiable, without being able to associate them to each imitative group.
  \end{remark}
\section{On Identifiability of the set $\mathcal{T}$}
The obtained results in the previous sections were based on the assumption that the endpoint time instants $T_1, \ldots, T_{\s-1}$ and, in turn, the set $\mathcal{T}$ are known.
In this section
we provide a sufficient condition for the identifiability of some endpoint time instant $\hat{t} \in \{T_1, \ldots, T_{\s-1}\}$ at which the payoff values of some imitative groups change.

The idea is that during each subinterval the payoff values for the groups remain constant and, consequently the output is arbitrarily differentiable.
Therefore, if there are time instants where the output is not arbitrarily differentiable, they correspond to the endpoints of some subintervals.
By $\Delta \pi_i(\hat{t}^-)$ (resp. $\Delta \pi_i(\hat{t}^+)$), we mean the left-hand limit $\lim_{t \to \hat{t}^-}\Delta \pi_i(t)$ (resp. right-hand limit $\lim_{t \to \hat{t}^+}\Delta \pi_i(t)$).
Denote the homogeneous polynomial of degree $l$ in two variables $\Delta \pi_i(\hat{t}^-) \1\big(\Delta \pi_i(\hat{t}^-) \big)$ and 
     $\Delta \pi_i(\hat{t}^+) \1\big(\Delta \pi_i(\hat{t}^+) \big)$ with all coefficients equal to one by $p_l^{[i]}$, i.e., 
     $$p_l^{[i]} = \sum_{j=0}^{l} \Delta \pi^j_i(\hat{t}^-) \1\big( \Delta \pi^j_i(\hat{t}^-)\big)\Delta \pi^{(l-j)}_i(\hat{t}^+) \1\big(\Delta \pi^{(l-j)}_i(\hat{t}^+) \big).$$
\begin{lemma} \label{lem:t-existence-imitation} 
    Let the payoff values of $\m$ imitative groups, $\{m_1,m_2,\ldots, m_\m\} \in \mathcal{M}$, change at
   time instant $\hat{t}$ such that $\1\big(\Delta \pi_i(\hat{t}^-) \big)\1 \big(\Delta \pi_i(\hat{t}^+) \big) >0$ for some $i \in \{m_1,m_2,\ldots, m_\m\}$.
     The output trajectory $y(t)$ is then at most $\m$-times differentiable at time instant $\hat{t}$
    for all payoff values $\Delta \pi_i(\hat{t}^-)$ and 
     $\Delta \pi_i(\hat{t}^+)$, $i = m_1,\ldots,m_\m$,
     except possibly for the sets of payoff values characterized by  $\text{det }(\matr{P})= 0$ where
    %Then, the unidentifiable surface for $\hat{t}$ is characterized by the determinant of matrix 
    \begin{equation}\label{eq:matrix-xi}
    \matr{P} = 
        \begin{pmatrix}
       1 & 1  & \cdots & 1   \\
       p^{[m_1]}_1 & p^{[m_2]}_1 & \cdots & p^{[m_\m]}_1\\
       \vdots & \vdots & \ddots & \vdots   \\
        p^{[m_1]}_{\m-1} & p^{[m_2]}_{\m-1} & \cdots & p^{[m_\m]}_{\m-1}
    \end{pmatrix}.
    \end{equation}
\end{lemma}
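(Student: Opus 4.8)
The plan is to exploit that on each side of $\hat t$ the payoffs are constant, so $y$ is real-analytic there and its one-sided derivatives are given by the closed forms already derived in \Cref{lem:induction-For-Dot-y}; the loss of smoothness at $\hat t$ is then detected by comparing left- and right-hand derivatives. First I would record that each $x_i$ is continuous at $\hat t$ (being an absolutely continuous solution of \eqref{eq:dotGamma}), so that $y$, and inductively the auxiliary functions $f_{(j,k)}$ of \eqref{eq:f-j-k}, are continuous there as soon as the lower-order derivatives of $y$ agree across $\hat t$. Writing $P_k(t) = \sum_{i\in\mathcal M}\Delta\pi_i^k(t)\1\big(\Delta\pi_i(t)\big)x_i(t)$ for the relevant power sums, the formula \eqref{eq:y-and-its-derivatives-R-and-M} reads $y^{(k)} = \sum_{j=1}^k f_{(j,k)}P_j + (-1)^{k-1}\kappa^k x'$, in which, since only imitative groups change at $\hat t$, the rationalist proportion $x'$ and its contribution are continuous and cancel in the comparison, while the ``new'' power sum $P_k$ enters with the nonzero coefficient $f_{(k,k)} = \pm(\kappa y)^k$.

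The structural claim I would then prove by induction on $k$ is: $y$ is $k$-times differentiable at $\hat t$ if and only if $\Delta P_1 = \cdots = \Delta P_k = 0$, where $\Delta P_j := P_j(\hat t^+) - P_j(\hat t^-)$. Indeed, once $\Delta P_1 = \cdots = \Delta P_{k-1} = 0$, the derivatives $y',\dots,y^{(k-1)}$ and hence all $f_{(j,k)}$ take equal one-sided values, so the jump of $y^{(k)}$ reduces to $f_{(k,k)}(\hat t)\,\Delta P_k$. Care is needed because $f_{(1,k)}$ contains $y^{(k-1)}$; the induction hypothesis is exactly what guarantees that this term does not spoil the cancellation.

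Next I would compute the jumps algebraically. Setting $a_s = \Delta\pi_{m_s}(\hat t^+)\1\big(\Delta\pi_{m_s}(\hat t^+)\big)$ and $b_s = \Delta\pi_{m_s}(\hat t^-)\1\big(\Delta\pi_{m_s}(\hat t^-)\big)$ for the $\m$ changing groups, continuity of the $x_{m_s}$ and cancellation of the unchanged groups give $\Delta P_j = \sum_{s=1}^{\m}(a_s^j - b_s^j)\,x_{m_s}(\hat t)$. The elementary factorization $a_s^j - b_s^j = (a_s - b_s)\,p_{j-1}^{[m_s]}$, with $p_{j-1}^{[m_s]}$ the complete homogeneous symmetric polynomial appearing in \eqref{eq:matrix-xi}, then yields $(\Delta P_1,\dots,\Delta P_{\m})^\top = \matr P\,\bm v$, where $\bm v = \big((a_1-b_1)x_{m_1}(\hat t),\dots,(a_{\m}-b_{\m})x_{m_{\m}}(\hat t)\big)^\top$; the indicators are precisely what reconcile the sign-switching groups with the entries of $\matr P$. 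The hypothesis $\1\big(\Delta\pi_i(\hat t^-)\big)\1\big(\Delta\pi_i(\hat t^+)\big) > 0$ for some $i$ supplies a component with $a_i,b_i>0$, $a_i\neq b_i$, and $x_{m_i}(\hat t)>0$, so $\bm v \neq \bm 0$.

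Finally, if $\det\matr P \neq 0$ then $\matr P$ is invertible and $\matr P\bm v \neq \bm 0$, so at least one $\Delta P_{j}$ with $1\le j\le \m$ is nonzero; by the structural claim $y$ then fails to be $j$-times differentiable at $\hat t$ for some $j\le\m$, hence is in particular at most $\m$-times differentiable there, with exceptional set exactly $\{\det\matr P = 0\}$. I expect the main obstacle to be the inductive continuity-bootstrapping step linking the jump of $y^{(k)}$ to $\Delta P_k$, especially verifying that the $y^{(k-1)}$ hidden inside $f_{(1,k)}$ is harmless, together with the indicator bookkeeping needed to match $a_s^j-b_s^j = (a_s-b_s)\,p_{j-1}^{[m_s]}$ to the precise entries of $\matr P$ for groups whose payoff changes sign across $\hat t$.
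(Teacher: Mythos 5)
Your proposal is correct and follows essentially the same route as the paper's proof: both reduce differentiability of $y$ at $\hat t$ to the matching of the indicator-weighted power sums $\sum_i \Delta\pi_i^j\1(\Delta\pi_i)x_i$ across $\hat t$, factor the resulting jumps as $\matr{P}\bm v$ with $\bm v$ the vector of $\big(\Delta\pi_i(\hat t^+)\1(\Delta\pi_i(\hat t^+))-\Delta\pi_i(\hat t^-)\1(\Delta\pi_i(\hat t^-))\big)x_i(\hat t)$, and invoke $\det(\matr{P})\neq 0$ together with $x_i(\hat t)>0$ and the hypothesis that some group has two distinct positive one-sided payoffs to conclude that some jump is nonzero. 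Your inductive ``iff'' characterization of $k$-fold differentiability (including the check that the $y^{(k-1)}$ hidden in $f_{(1,k)}$ is neutralized by the induction hypothesis) is just a more explicit rendering of the step the paper dispatches with ``a similar reasoning can be applied for the existence of the higher-order derivatives.''
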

The condition $\1\big(\Delta \pi_i(\hat{t}^-) \big)\1 +\big(\Delta \pi_i(\hat{t}^+) \big) >0$
excludes the case where the payoff values of all imitative groups $i$ change between two non-positive values.
\begin{proposition} \label{prop:t-discontinuous-imitation} 
    The time instant $\hat{t}$ at which the payoff values of $\m$ imitative groups, $\{m_1,m_2,\ldots, m_\m\} \in \mathcal{M}$, change is identifiable if $\text{det }(\matr{P}) \neq 0$ and $\1\big(\Delta \pi_i(\hat{t}^-) \big)$ $\times\1 \big(\Delta \pi_i(\hat{t}^+) \big) >0$ for some $i \in \{m_1,m_2,\ldots, m_\m\}$.
\end{proposition}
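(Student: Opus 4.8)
The plan is to exploit the sharp contrast in regularity of the output across versus within subintervals: on the interior of every subinterval the output is infinitely differentiable, whereas \Cref{lem:t-existence-imitation} guarantees that, under the stated nondegeneracy hypotheses, $y$ loses smoothness precisely at $\hat{t}$. Since a failure of smoothness is an intrinsic property of the observed signal, it pins down $\hat{t}$ independently of which parameter vector generated the output.

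First I would establish interior regularity. On the open subinterval $(T_{j-1},T_j)$ the payoff values $\Delta \pi_i(\cdot)$ are constant by \Cref{ass:subintervals-vaccine-surplus-period}, so the vector field in \eqref{eq:dotGamma} has constant coefficients and is polynomial -- hence real-analytic -- in the state whenever the surplus regime $\min\{1,\cdot\}=1$ is active (and, more generally, rational with a smooth input). Standard ODE theory then yields that $\bm x(\cdot)$, and therefore $y(t)=1-\sum_i x_i(t)$, is $C^\infty$ on the interior of each subinterval. Consequently, every point at which $y$ fails to be infinitely differentiable must be a partition endpoint. Because $\hat{t}\in\{T_1,\ldots,T_{\s-1}\}$ satisfies $\hat{t}\geq T_1>t^*$, it lies in the surplus period, where this analytic regularity holds on both sides of $\hat{t}$.

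Next I would invoke \Cref{lem:t-existence-imitation}. The indicator condition $\1\big(\Delta \pi_i(\hat{t}^-)\big)\,\1\big(\Delta \pi_i(\hat{t}^+)\big)>0$ for some $i$ rules out the degenerate case in which every imitative group switches between two non-positive payoffs, and the assumption $\det(\matr{P})\neq 0$ places us outside the exceptional set flagged in the lemma. Hence $y$ is at most $\m$-times differentiable at $\hat{t}$; that is, $y\in C^{\m}$ but $y\notin C^{\m+1}$ there, so $y$ is genuinely not $C^\infty$ at $\hat{t}$.

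Finally, I would close the identifiability argument in the sense of \Cref{def:identifiability}. Suppose $\bm \theta'$ produces an identical output, $y(t,\bm \theta)=y(t,\bm \theta')$ for all $t$. The common function fails to be infinitely differentiable at $\hat{t}$ by the previous step; yet the interior-regularity argument applied to the partition associated with $\bm \theta'$ shows that $y(\cdot,\bm \theta')$ is $C^\infty$ on the interior of each of its own subintervals. Therefore $\hat{t}$ must coincide with an endpoint of the $\bm \theta'$-partition, forcing the corresponding endpoint parameter of $\bm \theta'$ to equal $\hat{t}$, which establishes global identifiability. I expect the main obstacle to be the regularity bookkeeping behind \Cref{lem:t-existence-imitation}: one must certify that the non-smoothness exhibited at $\hat{t}$ cannot be cancelled by the constant-payoff regimes flanking it, which is exactly what the condition $\det(\matr{P})\neq 0$ secures, and one must confirm that the surplus regime around $\hat{t}$ excludes any spurious non-smooth contribution from the supply term that could be confused with the genuine $(\m+1)$-th-order kink at $\hat{t}$.
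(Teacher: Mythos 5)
Your proposal is correct and follows essentially the same route as the paper: interior smoothness of $y$ on each constant-payoff subinterval, loss of differentiability at $\hat{t}$ guaranteed by \Cref{lem:t-existence-imitation} under $\det(\matr{P})\neq 0$ and the indicator condition, and the conclusion that a non-smooth point of the observed output must be a partition endpoint, which pins down $\hat{t}$. The extra bookkeeping you supply (the ODE regularity argument, the explicit $\bm\theta$ versus $\bm\theta'$ comparison, and the observation that the surplus regime removes any spurious non-smoothness from the supply term) only fleshes out steps the paper states tersely.
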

% In a population of two imitative groups,  the equation
%  $\text{det }(\matr{P}) =0$ leads to
% $$\scalebox{0.95}{$\Delta \pi_{1}(\hat{t}^-) \1\big( \Delta \pi_{1}(\hat{t}^-)\big)\! +\! \Delta \pi_{1}(\hat{t}^+) \1\big( \Delta \pi_{1}(\hat{t}^+)\big) \!= \!\Delta \pi_{2}(\hat{t}^-)\1\big( \Delta \pi_{2}(\hat{t}^-)\big) \!+\! \Delta \pi_{2}(\hat{t}^+) \1\big( \Delta \pi_{2}(\hat{t}^+)\big)$}.$$
% Thus, if we exclude the possibility that the sum of each group’s payoff values at two consecutive time intervals equals,  then the nonexistence of the third order derivative of the output at  time instant $\hat{t}$ implies that  $\hat{t}$ is an endpoint.

The result relies on the assumption of having access to  higher-order derivatives of the output at each time instant which limits its practical applicability.
\section{A note on practical identifiability}
In this paper, we were interested in
inherently identifiable parameters  in the developed  vaccination behavior model  \eqref{eqn:vax-seeker} and \eqref{eqn:mean-dynamics}, i.e.,
\emph{a priori} identifiability.
Accordingly, the provided results rely on the key assumptions of an error-free model with perfect and noise-free data. 
These assumptions highlight the distinction between practical and theoretical identifiability, meaning that theoretical identifiability does not guarantee ``well-determined" estimation results.
The analysis of practical identifiability for the system under study is beyond the scope of this paper. 
For a  discussion on this topic, please refer to \cite{wieland2021structural,lam2022practical}.

As for the number of required data points, in our system, if there are $\vert \mathcal{M} \vert$ imitative groups, the highest order derivative of the output that appears in the input-output relation is $\vert \mathcal{M} \vert + 1$.
To calculate this order using finite differences, a minimum of $\vert \mathcal{M} \vert + 2$ data samples is required. 
The number of parameters to be identified from this input-output relation is $\vert \mathcal{M} \vert + 1$, implying that at least $2\vert \mathcal{M} \vert + 2$ data points are necessary for each interval \cite{miao2011identifiability}. 
However, as vaccination progress data is usually daily, accurately approximating higher-order derivatives of the output becomes challenging.
\section{Revisiting \Cref{example:first}}
% In this section we provide an example to confirm our theoretical findings.

% Consider a large and well-mixed population of individuals stratified into two age-based groups: seniors and non-seniors.
% Individuals of both age-based groups are given a chance to receive a dose of a vaccine, and there are enough vaccine doses available from the beginning of the vaccine roll-out, i.e., vaccine abundance time instant $t^*$ is zero.
% Regarding vaccination behavior, each age-based group is  stratified into three subgroups: vaccine-refusers, imitators, and rationalists.
% Group 1 (resp. Group 2) consisting of rationalist seniors (resp. non-seniors), Group 3 (resp. Group 4) consisting of imitative  seniors (resp. non-seniors), and Group 5 (resp. Group 6) consisting of vaccine-refusing seniors (resp. non-seniors)
% We assume that the excess payoff is positive and uniform among non-vaccine refusing seniors and uniform among non-vaccine refusing non-seniors.
% Hence, there are altogether six groups: Group 1 (resp. Group 2) consisting of rationalist seniors (resp. non-seniors), Group 3 (resp. Group 4) consisting of imitative  seniors (resp. non-seniors), and Group 5 (resp. Group 6) consisting of vaccine-refusing seniors (resp. non-seniors).
The plot in 
\Cref{fig:example-introduction} has been depicted based on the following parameter values: the proportions of seniors who are vaccine refusers, imitative, and rationalist  are respectively equal to $0.009$, $0.15$, and $0.01$.
As for non-seniors, the proportions of vaccine refusers, imitators, and rationalists are respectively equal $0.2$, $0.25$, and $0.381$.
% Individuals' perceived excess payoffs for vaccination remain constant until a highly contagious variant is reported at day $20$, denoted by $t_1$. 
% The payoff functions are then piecewise constant over $\mathcal{T} = \{[0,t_1), [t_1,  T)\}$, where vaccine campaign finishes at day $T = 40.$  
The payoff values for imitative seniors are assumed to be $0.67$ and $0.9$ for the time intervals  $t \in [0, 20)$ and $t \in [20,40)$, respectively.
For imitative non-seniors, the corresponding payoff values are assumed to be $0.23$ and $0.5$, respectively.
The payoff values for rationalist (resp. vaccine refusing) seniors are assumed to be $0.15$ and $0.2$ (resp.  $-0.2$ and $-0.25$) for the time intervals  $t \in [0, 20)$ and $t \in [20,40)$, respectively.
For rationalist  (resp. vaccine refusing) non-seniors, the corresponding payoff values are assumed to be $0.05$ and $0.25$ (resp.  $-0.15$ and $-0.15$), respectively.
The maximum rate of vaccination $\kappa$ is set to $0.1.
$

Identifiability analysis led to the identifiability of the maximum rate of vaccination, $\kappa$, the sets of imitators' payoff values, $\mathcal{P}(0)$ and $\mathcal{P}(20)$,  the total proportion of rationalist, $\rho_1 + \rho_2$, and the total proportion of vaccine-refusers $\rho_5 + \rho_6$.
As detailed in \Cref{sec:further}, by assuming a known ordering among the payoff values of the imitative seniors and non-seniors, their payoff values and, in turn, their proportions are identifiable as well.
Otherwise, since the conditions in \Cref{lem:gi-xstar} are met, the set $\bm{\mathcal{N}}$ is identifiable (\Cref{fig:enter-label}).

We simulated the model with true parameter values and recorded  the cumulative proportion of vaccinated individuals $y(t)$, and its derivatives up to the third order at six time instants.
As detailed in the \Cref{sec:further}, we then estimated the parameters $\kappa$, $\rho_1 + \rho_2$, $\rho_4 + \rho_5$, $\rho_3$, $\rho_4$, $\pi_n(0)$, $\pi_n(20)$, $\pi_s(0)$, and $\pi_s(20)$ and the estimation results are reported in the second row of \Cref{tab:example2}.

 We additionally simulated the model with true parameters and recorded the cumulative proportion of vaccinated individuals,  now at $12-$hour intervals.
Following \cite{tuncer2018structural, eisenberg2013identifiability}, we synthesized noisy data by adding  Gaussian error with  a standard deviation equal to $0.1$ of the mean to the recorded data from the model with true parameters' values.
Gaussian error can be interpreted as a measurement error.
Following \cite{chowell2017fitting, eisenberg2013identifiability}, we also considered Poisson error structure where we assumed  a population of $10^6$ individuals and modeled the vaccine uptake per each time unit as a Poisson process.
Poisson noise structure can indeed account for the process noise in our set-up.
The vaccine uptake by individuals is a random process--under the assumption of independency and identically distributed across each time interval. 
Therefore, the total number of vaccinated individuals within a fixed time interval can be modeled by a Poisson distribution.
We then ran the simulation $1000$ times with different realizations of the noisy data and initial parameter guesses.
We estimated the parameters by
using $\mathtt{curve\_fit}$ function in Python $\mathtt{scipy}$ package.
The valid intervals for all parameters were set to $[0,1].$
% Although we also tried $\mathtt{dual\_annealing}$, it resulted in a higher sum of squared errors, so we reported the $\mathtt{curve\_fit}$ results.
\Cref{tab:example2} reports the mean and the coefficient of variation ($100\%\times$ standard deviation divided by the mean).
The violin plots are reported in \Cref{fig:violin}. 
%\Cref{cor:identifiability-of-the-cis-case2} requires the information on the ordering of the imitative payoff values for their identifiability. 
We additionally ran the parameter estimation described earlier, but now without using information on the ordering of the payoff values.
The result of the parameter estimation and 
the  violin plots are given in  \Cref{tab:example2-noordering} and
\Cref{fig:violin-no-ordering}.
\begin{table}  
% \begin{scriptsize}
\caption{The true and estimated parameter values in \Cref{example:first}.
The proportions of seniors who are rationalists, imitators, and vaccine-refusers are denoted by $\rho_1$, $\rho_3$, and $\rho_5$, respectively.
The corresponding proportions for the non-seniors are denoted by $\rho_2$, $\rho_4$, and $\rho_6$, respectively.
   The payoff value of imitative seniors (resp. non-seniors) is denoted by $\Delta \pi_s$ (resp. $\Delta \pi_n$).
 The fourth (resp. fifth) row reports the means and the coefficients of variation, denoted by CV, of the parameters obtained from fitting the model to  $1000$ simulated  data, which were made noisy with Gaussian noise having a mean of zero and a standard deviation equal to ten percent of the mean (resp. Poisson noise structure). 
 % The last two rows report the estimation results where no information on the ordering of payoff values was fed to the optimizer.
The initial parameter guesses were drawn from a uniform distribution within $50\%$ of the true values.
}\label{tab:example2}
\centering %I decided to report only one of the estimatiaon results for the noiseless data
\resizebox{\textwidth}{!}{
\begin{tabular}{|l|r|r|r|r|r|r|r|r|r|r|} 
\hline
  &  $\kappa$ & $\rho_1+ \rho_2$ & $\rho_3$ & $\rho_4$ & $\rho_5 + \rho_6$ & $\Delta \pi_s(0)$ & $\Delta \pi_n(0)$  & $\Delta \pi_s(t_1)$ & $\Delta \pi_n(t_1)$  \\ 
  \hline
  \footnotesize \makecell[l]{True value}  &  $0.1$& $0.391$ & $0.15$ & $0.25$ & $0.209$ & $0.67$ & $0.23$ & $0.9$ & $0.5$  \\
  \hline 
 \footnotesize \makecell[l]{Estimated value \\ (no noise)}  
 & $0.1$& $0.393$ & $0.144$ & $0.254$ & $0.209$& $0.67$ &$0.23$  &$0.9$  & $0.5$ 
   \\
   \hline
         \footnotesize\makecell[l]{Mean (Gaussian) \\ CV }   & \makecell{$0.105$\\$19\%$}& \makecell[r]{$0.386$\\$16\%$} & \makecell[r]{$0.148$\\$33\%$} & \makecell[r]{$0.271$\\$22\%$} & \makecell[r]{$0.195$\\$31\%$} & \makecell[r]{$0.69$\\$39\%$} & \makecell[r]{$0.22$\\$22\%$} & \makecell[r]{$0.91$\\$22\%$} & \makecell[r]{$0.476$\\$24\%$} 
         \\
    \hline
       \footnotesize   \makecell[l]{Mean (Poisson) \\ CV}   & \makecell{$0.105$\\$19\%$}& \makecell[r]{$0.384$\\$16\%$} & \makecell[r]{$0.149$\\$39\%$} & \makecell[r]{$0.265$\\$23\%$} & \makecell[r]{$0.201$\\$28\%$} & \makecell[r]{$0.683$\\$40\%$} & \makecell[r]{$0.237$\\$38\%$} & \makecell[r]{$0.91$\\$20\%$} & \makecell[r]{$0.483$\\$24\%$} \\
         \hline
\end{tabular}
}
 %\addtabletext{}
 % \end{scriptsize}
\end{table}
\begin{figure}
    \centering
    \includegraphics[width=0.9\linewidth]{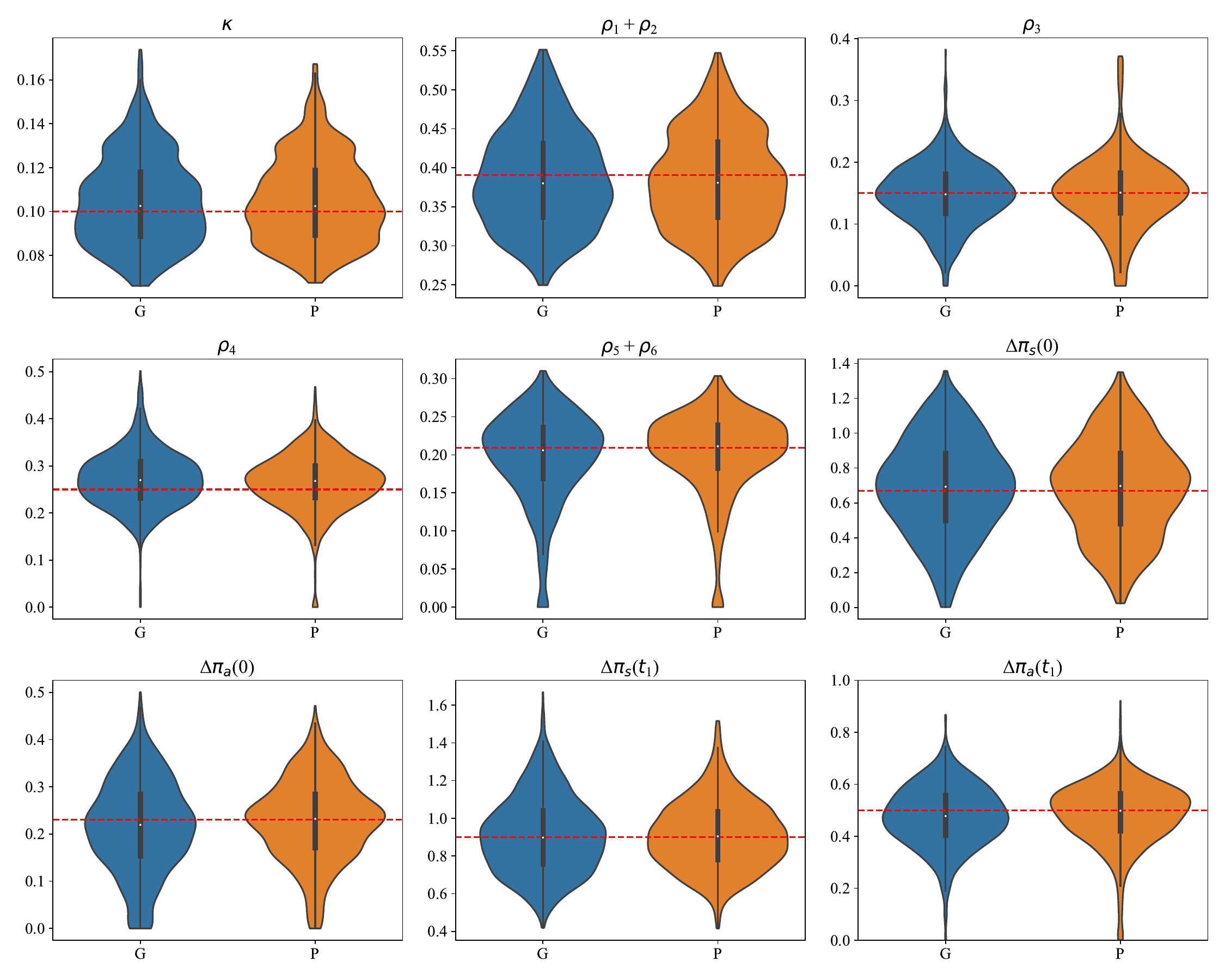}
    \caption{The violin plots of the parameter values estimated  from
    fitting the model to  $1000$ simulated noisy data.
   The proportions of rationalists and vaccine-refusers are denoted by $\rho_1 + \rho_2$ and $\rho_5 + \rho_6$, respectively.
   Those of imitative seniors and non-seniors, respectively, are denoted by $\rho_3$ and $\rho_4$.
   The payoff value of  imitative seniors (resp. non-seniors) is denoted by $\Delta \pi_s$ (resp. $\Delta \pi_n$).
    The letter ``G'' (resp. ``P'') stands for Gaussian (resp. Poisson) noise structure.
    The dashed lines show the true parameter values.
    The ordering of the payoff values was assumed to be known.
    % For all parameters, the median is close to the true value.
     }
    \label{fig:violin}
\end{figure}
% When the ordering was not enforced, the estimated payoff values showed a larger bias.

The result of the simulation study suggests that the identifiable parameters can be estimated reliably in practice.

\section{Concluding Remarks}
We investigated the identifiability of the  proportions of individuals deciding to vaccinate in a large, well-mixed, and heterogeneous population. 
We started off by developing an ODE-based mechanistic model capturing the heterogeneity of individuals in terms of  the perceived payoff gains for vaccination and the decision-making strategies.
%, where the payoff functions are piecewise constant.
We modeled the perceived payoff gains for vaccination by  piecewise constant functions over time.
As for decision-making strategies, we assumed that individuals decide to vaccinate  either by evaluating their own perceived payoff gain for vaccination or based on their interaction with others and comparing their own payoff to those of others. 
In the literature, the former one is usually referred to as an innovator \cite{van2007new}, an asocial learner \cite{mesoudi2015higher}, a best-responder \cite{le2020heterogeneous}, a myopic rationalist, or an evidence-based learner.
The latter is referred to as an imitator \cite{van2007new}, a payoff-based learner \cite{molleman2014consistent}, a success-based learner \cite{van2015focus}, or a social learner \cite{mesoudi2015higher}.

We then investigated the identifiability of each parameter appearing in the developed model.
Our analysis provides conditions for the identifiability of the maximum rate of vaccination and implies  the unidentifiability of the rationalists' payoff gains.
As for payoff gains of imitators, we proved their identifiability under some conditions.
We then proved the identifiability of the  proportions of groups, based on the
 identifiability of the payoff gains of imitators and  the main theorem of uniqueness.

% The obtained result for the identifiability of the proportions of decision-makers requires the knowledge of the ordering of the imitators' payoff values during each subinterval.
% In the absence of this assumption, it can be shown that if there is no supply limitation, and all rationalist groups have positive payoff values during the first time interval, the total population proportion of rationalists is identifiable \eqref{eq:sigmaBar-R}.

In our model, the aggregation of certain groups may represent a socioeconomic or demographic group, where the proportion of the aggregated group is known \emph{a priori} and can be used as an additional equality constraint on the proportions.

Several notions of identifiability  have been considered in the literature \cite{chis2011structural, anstett2020priori}.
These notions mainly differ in terms of initial conditions and input values.
Here, the knowledge of initial conditions is required for the identifiability of the proportions of decision-makers.
As for the input, the provided results on identifiability hold true for any input function as long as the input is smooth and splits the time span into two disjoint subintervals: one of supply limitation and one of abundance.

% The primary goal of this paper was to provide sufficient conditions upon which the proportions of groups are identifiable.
We provide sufficient conditions for identifiability of the trajectories of the imitators' payoffs over the course of vaccination program.
This finding could have the following practical implications: \emph{(i)} understanding the impact of each event on the payoff values of the imitative groups and \emph{(ii)} predicting the  changes in the payoffs  in response to similar events in future vaccine campaigns.

% The unidentifiability of the rationalists' payoff functions and those of the imitators, when the latter are negative, is expected because they appear in the dynamics through indicator functions, which obscure their values.

%Pouria suggested that we have something similar to the following sentence.

The theoretical identifiability is a perquisite for \emph{practical identifiability},
yet often it is neglected.
Using numerical optimization algorithms, we can estimate  values for each parameter in every mechanistic model including  \eqref{eq:dotGamma}. 
However, the estimated values could be misleading in the absence of theoretical identifiability analysis.

The results of this paper open the door to  reliable estimation of the proportions of decision-makers and their payoffs using longitudinal data on vaccine uptake, which remains a topic for future investigation.

\appendix
\section{Proofs} \label{sec:proof}
 \subsection{Some useful characteristics of the function $f_{(j,k)}$}
 \begin{lemma} \label{cor:f-k-k}
    Consider the function $f_{(j,k)}$ which is defined \eqref{eq:f-j-k}.
    The value of $f_{(k,k)}$ is  equal to
   $
   f_{(k,k)} = (-1)^{k+1} \kappa^k y^k.
$
\end{lemma}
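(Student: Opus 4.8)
The plan is to prove the identity by induction on $k$, reading both ingredients directly off the recursion \eqref{eq:f-j-k}. For the base case $k=1$, the second line of \eqref{eq:f-j-k} gives $f_{(1,1)} = \kappa y$, which matches $(-1)^{1+1}\kappa^1 y^1 = \kappa y$, so the claimed formula holds.

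For the inductive step I would assume $f_{(k-1,k-1)} = (-1)^{k}\kappa^{k-1}y^{k-1}$ (this is the formula at index $k-1$, since $(-1)^{(k-1)+1}=(-1)^{k}$) and then apply the first line of the recursion with the first index equal to $k$ and the second index advanced from $k-1$ to $k$:
\[
f_{(k,k)} = \dot{f}_{(k,k-1)} - \kappa y\, f_{(k-1,k-1)}.
\]
The key observation — and the only point that requires any care — is that the derivative term drops out. In $f_{(k,k-1)}$ the first index $k$ strictly exceeds the second index $k-1$, so the last clause of \eqref{eq:f-j-k} (the ``$j>k$'' case) forces $f_{(k,k-1)}\equiv 0$, whence $\dot{f}_{(k,k-1)}=0$. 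This is precisely what decouples the recursion and collapses it to a single multiplicative step.

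What remains is $f_{(k,k)} = -\kappa y\, f_{(k-1,k-1)}$, and substituting the inductive hypothesis yields
\[
f_{(k,k)} = -\kappa y\,(-1)^{k}\kappa^{k-1}y^{k-1} = (-1)^{k+1}\kappa^{k}y^{k},
\]
which closes the induction. There is essentially no genuine obstacle here beyond the index bookkeeping: the entire content of the argument is recognizing that evaluating the recursion along the diagonal $(k,k)$ immediately references the strictly-above-diagonal entry $(k,k-1)$, which lives in the region where the family was defined to vanish, so only the $-\kappa y\,f_{(k-1,k-1)}$ contribution survives.
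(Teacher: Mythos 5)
Your proof is correct and follows essentially the same route as the paper: both arguments observe that $f_{(k,k-1)}\equiv 0$ by the ``first index exceeds second'' clause of \eqref{eq:f-j-k}, so its derivative vanishes and the recursion collapses to $f_{(k,k)} = -\kappa y\, f_{(k-1,k-1)}$, which together with $f_{(1,1)}=\kappa y$ gives the stated formula. Your write-up is in fact slightly more careful than the paper's, which phrases the collapsed recursion informally (and with a small typographical slip, omitting the factor $y$) rather than as an explicit induction.
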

\begin{lemma}\label{lem:f-j-k-derivative-output}
    $f_{(j,l)}$ for $j < l$ does not include any monomials without the  derivatives of the output $y(t)$.
\end{lemma}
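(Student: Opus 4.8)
The plan is to induct on the second index of $f_{(j,k)}$, using the recursion \eqref{eq:f-j-k}, and to isolate the one operation that could conceivably manufacture a monomial free of output derivatives, namely time-differentiation. Throughout I will call a monomial \emph{$y$-pure} if it is a product of powers of $y$ alone, with no factor $y^{(i)}$ for $i\ge 1$; the claim is exactly that $f_{(j,l)}$ carries no $y$-pure monomial whenever $j<l$.

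First I would record a homogeneity observation: every monomial of $f_{(j,k)}$ has total degree exactly $j$ in the output variables $y,\dot y,\ddot y,\ldots$, and carries a (time-independent) coefficient that is a constant times $\kappa^{j}$. This follows by a short induction on $k$ through \eqref{eq:f-j-k}: differentiation sends $y^{(i)}\mapsto y^{(i+1)}$ and hence preserves output-degree, while the term $\kappa y\, f_{(j-1,k-1)}$ raises the degree from $j-1$ to $j$, and the base case $f_{(1,1)}=\kappa y$ has degree one. In particular, for every $j\ge 1$ each monomial of $f_{(j,k)}$ has strictly positive output-degree.

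Next comes the key differentiation fact: if a monomial $M$ has positive output-degree, then every monomial of $\dot M$ is non-$y$-pure. Indeed, since $\kappa$ is a constant parameter, the product rule writes $\dot M$ as a sum in which each term differentiates a single output factor $y^{(i)}\mapsto y^{(i+1)}$; as $i\ge 0$ forces $i+1\ge 1$, each such term contains the genuine derivative factor $y^{(i+1)}$ and is therefore non-$y$-pure. Combined with the homogeneity observation, this shows that $\dot f_{(j,l-1)}$ (for $j\ge 1$) consists solely of non-$y$-pure monomials, regardless of whether $j<l-1$ or $j=l-1$. The boundary case $j=l-1$, where $f_{(l-1,l-1)}=(-1)^{l}\kappa^{l-1}y^{l-1}$ by \Cref{cor:f-k-k} is itself $y$-pure, is absorbed here precisely because its output-degree $l-1$ is still positive.

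Finally I would run the induction on $l$. The base $l=1$ is vacuous, since $j<1$ forces $f_{(j,1)}=0$. For the step, take $1\le j<l$ and expand $f_{(j,l)}=\dot f_{(j,l-1)}-\kappa y\, f_{(j-1,l-1)}$. The first term is free of $y$-pure monomials by the previous paragraph. For the second term, $j<l$ gives $j-1<l-1$, so the induction hypothesis says $f_{(j-1,l-1)}$ has no $y$-pure monomial; since multiplying by $\kappa y$ only raises the $y$-exponent and cannot erase an existing derivative factor, $\kappa y\, f_{(j-1,l-1)}$ has none either. Because subtraction can only cancel monomials and never create new ones, $f_{(j,l)}$ inherits the absence of $y$-pure monomials, closing the induction. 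The one genuine subtlety, and the step I would be most careful about, is the differentiation fact together with the diagonal boundary $j=l-1$: one must see that even though $f_{(l-1,l-1)}$ is a pure power of $y$, its derivative cannot be, which is exactly what positive output-degree guarantees.
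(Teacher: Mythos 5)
Your proof is correct, but it is organized quite differently from the paper's. The paper proves \Cref{lem:f-j-k-derivative-output} by a \emph{double} induction: an outer induction on the gap $k=l-j$ (the claim at stage $k$ being that $f_{(j,j+k)}$ has no derivative-free monomials for all $j$), with a nested inner induction on $j$ inside each stage, and it handles the boundary terms by explicit computation --- $f_{(1,2)}=\kappa\dot y$, $\dot f_{(j,j)}=(-1)^{j+1}j\kappa^{j}y^{j-1}\dot y$, and $f_{(1,k)}=\kappa y^{(k-1)}$. You instead run a \emph{single} induction on the second index $l$, and the two structural facts you isolate --- the degree-homogeneity invariant (every monomial of $f_{(j,k)}$ has output-degree exactly $j$, with coefficient a constant times $\kappa^{j}$) and the differentiation fact (the derivative of any monomial of positive output-degree consists only of monomials containing a genuine derivative) --- do the work that the paper's inner inductions and explicit boundary computations do. The payoff of your route is uniformity: the term $\dot f_{(j,l-1)}$ in the recursion \eqref{eq:f-j-k} is disposed of \emph{without} any induction hypothesis, including at the diagonal $j=l-1$ where $f_{(l-1,l-1)}$ is a pure power of $y$ (cf.\ \Cref{cor:f-k-k}) but its derivative is not, so the induction hypothesis is needed only for the product term $\kappa y\,f_{(j-1,l-1)}$, whose index drop $j-1<l-1$ is exactly what the single induction on $l$ supplies. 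The paper's approach buys concreteness (one sees the actual boundary polynomials), at the cost of a more intricate nested induction; yours buys a shorter, more structural argument whose homogeneity observation is, incidentally, a reusable fact about the whole family $f_{(j,k)}$ that the paper never states.
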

\begin{corollary} \label{cor:alphas-M-R-have-derivatives}
    For $k > 0$, the  term $y^n\alpha_k$, where $\alpha_k$ is defined in \eqref{eq:alpha-series}, does not contain any monomials without the  derivatives of the output.
\end{corollary}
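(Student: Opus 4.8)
The plan is to reduce the monomial statement to a single algebraic vanishing condition. Observe that a monomial in the output and its derivatives \emph{fails} to contain a derivative exactly when it is a pure power of $y$ (times a power of $\kappa$). Hence, introducing the evaluation map $\phi$ that sends every derivative $y^{(i)}$, $i\ge 1$, to $0$ while fixing $y$ and $\kappa$, a polynomial $P$ in $y,\dot y,\ddot y,\ldots$ has all its monomials containing a derivative if and only if $\phi(P)=0$: the surviving pure powers of $y$ are linearly independent, so none can cancel. Thus it suffices to show that $\phi\big(y^n\alpha_k\big)=0$, equivalently---since $\phi(y)=y\neq 0$---that $\phi(\alpha_k)=0$ for every $k>0$.

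Before applying $\phi$ I would first record, by a short induction on $k$ using the recursions \eqref{eq:alpha-series} and \eqref{eq:f-j-k}, that each $\alpha_k$ is a rational function of $y$ and its derivatives (with coefficients in $\kappa,\kappa^{-1}$) whose denominator is a power of $y$. This guarantees that $y^n\alpha_k$ is an honest polynomial for $n$ large enough, and that $\phi$---viewed as evaluation on the localized ring $\R[\kappa,\kappa^{-1}][y,y^{-1}][\dot y,\ddot y,\ldots]$---is a well-defined ring homomorphism there, since it never divides by zero ($\phi(y)=y$).

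With $\phi$ in hand the vanishing is almost immediate, and it does not even require the inductive hypothesis that $\phi(\alpha_j)=0$; only well-definedness of $\phi(\alpha_j)$ is used. Write $\alpha_k$ through its numerator $y^{(k)}-\sum_{j=1}^{k-1}f_{(j,k)}\alpha_j$ over the denominator $(-1)^{k+1}\kappa^k y^k$. Applying $\phi$ and using multiplicativity, $\phi\big(y^{(k)}\big)=0$ because $k\ge 1$, while for each $1\le j\le k-1$ we have $j<k$, so \Cref{lem:f-j-k-derivative-output} gives $\phi\big(f_{(j,k)}\big)=0$ and therefore $\phi\big(f_{(j,k)}\alpha_j\big)=\phi\big(f_{(j,k)}\big)\phi(\alpha_j)=0$. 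The numerator thus maps to $0$, and since the denominator maps to $(-1)^{k+1}\kappa^k y^k\neq 0$ we conclude $\phi(\alpha_k)=0$. The base case $k=1$ is checked directly from $\alpha_1=\dot y/(\kappa y)$, which gives $\phi(\alpha_1)=0$.

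I expect the only real obstacle to be the bookkeeping of the first step rather than the vanishing itself: one must verify that clearing denominators never produces a spurious pure power of $y$ and that $\phi$ genuinely respects products and quotients on the localized ring, so that ``$\phi$ annihilates $\alpha_k$'' faithfully translates back into ``every monomial of the cleared polynomial $y^n\alpha_k$ carries a derivative.'' Once that formalism is fixed, the result follows from \Cref{lem:f-j-k-derivative-output}, together with the fact from \Cref{cor:f-k-k} that the only derivative-free $f_{(j,k)}$ is the diagonal term $f_{(k,k)}=(-1)^{k+1}\kappa^k y^k$, which never enters the sum $\sum_{j=1}^{k-1}$ defining $\alpha_k$.
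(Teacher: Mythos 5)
Your proposal is correct and takes essentially the same route the paper intends for this corollary: it combines the recursion \eqref{eq:alpha-series} with \Cref{lem:f-j-k-derivative-output}, observing that $y^{(k)}$ and every product $f_{(j,k)}\alpha_j$ with $j<k$ consist only of monomials carrying a derivative of the output, so the same holds after dividing by $(-1)^{k+1}\kappa^k y^k$ and clearing denominators. The evaluation homomorphism $\phi$ (sending $y^{(i)}\mapsto 0$ for $i\geq 1$ on the ring localized at $y$ and $\kappa$) is just a clean formalization of the argument the paper treats as immediate, and all its steps (well-definedness, multiplicativity, non-cancellation of the derivative-free monomials) are sound.
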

\begin{lemma} \label{lem:f-j-j+k-derivative-output}
    The term $f_{(j,j+l)}$ for $l>0$ does not include any monomials with derivatives of the output of the order ${(l+1)}$ or higher.
\end{lemma}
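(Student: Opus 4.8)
The plan is to prove a slightly stronger, cleaner statement by induction: for every pair of indices with $1 \le j \le k$, the highest order of derivative of $y$ that appears in any monomial of $f_{(j,k)}$ is at most $k-j$. Writing $k = j+l$ then recovers the lemma exactly, since the bound $k-j = l$ means that no derivative $y^{(l+1)}, y^{(l+2)}, \ldots$ can occur, i.e.\ there are no monomials with output derivatives of order $l+1$ or higher. The restriction $l>0$ in the statement corresponds to $j<k$, but the induction naturally covers the diagonal $j=k$ as well, where it predicts order $0$, consistent with \Cref{cor:f-k-k}.

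To run the induction I would first fix notation: for a differential polynomial $P$ in $y,\dot y,\ddot y,\ldots$ with coefficients built from $\kappa$, let $\mathrm{ord}(P)$ be the largest derivative order of $y$ occurring in $P$, with the convention $\mathrm{ord}(0)=-\infty$. Two elementary facts power the whole argument: (a) differentiation raises the order by at most one, $\mathrm{ord}(\dot P)\le \mathrm{ord}(P)+1$; and (b) multiplication by $\kappa y$, which has order $0$, does not raise the order, since in a product each monomial's order is the maximum of the factors' orders, giving $\mathrm{ord}(\kappa y\,P)=\mathrm{ord}(P)$ whenever $P\neq 0$. The base case $k=1$ forces $j=1$ and $f_{(1,1)}=\kappa y$, so $\mathrm{ord}(f_{(1,1)})=0=k-j$. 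For the inductive step I would assume $\mathrm{ord}(f_{(j,k-1)})\le (k-1)-j$ for all valid $j$ and apply the defining recursion $f_{(j,k)}=\dot f_{(j,k-1)}-\kappa y\,f_{(j-1,k-1)}$: by fact (a) the first summand has order at most $(k-1-j)+1=k-j$, and by fact (b) together with the inductive hypothesis at index $j-1$ the second summand has order at most $(k-1)-(j-1)=k-j$. Taking the maximum yields $\mathrm{ord}(f_{(j,k)})\le k-j$.

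The only points that genuinely need care, and which I would verify explicitly, are the boundary indices where one of the two $f$'s vanishes by the convention $f_{(j,k)}=0$ for $j>k$, $j\le 0$, or $k\le 0$: when $j=k$ the term $\dot f_{(k,k-1)}$ drops out, and when $j=1$ the term $\kappa y\,f_{(0,k-1)}$ drops out, so in each case the surviving summand still obeys the bound (with $\mathrm{ord}(0)=-\infty$ harmlessly satisfying any inequality). I do not expect a real obstacle here; the argument is a routine induction on $k$. The one subtlety worth flagging is that the recursion may produce cancellations among leading-order terms, but since the claim is only an upper bound on the order of monomials that \emph{survive}, and cancellation can only remove or merge monomials without creating higher-order ones, bounding the formal expression before simplification already bounds the simplified $f_{(j,k)}$. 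The main care is thus bookkeeping of the degenerate (zero) cases rather than any analytic difficulty.
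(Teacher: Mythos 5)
Your proof is correct, and it is organized differently from the paper's. The paper proves the statement by a nested double induction: an outer induction on the offset $l$, an inner induction on $j$ at each level, plus separate special-case computations relying on the explicit formulas $f_{(k,k)}=(-1)^{k+1}\kappa^k y^k$ (\Cref{cor:f-k-k}) and $f_{(1,k)}=\kappa y^{(k-1)}$. You instead strengthen the claim to the uniform order bound $\mathrm{ord}\big(f_{(j,k)}\big)\le k-j$ for all $1\le j\le k$ and run a single induction on $k$ over all $j$ simultaneously, powered by the two elementary facts that differentiation raises the order by at most one and multiplication by $\kappa y$ does not raise it. This strengthened hypothesis is exactly what makes the paper's inner inductions and explicit diagonal/edge formulas unnecessary: the boundary indices $j=k$ and $j=1$ are absorbed by the vanishing convention $f_{(j,k)}=0$, and the diagonal case even recovers \Cref{cor:f-k-k}'s order-zero prediction for free. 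What the paper's version buys is concreteness (it exhibits the actual leading terms, e.g.\ $\dot f_{(j,j)}=j(-1)^{j+1}\kappa^j\dot y\,y^{j-1}$, which are reused in spirit elsewhere in the identifiability arguments); what yours buys is brevity, a sharper statement ($k-j$ as an explicit order bound rather than just a non-occurrence claim), and immunity to the bookkeeping errors that nested inductions invite. Your closing remark about cancellations is also the right disclaimer: since the claim is an upper bound on surviving monomials, bounding the unsimplified expression suffices.
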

% The proof of the following lemma is straightforward.
% \begin{lemma} \label{lem:f-j-j+1}
%     The term $f_{(j,j+1)}$ for $j>0$ can be written as 
%     $f_{(j,j+1)} = \kappa ^{j} c_j \dot{y}y^{j-1}$ for some $c_j \in \mathbb{N}.$
% \end{lemma}
\subsection*{Proofs of \Cref{lem:case1} and \Cref{cor:identifiability-xbar-imitators}}
The proof of the following lemma, which is used in the proof of \Cref{lem:case1}, is immediate and hence omitted.
The term  $(-1)^{k+1}\kappa^k y^{k+s}\alpha_k$ is a summation of rational functions of the variable $y$ and its derivatives, and parameter $\kappa$.
The denominators in this term contain only the variable $y$, possibly raised to a non-negative integer.
Let $s$ be the highest degree of the variable $y$ which appears in the denominators of the term  $(-1)^{k+1}\kappa^k y^{k+s}\alpha_k$.
\begin{lemma} \label{lem:induction-alpha-contains-ydot}
The differential polynomial $(-1)^{k+1}\kappa^k y^{k+s}\alpha_k$ for
$k \in \Z_{>0}$  consists of monomials with the (higher-order) derivatives of the output. 
\end{lemma}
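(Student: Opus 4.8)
The plan is to argue by a light induction on $k$, leveraging the recursion \eqref{eq:alpha-series} together with the structural facts about $f_{(j,k)}$ already established. First I would make the structure of $\alpha_k$ explicit: since $f_{(1,1)} = \kappa y$ and the recursion \eqref{eq:f-j-k} only applies differentiation and multiplication by $\kappa y$, every $f_{(j,k)}$ is a genuine differential polynomial in $y$, its derivatives, and $\kappa$, carrying no denominators. Feeding this into \eqref{eq:alpha-series}, an induction on $k$ shows that each $\alpha_k$ is a rational function whose numerator is such a differential polynomial and whose denominator is a pure power of $y$; the only denominators introduced are the explicit factor $(-1)^{k+1}\kappa^k y^k$ and, recursively, those already carried by $\alpha_1,\ldots,\alpha_{k-1}$. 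This makes $s$ well defined and guarantees that multiplying by $(-1)^{k+1}\kappa^k y^{k+s}$ clears all denominators, so that the object in the statement is indeed a differential polynomial.

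For the derivative-containing claim I would clear the leading denominator using the recursion, writing
\[
(-1)^{k+1}\kappa^k y^{k+s}\alpha_k = y^{s}\Big( y^{(k)} - \sum_{j=1}^{k-1} f_{(j,k)}(t)\,\alpha_j(t) \Big),
\]
which is legitimate because, by \Cref{cor:f-k-k}, the factor $(-1)^{k+1}\kappa^k y^k$ is exactly $f_{(k,k)}$. The first term $y^{s}y^{(k)}$ is a single monomial carrying the $k$-th derivative of the output, so it trivially qualifies. For each $j \in \{1,\ldots,k-1\}$ we have $j < k$, so \Cref{lem:f-j-k-derivative-output} guarantees that \emph{every} monomial of $f_{(j,k)}$ contains at least one derivative of $y$. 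The remaining factors $y^{s}$ and the numerator of $\alpha_j$ are themselves differential polynomials, and multiplying a derivative-bearing monomial by any monomial --- derivative-free or not --- cannot destroy that derivative factor. Hence after full expansion and cancellation of the $y$-power denominators, every surviving monomial of $y^{s} f_{(j,k)}\alpha_j$ still contains a derivative of the output. Note that this step does not even require an inductive hypothesis on $\alpha_j$ itself, since the derivative factor is supplied entirely by $f_{(j,k)}$.

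Putting the two pieces together, every monomial of $(-1)^{k+1}\kappa^k y^{k+s}\alpha_k$ carries a derivative of $y$, which is the assertion; this also subsumes \Cref{cor:alphas-M-R-have-derivatives}. I do not expect a genuine obstacle here --- this is precisely why the statement can be called immediate. The only point requiring care is the bookkeeping claim that ``derivative-containing'' is preserved under multiplication and under clearing the $y$-power denominators; making this precise (a monomial with a derivative factor, multiplied by any monomial, retains a derivative factor) is what turns the heuristic into a rigorous induction, and once stated it closes the argument with no further computation.
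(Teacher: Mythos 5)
Your proof is correct and is exactly the argument the paper treats as ``immediate'' (the paper omits the proof of this lemma): unroll the recursion \eqref{eq:alpha-series} once, note that $y^{s}y^{(k)}$ carries a derivative, and invoke \Cref{lem:f-j-k-derivative-output} so that every monomial of $f_{(j,k)}\alpha_j$, $j<k$, inherits a derivative factor that multiplication and clearing of the $y$-power denominators cannot destroy. This mirrors, step for step, the paper's own written proof of the analogous statement for $\beta_k$ (\Cref{cor:betas-M-R-have-derivatives}), so there is nothing to flag beyond noting that your observation that no inductive hypothesis on $\alpha_j$ is needed for the derivative-bearing claim is accurate.
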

Now, we move on to the proof of 
Proofs of \Cref{lem:case1} and \Cref{cor:identifiability-xbar-imitators}.
For  readability, 
the dependency on $t_l$ in notations $\bar{x}(t_l)$ and $n_l$ is omitted.
Additionally, we use the notation $c_i$ instead of $\Delta \pi_i(t_l)$ for $i \in \{1,\ldots,n\}$.
As it is mentioned in \cite{man2017computing}, the inverse of matrix $\matr{V}$ can be written as 
$
    \matr{V}^{-1} = \matr{W}\matr{Z},
$
where
\begin{equation*}
\scalebox{0.9}{$
    \matr{W} = 
     \begin{pmatrix}
       \frac{c_1^{n-1}}{\Pi_{j\neq 1,n+1}(c_1-c_j)} & \frac{c_1^{n-2}}{\Pi_{j\neq 1,n+1}(c_1-c_j)}  & \cdots & \frac{1}{\Pi_{j\neq 1,n+1}(c_1-c_j)} & \frac{1}{c_1\Pi_{j\neq 1,n+1}(c_1-c_j)}  \\
       \frac{c_2^{n-1}}{\Pi_{j\neq 2,n+1}(c_2-c_j)} & \frac{c_2^{n-2}}{\Pi_{j\neq 2,n+1}(c_2-c_j)}  & \cdots & \frac{1}{\Pi_{j\neq 2,n+1}(c_2-c_j)} & \frac{1}{c_2\Pi_{j\neq 2,n+1}(c_2-c_j)} \\
       \vdots & \vdots & \vdots & \ddots & \vdots \\
       \frac{c_n^{n-1}}{\Pi_{j\neq n,n+1}(c_n-c_j)} & \frac{c_n^{n-2}}{\Pi_{j\neq n,n+1}(c_n-c_j)}  & \cdots & \frac{1}{\Pi_{j\neq n,n+1}(c_n-c_j)} & \frac{1}{c_n\Pi_{j\neq 1,n+1}(c_1-c_j)}
       \\
        0 & 0  &  0 & \cdots & \frac{(-1)^n}{\sigma_n} 
   \end{pmatrix},
   $}
\end{equation*}
and matrix $\matr{Z}$ is a lower triangular matrix defined by
\begin{equation*}
\scalebox{0.8}{$
    \matr{Z} = 
      \begin{pmatrix}
       1 & 0  & 0 & \cdots   & 0  \\
       -\sigma_1 & 1 & 0 & \cdots & 0\\
       \sigma_2 & -\sigma_1 & 1 & \cdots & 0\\
       \vdots & \vdots & \vdots & \ddots & \vdots \\
        (-1)^n\sigma_{n} & (-1)^{n-1}\sigma_{n-1}  &  \cdots & \cdots & 1  
    \end{pmatrix},
    $}
\end{equation*}
where
 $\sigma_k$ is defined by 
 $\sigma_k=0$ for a negative $k$,
 $\sigma_0=1$, $\sigma_1 = \sum_{i=1}^n c_i$, $\sigma_2 = \sum_{1\leq j < k \leq n} c_jc_k$, and so forth, ending with $\sigma_n = c_1c_2\ldots c_n$.
% \begin{equation} \label{eq:sigmas} 
% \sigma_k = 
% \begin{cases}
%    \sum\limits_{1\leq k_1 < \ldots < k_k \leq n} c_{k_1}c_{k_2}\cdots c_{k_k} &\text{ if } k \geq 1, \\
%     1 &\text{ if } k = 0, \\
%     0 &\text{ otherwise.}
% \end{cases}
%         % \sigma_{m} &= 0 \quad \text{ if } m<0,\\
%         % \sigma_0 &= 1, \\
%         % \sigma_1 &= \sum_{i=1}^n c_i, \\
%         % \sigma_2 &= \sum_{i \neq j} c_ic_j, \\
%         % &\vdots \\
%         % \sigma_n &=  \Pi_{i=1}^n c_i.
% \end{equation}
Vector $\hat{\bm x}$ will be
\begin{equation}
\scalebox{0.9}{$
\begin{pmatrix}
        \hat{x}_1(t)\\
        \hat{x}_2(t)\\
        \vdots \\
        \hat{x}_n(t)\\
        \Bar{x}
    \end{pmatrix} =
 \matr{W}
 \times 
    \begin{pmatrix}
       \alpha_0(t) \\
       \alpha_1(t)-\alpha_0(t)\sigma_1 \\
       \alpha_2(t) -\alpha_1(t)\sigma_1+\alpha_0(t)\sigma_2 \\
       \vdots  \\
      \alpha_{n}(t)  + \ldots +  \alpha_0(t)(-1)^n\sigma_n 
   \end{pmatrix}.
   $}
\end{equation}
In particular, for $i \neq n+1$,
\begin{equation}
        \hat{x}_i(t)= \frac{1}{\prod_{j \neq i,n+1} (c_i - c_j)} \Bigg( \sum_{k=1}^{n+1} c_i^{n-k} \bigg( \sum_{l=0}^{n} (-1)^{k+1+l}\alpha_l(t) \sigma_{k-1-l} \bigg)\Bigg),
\end{equation}
\begin{equation}\label{eq:IO-sigmaBar}
    \Bar{x} = \frac{1}{\sigma_n} 
    \sum_{l=0}^n (-1)^l \alpha_l(t) \sigma_{n-l}.
\end{equation}
The proportion of unvaccinated individuals with non-positive payoff values $\Bar{x}$ for $t\in (t_l,t_{l+1})$ is constant.
Hence, it can be considered as an additional unknown parameter.
With that said, \eqref{eq:IO-sigmaBar} is the input-output relation.
%In view of 
% \Cref{lem:induction-alpha-contains-ydot}, the term
% $(-1)^{i+1}\kappa^iy^i\alpha_i$ for $i >0$, and consequently $y^i\alpha_i$, do not contain monomials without the derivatives of $y$.
It is straightforward to show that  the highest degree of $y$ appearing in the denominator of the term $\alpha_n$, denoted by $D(\alpha_n)$ does not exceed $n + D(\alpha_{n-1})$ where $D(\alpha_1) = 1.$ 
For the ease of notation we refer to $D(\alpha_n)$ by $m$.
By multiplying both sides of \eqref{eq:IO-sigmaBar} by $\sigma_n y^m$, 
\eqref{eq:IO-sigmaBar} turns into a polynomial differential equation in the following form:
\begin{equation} \label{eq:IOpolynomial}
   - \sigma_n \Bar{x} y^m + y^m\sigma_n {(1-y)} +
    \sum_{l=1}^n (-1)^l \alpha_l y^m\sigma_{n-l} = 0.
\end{equation}
In view of  \eqref{eq:alpha-series}, it can be shown that the highest order derivative of $y$, i.e., $y^{(n)}$ in \eqref{eq:IOpolynomial} appears only in $\alpha_n$ in the form of $y^{(n)}/\big((-1)^{n+1}y^n\kappa^n\big).$
%for some $s \in \mathbb{Z}_{\geq 0}$.
To make \eqref{eq:IOpolynomial} normalized, we hence multiply both sides by $\kappa^n$, resulting in 
$   - \kappa^n\sigma_n \Bar{x} y^m + y^m\sigma_n\kappa^n (1-y)  -y^{m-n}y^{(n)}  +
    \sum_{l=1}^{n-1} \kappa^n(-1)^l \alpha_l y^m\sigma_{n-l}= 0.$
According to \Cref{thm:Eisenberg}, the coefficients of an input-output equation for a single-output system are identifiable.
By substituting \eqref{eq:alpha-series} in
$   - \kappa^n\sigma_n \Bar{x} y^m + y^m\sigma_n\kappa^n (1-y)  -y^{m-n}y^{(n)}  +
    \sum_{l=1}^{n-1} \kappa^n(-1)^l \alpha_l y^m\sigma_{n-l}= 0$
, we have 
\begin{equation} \label{eq:inmiddle}
\begin{aligned}
   % -y^{(n)} &+ \kappa^n\sigma_n (1-\Bar{x}) y^n -\sigma_n\kappa^n  y^{n+1}+
   % \sum_{l=1}^{n-1} \kappa^n(-1)^l   
   % (\frac{y^{(l)} - \sum_{j=1}^{l-1} f_{(j,l)} \alpha_j} {(-1)^{l+1}\kappa^l y^l})    
   % y^n\sigma_{n-l} = 0, \\
   \kappa^n\sigma_n (1-\Bar{x}) y^m -\sigma_n\kappa^n  y^{m+1}  -y^{m-n}y^{(n)} 
  - \sum_{l=1}^{n-1} \kappa^{n-l}  
   y^{(l)}    
   y^{m-l}\sigma_{n-l} 
   &\\ +
   \sum_{l=1}^{n-1} \kappa^{n-l} y^{m-l}\sigma_{n-l}
   (\sum_{j=1}^{l-1} f_{(j,l)} \alpha_j)   
   =& 0.
\end{aligned}
\end{equation}
In view of \Cref{lem:f-j-k-derivative-output}, $f_{(j,l)}$ for $j<l$ consists of monomials with  derivatives of the output.
This also holds for $\alpha_j$ for $j>1$( \Cref{lem:induction-alpha-contains-ydot}).
Therefore, each monomial of the term $f_{(j,l)}\alpha_j$ contains at least two components of (higher-order) derivatives of the output multiplied by each other, where one component is due to $f_{(j,l)}$ and the other is due to $\alpha_j$.
As a result, 
in the polynomial differential equation \eqref{eq:inmiddle}, the monomial $y^{m-l}y^{(l)}$, for $l \in \{1,2,\ldots,n-1\}$,  appears  with coefficient $\kappa^{n-l}  \sigma_{n-l}$.
With this and assuming the identifiability of $\kappa$, the terms $\sigma_i$ for $i = 1,\ldots,n-1$ are  identifiable.
In view of \Cref{lem:induction-alpha-contains-ydot}, other than the first two terms of the equation \eqref{eq:inmiddle}, the remaining terms contain monomials with the (higher-order) derivatives of the output.
This results in the identifiability of the coefficient of 
$ y^m$, i.e., $\kappa^n\sigma_n (1-\Bar{x})$ and that of
$y^{m+1}$, i.e., $\kappa^n \sigma_n.$
As a result besides $\sigma_n$, the population proportion of individuals with non-positive payoff value over this time interval, i.e., $\bar{x}$ is identifiable.
The terms $\sigma_i$ for $i = 1,\ldots,n-1$ are indeed the coefficients of a polynomial equation of order $n$ in the form $\sum_{i=0}^n (-1)^i\sigma_i x^{n-i}=0$ whose set of solutions is $\mathcal{P}_l$.
\subsection*{Proof of \Cref{cor:identifiability-of-the-cis-imitators}}
The proof is immediate.
 \subsection*{Proof of \Cref{lem:ci-gammai-based-on-alphas-betas}}
 If $t^* = 0$ or the vaccine supply is identically zero over $[0,t^*)$, we consider $[\epsilon, t_1)$ for $l=0$ and an arbitrarily small positive $\epsilon$.
    We prove by strong induction.
    For the sake of readability we drop the dependency on $t_l$.
    Base case: for $k=2$,
    in view of \eqref{eq:y-and-its-derivatives-R-and-M}, we have 
    \begin{equation} \label{eq:for-proof-alpha-beta}
    \scalebox{0.9}{$
     \sum_{i\in \mathcal{M}} \Delta \pi_i^2 \1 \big(\Delta \pi_i\big) x_i = \frac{\ddot{y} - f(1,2)\sum_{i \in \mathcal{M}} \
\Delta \pi_i \1\big(\Delta \pi_i\big) x_i +  \kappa^2 x'}{f_{(2,2)}},
$}
     \end{equation}
     In view of \eqref{eq:y-and-its-derivatives-R-and-M}, 
     the term $ 
     \sum_{i\in \mathcal{M}} \Delta \pi_i \1 \big(\Delta \pi_i\big) x_i$ is equal to $ \frac{\dot{y}}{\kappa y} - x'\frac{1}{y},$ which can be rewritten as $\alpha_1 - x'\beta_1.
   $
     By plugging the term 
     $\alpha_1 - x'\beta_1
   $
     into \eqref{eq:for-proof-alpha-beta}, we have
$\sum_{i\in \mathcal{M}} \Delta \pi_i^2 \1 \big(\Delta \pi_i\big) x_i = \big( \ddot{y} - f_{(1,2)}(\alpha_1 - x' \beta_1) +  \kappa^2 x' \big) /f_{(2,2)}.$
Indeed, the term $\big(\ddot{y} -  f_{(1,2)}\alpha_1\big)/f_{(2,2)}$ is equal to $\alpha_2$, and the term
     $-\big(   f_{(1,2)}\beta_1 + \kappa^2\big)/f_{(2,2)}
$ 
is  $\beta_2$.
     The base case then holds.
     Induction hypothesis:
     We assume that we have 
     $\sum_{i\in \mathcal{M}} \Delta \pi_i^j \1 \big(\Delta \pi_i\big)x_i = \alpha_j - x' \beta_j$, for all $j \leq k-1$.
     Induction step: The term
$
        \sum_{i\in \mathcal{M}} \Delta \pi_i^k \1 \big(\Delta \pi_i\big) x_i $
        is equal to
        $ \big (y^{(k)} - \sum_{j=1}^{k-1} f_{(j,k)}(\alpha_j - x' \beta_j) + (-1)^{k} \kappa^k x'\big)/f_{(k,k)},
        $ where the term
        $
        \big(y^{(k)} - \sum_{j=1}^{k-1} f_{(j,k)}\alpha_j \big)/f_{(k,k)} $ is ${\alpha_k} $, and the term $ \big( -\sum_{j=1}^{k-1} f_{(j,k)} \beta_j + (-1)^{k-1} \kappa^k \big)/f_{(k,k)}$ is equal to ${\beta_k}.
$
The induction step then holds true, and this completes the proof.
\subsection*{Proofs of \Cref{lem:case2} and \Cref{cor:identifiability-xbar-IM-R}} 
Without loss of generality, assume that during the time interval $(t_l, t_{l+1})$ the number of distinct positive payoff values among imitative groups is $n$. 
At first we introduce the following Lemma. %which is needed in the proof of \Cref{lem:case2}.
 \begin{lemma}\label{lem:coefficient-of-beta-not-zero}
 The term   ${B}(t,t_l)$ is either zero over the time interval $(t_l, t_{l+1})$, or
   \begin{equation} \label{eq:proof_of_coefficient-of-beta-not-zero}
       \exists \hat{t}, \delta>0 \text{ s.t. }\forall t \in (\hat{t}, \hat{t} +\delta) \subset (t_l,t_{l+1})    {B}(t,t_l) \neq 0.
   \end{equation}  
 \end{lemma}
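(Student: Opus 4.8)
The plan is to exploit real-analyticity of the output together with the identity theorem for real-analytic functions. The statement is precisely the dichotomy satisfied by an analytic function on a connected interval: such a function is either identically zero or has only isolated zeros, in which case its non-vanishing set is open and dense and hence contains a subinterval. So the real content is to verify that $B(\cdot,t_l)$ is real-analytic on $(t_l,t_{l+1})$.

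First I would argue that $y$ is real-analytic on $(t_l,t_{l+1})$. On this subinterval the payoff values are constant by \Cref{ass:subintervals-vaccine-surplus-period}, so the set of groups with positive payoff is fixed and the indicator terms do not switch; since we are in the vaccine surplus period, the dynamics reduce to $\dot{\hat{x}}_i = -\kappa \hat{\pi}_i(t_l)\,\hat{x}_i\,\big(1-\sum_k \hat{x}_k - x' - \bar{x}(t_l)\big)$ for the imitator aggregates and $\dot{x}' = -\kappa x'$ for the rationalist aggregate. This is a polynomial, hence real-analytic, vector field, so by the standard result that solutions of ODEs with real-analytic right-hand sides are real-analytic in $t$, both the solution and $y(t) = 1 - \sum_i \hat{x}_i(t) - x'(t) - \bar{x}(t_l)$ are real-analytic on the interval.

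Next I would propagate analyticity through the definitions. The functions $f_{(j,k)}$ in \eqref{eq:f-j-k} are obtained from $f_{(1,1)} = \kappa y$ by repeated differentiation and multiplication by $\kappa y$, so each is a polynomial in $\kappa$ and the derivatives of $y$, hence real-analytic. The $\beta_k$ in \eqref{eq:betas} are rational combinations of the $f_{(j,k)}$ whose denominators are powers of $y$; since $y>0$ on the surplus period (shifting the left endpoint by an arbitrarily small $\epsilon$ if needed, exactly as done before \Cref{lem:ci-gammai-based-on-alphas-betas}), these denominators never vanish and each $\beta_k$ is real-analytic. Finally $B(t,t_l) = \sum_{k=0}^{n_l}(-1)^k\beta_k(t)\,\sigma_{n_l-k}(t_l)$ is a finite linear combination of the $\beta_k$ with constant coefficients, hence real-analytic on $(t_l,t_{l+1})$.

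With analyticity established, the conclusion is the identity theorem: if $B(\cdot,t_l)\not\equiv 0$, its zeros are isolated, so the open set $\{t : B(t,t_l)\neq 0\}$ is dense and contains some subinterval $(\hat{t},\hat{t}+\delta)$, which is exactly \eqref{eq:proof_of_coefficient-of-beta-not-zero}; otherwise $B(\cdot,t_l)\equiv 0$. I expect the only delicate point to be the analyticity of $y$: one must confirm that on the open subinterval the indicator terms are genuinely constant, so that no nonsmooth switching occurs inside, and that the reduced vector field is truly polynomial, after which the analytic ODE theory applies verbatim and the remaining steps are routine.
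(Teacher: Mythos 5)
Your proof is correct, but it takes a genuinely different---and heavier---route than the paper's. The paper's own argument is purely a continuity/Lipschitz one: it notes that on $(t_l,t_{l+1})$ the output $y$ and its derivatives are continuously differentiable and $y>0$, so $B(\cdot,t_l)$, being a rational expression whose denominators are powers of $y$, is continuously differentiable and hence Lipschitz with some constant $L$ on compact subsets; then, if $B(\hat t,t_l)=\epsilon\neq 0$ at some point, the estimate $\vert B(\hat t,t_l)-B(\tilde t,t_l)\vert\le L\vert \hat t-\tilde t\vert$ forces the first zero after $\hat t$ (if any) to lie at distance at least $\epsilon/L$, which produces the desired subinterval explicitly. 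Your argument instead establishes real-analyticity of $y$ (via the polynomial vector field on the subinterval where the indicators are frozen), propagates analyticity through $f_{(j,k)}$, $\beta_k$, and $B$, and invokes the identity theorem. That chain is valid---the vector field is indeed polynomial once the payoffs are constant, and the denominators never vanish since $y>0$ on the surplus period---and it buys a strictly stronger conclusion: the zeros of $B(\cdot,t_l)$ are isolated, so its non-vanishing set is open and dense, which could for instance sharpen the hypothesis of \Cref{cor:identifiability-xbar-IM-R}. However, for the dichotomy as actually stated, analyticity is superfluous: once $B(\cdot,t_l)\not\equiv 0$, continuity at a single point where it is nonzero already yields a subinterval on which it is nonzero, and that elementary observation is exactly what the paper exploits. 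Both proofs rest on the same preliminary facts---on the open subinterval the indicator terms do not switch and $y$ stays positive, so $B(\cdot,t_l)$ inherits the regularity of $y$ there---but your identity-theorem machinery does more work than the statement requires.
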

 Now, we proceed with the proof of \Cref{lem:case2}.
  If $t^* = 0$ or the vaccine supply is identically zero over $[0,t^*)$, we consider $[\epsilon, t_1)$ for $l=0$ and an arbitrarily small positive $\epsilon$.
For the sake of readability, 
here $n_l$ and $\bar{x}(t_l)$ are denoted by $n$ and $\bar{x}$, respectively.
Additionally,
the notation $c_i$ is used instead of the notation $\hat{\pi}_i(t_l)$ for $i \in \{1,2,\ldots,n\}$.
% By defining 
% $\alpha_0 = 1-y$ and $\beta_0 = 1$, and 
%For $t \in (t_l,t_{l+1})$, define $$\hat{x}_i(t) = \sum_{\substack{{j: j \in \mathcal{M},}\\ {\Delta \pi_j(t_l) = c_i}}} x_j(t).$$
Then, in view of \Cref{lem:ci-gammai-based-on-alphas-betas}, the terms $\sum_{i=1}^n c_i^j\hat{x}_i$ for $j=0,\ldots,n$ can be recast as 
\begin{equation} \label{eq:V-im-br}
\scalebox{0.8}{$
\begin{pmatrix}
       1 & 1  & \cdots & 1 & 1  \\
       c_1 & c_2 & \cdots & c_n & 0\\
       \vdots & \vdots & \ddots & \vdots & \vdots \\
        c_1^{n} & c_2^{n}  & \cdots & c_n^n   & 0
    \end{pmatrix} 
    \begin{pmatrix}
        \hat{x}_1 \\
        \vdots \\
        \hat{x}_n \\
        \Bar{x}
    \end{pmatrix}
    = \bm \alpha(t) - x'(t) \bm \beta(t).
    $}
\end{equation}
The matrix on the left hand side of \eqref{eq:V-im-br}, is a Vandermonde matrix and is the same as \eqref{eq:vandermonde}.
 Along the same lines taken in the proof of \Cref{lem:case1}, we can obtain the following differential equation
% \begin{equation}
%     \begin{aligned}
%         x_i &= \frac{1}{\Pi_{j \notin \{i,n+1\}} (c_i - c_j)} \Bigg( \sum_{k=1}^{n+1} c_i^{n-k} \bigg( \sum_{l=0}^{n} (-1)^{k+1+l}\alpha_l \sigma_{k-1-l} \bigg)\Bigg) - x' V^{-1}\bm \beta(t),
%     \end{aligned}
% \end{equation}
% and
\begin{equation}\label{eq:sigmaBar-R}
    \sigma_n\Bar{x} = 
    \sum_{l=0}^n (-1)^l \alpha_l(t) \sigma_{n-l}  - x'(t)  
    {B}(t,t_l).
\end{equation}
In view of \Cref{lem:coefficient-of-beta-not-zero}, two cases may happen:
\emph{Case 1.} there exists a non-measure zero interval contained in the subinterval $(t_l, t_{l+1})$ during which the  term ${B}(t,t_l)$ is not zero, or
\emph{Case 2.} the  term $  {B}(t,t_l)$ is zero over the time interval   $(t_l, t_{l+1})$.

% \emph{Case 1.} the term $ \sum_{l=0}^n (-1)^l \beta_l(t) \sigma_{n-l}$ is not zero over a n, and \emph{Case 2.} the term  $ \sum_{l=0}^n (-1)^l \beta_l(t) \sigma_{n-l}$ is zero.

\emph{Case 1.} given \eqref{eq:sigmaBar-R}, the value of $x'$ can be written in terms of the output and its derivatives and the parameters, i.e.,
\begin{equation} \label{eq:gammaR}
\scalebox{0.9}{$
x'(t) =  \frac{\sum_{l=0}^n (-1)^l \alpha_l(t) \sigma_{n-l} -\sigma_n \bar{x}}{ {B}(t,t_l)}.
$}
\end{equation}
In view of 
\eqref{eq:y-and-its-derivatives-R-and-M},
 we get
\begin{equation} \label{eq:proof_of_y_n1}
\scalebox{0.9}{$
    y^{(n+1)} (t)
    + (-1)^{n+1}\kappa^{n+1} x'(t) + \sum_{j=1}^{n+1} \big(f_{(j,n+1)}(t)\sum_{i=1}^n c_i^j\hat{x}_i(t) \big)  = 0.
    $}
\end{equation}
In view of \eqref{eq:V-im-br}, the variable $\hat{x}_i$ can be written in terms of $c_j$, $\alpha_j$, and $\beta_j$, for $j = 1,\ldots,n,$ as follows:
\begin{equation} \label{eq:gamma-im-br}
\scalebox{0.8}{$
    \begin{aligned}
        \hat{x}_i & (t)= 
         \frac{1}{\prod_{j \notin \{i,n+1\}} (c_i - c_j)} \Bigg( \sum_{k=1}^{n+1} c_i^{n-k} \bigg( \sum_{l=0}^{n} (-1)^{k+1+l}\alpha_l(t) \sigma_{k-1-l} \bigg)\Bigg) \\
        &  - \frac{\sum_{l=0}^n (-1)^l \alpha_l(t) \sigma_{n-l} -\sigma_n \bar{x}}{{B}(t,t_l) } \frac{1}{\prod_{j \notin \{i,n+1\}} (c_i - c_j)} 
         \times \Bigg( \sum_{k=1}^{n+1} c_i^{n-k} \bigg( \sum_{l=0}^{n} (-1)^{k+1+l}\beta_l(t) \sigma_{k-1-l} \bigg)\Bigg).
    \end{aligned}
    $}
\end{equation}
Let $g_i(t)$ denote the right hand side of \eqref{eq:gamma-im-br}.
By replacing $x'$ and 
$\hat{x}_i$ with \eqref{eq:gammaR} and \eqref{eq:gamma-im-br} in \eqref{eq:proof_of_y_n1},
we obtain the input-output relation
\begin{equation} \label{eq:dummy}
    y^{(n+1)}(t) + \sum_{j=1}^{n+1} \big(f_{(j,n+1)}(t)\sum_{i=1}^n c_i^jg_i(t) \big)
    + (-1)^{n+1}\kappa^{n+1} \frac{\sum_{l=0}^n (-1)^l \alpha_l(t) \sigma_{n-l} -\sigma_n \bar{x}}{{B}(t,t_l)} = 0.
\end{equation}
It is straightforward to show that  the highest degree of $y$ appearing in the denominator of the term $\beta_n$, denoted by $D(\beta_n)$ does not exceed $n + D(\beta_{n-1})$ where $D(\beta_1) = 1.$
Here, for the ease of notation we show $D(\beta_n)$ by $m$.
By multiplying  both sides of the above equation by $y^{m}{B}(t,t_l)$, for a large enough value of $m$, we get the input-output relation in the form of polynomial
\begin{equation}
\scalebox{0.9}{$
\begin{aligned} \label{eq:IO-4-R-M}
    y^{(n+1)}(t)y^{m}(t){B}(t,t_l)
    + 
   y^{m}(t){B}(t,t_l)
    \bigg(\sum_{j=1}^{n+1} \big(f_{(j,n+1)}(t)\sum_{i=1}^n c_i^jg_i(t) \big) \bigg)&
    \\ + (-1)^{n+1}\kappa^{n+1} y^m(t)\bigg(\sum_{l=0}^n (-1)^l \alpha_l(t) \sigma_{n-l} -\sigma_n \bar{x} \bigg)& = 0.
    \end{aligned}
    $}
\end{equation} 
It can be shown that neither
$\alpha_i$ nor $\beta_i$ for $i \leq n$ contain the term $y^{(n+1)}$, and
the term
$\sum_{i=1}^n c_i^jg_i $ does not contain it either.
In view of \Cref{lem:f-j-j+k-derivative-output}, the term $f_{(j,n+1)}$, $j \in \{1,\ldots,n+1\}$, does not contain $y^{(n+1)}$ either.
In the input-output equation, the only terms containing the $(n+1)$th order derivative of the output are 
$y^{(n+1)}(t) y^m(t){B}(t,t_l)$.
Hence, the coefficients of the monomials in the term $y^{(n+1)}(t) y^m(t){B}(t,t_l)$
are identifiable.
In view of \eqref{eq:betas},
this term reads as 
$
     \sum_{l=0}^n \big(\sigma_{n-l} \frac{\sum_{j=1}^{l-1} f_{(j,l)}y^{m-l}y^{(n+1)}\beta_j }{\kappa^l}+   (-1)^l\sigma_{n-l} y^{(n+1)}y^{m-l} \big).
$
The term $\sum_{j=1}^{l-1} f_{(j,l)}y^{m-l}\beta_j$ does not contain any monomials without the derivative of the output, see \Cref{lem:f-j-k-derivative-output}.
Hence,
the
 terms $\sigma_{n-l}$ for $l = 0, \ldots, n$, which are the coefficients  of the monomial $y^{(n+1)}y^{m-l}$ in the input-output equation, are identifiable.
If \emph{Case 2} happens, in view of \eqref{eq:sigmaBar-R}, the input-output relation will be 
$
    \sigma_n\Bar{x} - 
    \sum_{l=0}^n (-1)^l \alpha_l(t) \sigma_{n-l} = 0.
$
The identifiability of the terms $\sigma_{n-l}$, $l =0,\ldots, n$ can be shown similar to the steps taken in the proof of \Cref{lem:case1}.
 In view of the definition of $\sigma_k$, indeed by knowing the values of $\sigma_l$, $l = 1,2,\ldots, n$, we know the values of the elementary symmetric polynomials in the  distinct positive payoff values of the imitative groups for the time interval $(t_l, t_{l+1})$, which results in the identifiability of the set $\mathcal{P}_l$.
 % And similar to the proof of \Cref{lem:case1}, the identifiability of the payoff values are immediate if the ordering among them is known. 
 % By assuming that 
 % $c_1 < c_2 < \ldots < c_n$
 % the values of $c_i$s can be globally identified.
% \subsection*{Proof of \Cref{cor:identifiability-xbar-IM-R}}
Now we move on to the proof of \Cref{cor:identifiability-xbar-IM-R}.
The input-output equation during the subinterval $\mathcal{L}$  equals to \eqref{eq:IO-4-R-M}.
For the identifiability of the parameter $\bar{x}$, we see that $\bar{x}$
appears only in the coefficient of $y^m$. 
We should look for other occurrences of the monomial $y^m$ in the generalized input-output equation  
\eqref{eq:IO-4-R-M}.
The terms in the first part of  
the equation \eqref{eq:IO-4-R-M} consist of monomials including $y^{(n+1)}$.
The term
$
y^m{B}(t,t_l)
    \bigg(\sum_{j=1}^{n} \big(f_{(j,n+1)}\sum_{i=1}^n c_i^jg_i \big) \bigg)
$
 consist of monomials with the (higher-order) derivatives of the output, due to $f_{(j,n+1)}$ where $j \leq n$.
The terms in
\begin{equation} \label{eq:fn+1-4-proof}
\scalebox{0.9}{$
y^m{B}(t,t_l)
    \bigg( \big(f_{(n+1,n+1)}\sum_{i=1}^n c_i^{n+1}g_i \big) \bigg)
    $}
\end{equation}
includes monomials without the derivatives of the output.
But in the following we show that the terms in 
\eqref{eq:fn+1-4-proof} which do not contain derivatives of the output
have a degree larger than $m$.
The term 
$
y^m{B}(t,t_l)
     c_i^{n+1}g_i
$
reads as
\begin{equation} 
\scalebox{0.8}{$
    \begin{aligned}
        &\frac{\sum_{l=0}^n (-1)^l y^m\beta_l \sigma_{n-l}}{\Pi_{j \notin \{i,n+1\}} (c_i - c_j)} \Bigg( \sum_{k=1}^{n+1} c_i^{n-k} \bigg( \sum_{l=0}^{n} (-1)^{k+1+l}\alpha_l \sigma_{k-1-l} \bigg)\Bigg) \\
        & \quad - \big(\sum_{l=0}^n (-1)^l \alpha_l \sigma_{n-l} -\sigma_n \bar{x}\big)\frac{1}{\Pi_{j \notin \{i,n+1\}} (c_i - c_j)}\\
        & \times \Bigg( \sum_{k=1}^{n+1} c_i^{n-k} \bigg( \sum_{l=0}^{n} (-1)^{k+1+l}\beta_ly^m \sigma_{k-1-l} \bigg)\Bigg).
    \end{aligned}$}
\end{equation}
From \eqref{eq:betas}, recall that the term $\beta_l$ equals
$
     y^m\beta_k = \frac{-\sum_{j=1}^{k-1} f_{(j,k)}\beta_jy^{m-k}}{(-1)^{k-1}\kappa^k} + y^{m-k}.
$
The first term in the right-hand side of the  equation consists of monomials with derivatives of the output.
Hence, the term $y^{m-k}$ would be the only term in  $y^m \beta_k$ which is without the derivative of the output.
In view of \Cref{cor:alphas-M-R-have-derivatives}, we know that only $\alpha_0$ does not contain the  derivatives of the output.
The product of the terms containing $y^m\beta_k$ and $\alpha_0$ will then result in monomials whose degrees with respect to $y$ is non-negative.
On the other hand, $f_{(n+1,n+1)}$ is equal to 
$(-1)^{n-1}\kappa^{n+1}y^{n+1}$.
As a result, 
 in equation \eqref{eq:fn+1-4-proof}, those monomials without the  derivatives  of the output have degree larger than $m$.
%Hence, it can be easily shown that the degree of the output in those monomials is higher that $n$.
Hence,
the monomial $y^m$ in
 the equation \eqref{eq:IO-4-R-M} appears only as 
 $(-1)^{n+1}\kappa^{n+1}\sigma_n(1-\bar{x})y^m$.
Owing to the identifiability of $\kappa$ and $\sigma_n$, the identifiability of $\bar{x}$ is obtained. 
\subsection*{Proof of \Cref{lem:identifiability-kappa}}
If
$t^*>0$ and $u(t)$ is not identically zero during $[0,t^*)$.
Then, for $t \in [0,t^*)$ the system dynamics read as \eqref{eq:dotGammaBRandIM}. The derivative of the output then satisfies $\dot{y}(t) - \kappa u(t) = 0$.
For the time instants at which $u(t)$ is zero, so is the derivative of the output. 
Other than those time instants, we have $\kappa = \frac{\dot{y}(t)}{u(t)}$,
and the identifiability of the parameter $\kappa$ is then immediate.
% Note that in view of \Cref{ass:two-subintervals-shortage-surplus}, the vaccine supply $u(t)$ is not identically zero and integrable.
If $t^*$ is zero, or the vaccine supply $u(t)$ is identically zero during $[0,t^*),$  there exists at least one non-refusing rationalists and two cases might happen:
\text{Case 1} There exists a time interval where some rationalists but no imitators have positive payoff values. 
In this case, as shown in the proof of \Cref{prop:unidentifiability-of-rationalist-payoff}, the input-output equation equals $\dot{y}(t) - \kappa (1 - y(t) - \bar{x})$, where $\bar{x}$ is the proportion of unvaccinated individuals with non-positive payoff values, and identifiability of the parameter $\kappa$ is immediate.
\text{Case 2}.  
%: If during all subintervals only imitators have positive payoff values, then \kappa \times c_is are identifiable.
    There exists a subinterval where  some rationalists and some imitators have positive payoff values, then, similar to the proof of \Cref{lem:case2}, it can be shown that in the input-output equation the term $y^{(n+1)}y^m$ is multiplied by $\sum_{l=0}^{n} \sigma_{n-l}/\kappa^l \big(\sum_{j=1}^{l-1} f_{(j,l)}\beta_j/y^l \big) + (-1)^k\sigma_{n-k}/y^k$. 
    In view of \eqref{eq:betas},
     the term $\beta_{n-1}$, 
     is a summation of some certain terms and $(1/y^{n-1})$.
     Here, the term $\beta_{n-1}$
      is multiplied by $f_{(n-1,n)} = \kappa^{n-1} a_n \dot{y} y^{n-2}$, where $a_n$ is some known number.
      It can be shown than
     the  monomial
    $y^{(n+1)}y^m \dot{y}\dot{y}/y^n+1$ has the coefficient of $\sigma a_n/\kappa$
     %(we could have $y^{n-1}$ in denominators of other terms, but the nominators would contain some derivatives of the output and when they are multiplied by $f_{(j,k)}$, which can be shown that the result would be different from $f_{(n-1,n)}/y^n$). 
     Hence, the coefficient $\sigma_0/\kappa$ is identifiable--note that $a_n$ is a known number, and in view of $\sigma_0 = 1$, parameter $\kappa$ is identifiable.
% \subsection*{Proofs of \Cref{lem:gi-xstar} and \Cref{cor:xprime-xstar}}
\subsection*{Proof of \Cref{lem:gi-xstar}}
% Here, with an abuse of notation, by $t_0$ we mean the vaccine abundance time instant $t^*$.
For each time instant $t_l$, $l \in \{0,1, \ldots, {\s-1}\}$, construct a set consisting of the values of $\hat{x}_i(t_l)$, $i = 1,2,\ldots, n_l$, and denote it by $\mathcal{Q}_l.$
In view of \Cref{lem:case2}, the payoff values $\hat{\pi}_i(t_l)$ are identifiable.
With this and in view of the
second condition of the lemma,  the set $\mathcal{Q}_l$, including $\mathcal{Q}_0$, is known. 
In this case,  given $\dot{\hat{x}}_i(t) = -\kappa \hat{x}_i(t)y(t)\hat{\pi}(t_l)$ and the value of $\hat{x}_i(t_l)$,
the trajectory of $\hat{x}_i(t)$ for the time interval $t \in (t_l,t_{l+1})$ is  known.
Denote a set consisting of the values of $\hat{x}_i(t_l^-)$ for $l \in \{1,2,\ldots,\s\}$ and $i = 1,2,\ldots, n(t_{l-1})$ by $\mathcal{S}_l.$
%If an imitative group has positive payoff values over the time intervals $(t_{l-1}, t_{l})$ and $(t_l, t_{l+1})$, 
Note that, in view of the continuity of the system states, the values of $x_i$, $i \in \mathcal{M}$, at $t_l^-$ and $t_l^+$ are equal. 
Now,
compare the members of the pair sets $\mathcal{Q}_{l}$ and $\mathcal{S}_l$ for $l = 1.$
The equal members of these pair sets are associated with the same imitative groups.
The condition $\vert \mathcal{S}_l \vert \neq \vert \mathcal{Q}_l \vert$ implies that the signs of the payoffs of some imitative groups change at $t_l$. More specifically, if there are some values in the set $\mathcal{S}_l$ ($\mathcal{Q}_l$) with no equivalent value in  the set $\mathcal{Q}_l$ ($\mathcal{S}_l$), the payoff values corresponding to those members changes to non-positive (positive) from positive (negative) values for the time interval $(t_{l},t_{l+1})$.
Note that thanks to the first condition of the lemma, each value of $\hat{x}_i$, for $i \in \{1,2,\ldots,n_l\}$, corresponds to exactly one imitative group.
By repeating the same procedure for the remaining intervals and matching the values of $\hat{x}_i$ at the time instants $t_l \in \{1,2,\ldots, \s-1\}$,
we may retrieve the trajectories associated with imitative groups having positive payoff values during at least one subinterval.
Accordingly, the corresponding positive payoff values are retrievable.
This implies that the set $\bm{\mathcal{N}}$ can be retrieved.
%the trajectories of the proportion of unvaccinated individuals in the imitative groups whose payoff values remain non-positive over the interval $[0,T)$ are constant, but generally identifiable.
If $n_l \neq 0$, the proportion of unvaccinated rationalists with positive payoff values at time instant $t_l$  can be calculated using \eqref{eq:xprime-mainText}. 
Its trajectory during the time interval $(t_l,t_{l+1})$ then reads as $x'(t) = x'(t_l)\exp(-\kappa (t-t_l)).$ 
When $n_l = 0$, it is straightforward to obtain the input-output relation $\dot{y}(t) = \kappa (1-\bar{x}(t_l)) - \kappa y(t)$ which indicates the identifiability of $\bar{x}(t_l)$ and, in turn, $x'(t_l) = 1-y(t_l) - \bar{x}(t_l).$
Then the trajectory of $x'(t)$ during the time interval $(t_l,t_{l+1})$ reads as $x'(t) = x'(t_l)\exp(-\kappa (t-t_l)).$
Finally, for those time intervals where $\mathtt{r}(t_l) =0$, simply $x'(t)$ will be equal to $x'(t_l)$ during the time interval $t\in(t_l,t_{l+1}).$
\subsection*{Proofs of \Cref{lem:ij-entry-matrix-A} and \Cref{prop:identifiability-of-x-tstar}}
% \textcolor{red}{up to here}
% In this proof, for the sake of convenience, we abuse the notation and by $t_0$ we mean $t^*$.
In view of \eqref{eq:dotGamma}, during the time interval $[t_{l-1}, t_{l})$, $l \in \{1,2,\ldots, \s\}$,
 the rate of change of the proportion of unvaccinated individuals in imitative group $i$ is
$
     \dot{x}_i(t) =-\kappa x_i(t) y(t)\Delta \pi_i(t_{l-1})\1\big(\Delta \pi_i(t_{l-1})\big)$, or equivalently,
 $
\frac{d{x_i}}{x_i} = -\kappa y(t) \Delta \pi_i(t_{l-1})\1\big(\Delta \pi_i(t_{l-1})\big) dt.
 $
 Taking integral of both sides results in 
 $
 \ln (x_i (t)/x_i(t_{l-1})) = -\kappa \Delta \pi_i(t_{l-1}) \1\big(\Delta \pi_i(t_{l-1})\big)\int_{t_{l-1}}^{t} y(\tau)d\tau,
 $
or equivalently,
 $\smash{
 x_i (t) = x_i(t_{l-1})\exp \big(\!-\kappa \Delta \pi_i(t_{l-1}) \1\big(\Delta \pi_i(t_{l-1})\big)\int_{t_{l-1}}^{t} y(\tau)d\tau \big).} 
$
 Substituting the value of $x_i(t_{l-1})$ for previous subintervals into the derived equation  repeatedly results
 \begin{equation} \label{eq:gamma_t0-imitators}
 \scalebox{0.79}{$
     x_i (t) \!= \!x_i(t^*)\exp \!\big(\!-\kappa \!\sum_{j=1}^{j=l-1} \!\big(\Delta \pi_i(t_{j-1}) \1\big(\Delta \pi_i(t_{j-1})\big)\int_{t_{j-1}}^{t_{j}} y(\tau)d\tau \big)\! -\!\kappa
     \big(\Delta \pi_i(t_{l-1}) \1\big(\Delta\!\pi_i(t_{l-1})\big)\!\int_{t_{l-1}}^{t}\!y(\tau)d\tau\!\big)
     \big)$}
 \end{equation}
 for  $i \in \mathcal{M}$ and  $t \in (t_{l-1},t_l)$.
For a rationalist group $i$ and time interval $t \in (t_{l-1},t_l)$, we have
 $x_i(t) = x_i(t^*)\exp \Big( -\kappa \sum_{j=1}^{l-1} (t_j - t_{j-1})\1\big(\Delta \pi_i(t_{j-1})\big) - \kappa(t - t_{l-1})\1\big(\Delta \pi_i(t_{l-1})\big)  \Big).
$
Plugging this equation and \eqref{eq:gamma_t0-imitators}  into $
    y(t) =   1 -  \sum_{i} x_i(t) 
$ results in 
\begin{equation*} \scalebox{0.75}{$
    \begin{aligned}
     y(t)& =1 -  \sum_{i \in \mathcal{R}} x_i(t^*)\exp \Big(\!-\kappa \!\sum_{j=1}^{l-1} (t_j - t_{j-1})\1\big(\Delta \pi_i(t_{j-1})\big)\!
      - \kappa(t - t_{l-1})\1\big(\Delta \pi_i(t_{l-1})\big) \! \Big)  \\
      & - \sum_{i \in \mathcal{M} }  x_i(t^*)\exp \! \Big(-\kappa \sum_{j=1}^{j=l-1} \Delta \pi_i(t_{j-1}) \1\big(\Delta \pi_i(t_{j-1})\big)\int_{t_{j-1}}^{t_{j}} y(\tau)d\tau  -\kappa
     \Delta \pi_i(t_{l-1}) \1\big(\Delta \pi_i(t_{l-1})\big)\int_{t_{l-1}}^{t} y(\tau)d\tau 
     \Big), 
      \end{aligned}
      $}
 \end{equation*}
 for $t \in (t_{l-1},t_l).$
% The relation
% $ 1 - \sum_{i=1}^\n x_i(t^*) = y(t^*)$
% implies that one of the $x_i(t^*)$s, say $x_\n(t_0)$ can be written as
% \begin{equation*} 
% x_\n(t^*) = 1 - y(t^*) - \sum_{i=1}^{\n-1} x_i(t^*),
% \end{equation*}
% and hence there are $\n-1$ unknown $x_i(t^*)$s.
% With this and in view of \eqref{eq:Definition-of-eta}, the term \eqref{eq:identifiable-gamma-x0} is equivalent to
% \begin{equation}
% \begin{aligned}
% y(t) = 1-\sum_{i=1}^{\n-1} x_i(t^*) \big(\eta_i(t) - \eta_\n(t) \big) +&  \eta_\n(t)\big(y(t^*) -1\big) .
% \end{aligned}
% \end{equation}
Hence, we may build the matrix $\matr{A}$ as in equation
\eqref{eq:MatrixA-R-M}
with $\n$ columns and $m$ rows, each row corresponds to some time instant $\Tilde{t}_j$ in $[t^*,T).$
By defining
$\x^* = \big(x_1(t^*),\ldots, x_{\n}(t^*) \big)^\top$ and an $m$-dimensional vector $\bm b$ with
$b_j = 1-y(\Tilde{t}_j)$
as its $j^\text{th}$ component, we have
$
        \matr{A} \x^* = \bm b.
$
Hence, the $(i,k)$ entry of matrix $\matr{A}$ is $x_k(\Tilde{t}_i)/x_k(t^*)$.
The maximum rate of vaccination $\kappa$ is identifiable. 
Under the conditions of the proposition, that is, the signs of the payoff values of all $\n$ groups and
 the ordering among the positive payoff values of imitative groups are known for  
 all $t \leq \tilde{t}_m$,
 the identifiability of the
payoff functions of the imitative groups are  satisfied.
Hence,
the entries of matrix $\matr{A}$ are known, and
the  identifiability of $\x(t^*)$ is guaranteed if  matrix $\matr{A}$ is full rank. 
\subsection*{Proof of  \Cref{prop:identifiability-x-star-2}}
The proofs of these results are similar to those of \Cref{lem:gi-xstar} and \Cref{prop:identifiability-of-x-tstar}, respectively, and are omitted accordingly.
\subsection*{Proof of \Cref{prop:identifiability-of-x0}}
Let
$\bm x_1(t)$ and 
$\bm x_2(t)$ be
two solution trajectories to \eqref{eq:gammaVector} starting from two different initial conditions, i.e., $\bm x_1(0) \neq \bm x_2(0).$
In view of  \Cref{prop:identifiability-of-x-tstar} or \Cref{prop:identifiability-x-star-2}, the vector $\bm x(t^*)$ is identifiable and hence unique.
Accordingly, we have $\bm x_1(t^*) = \bm x_2(t^*).$
Therefore, two trajectories must cross at some time $\hat{t}^* \in (0,t^*]$.
Set the time instant $\hat{t}^*$ as a new initial time. 
Two cases may happen based on the value of $\hat{t}^*$.
\textit{Case 1)} $\hat{t}^* \neq t^*$.
As at least one rationalist group has positive payoff value,
the local Lipschitzness of the vector field $\bm \h$ can be  shown.
With this and in view of the continuity of the input $u(t)$,  according to
main theorem of uniqueness, the solutions to the different initial conditions $\bm x^1(0)$ and $\bm x^2(0)$ should be the same at $\hat{t}^*  - \varepsilon$ as well, where $\varepsilon$ is a small enough positive value.
However, this contradicts our assumption that the two trajectories $\bm x^1(t)$ and $\bm x^2(t)$ merge at $\hat{t}^*$.
As a result, the two solutions starting from these two different initial conditions do not merge during the time interval $[0,t^*]$.
Therefore, we should have different values of $\bm x(t^*)$ for different initial conditions $\bm x(0)$.
This contradicts our knowledge that  $\bm x (t^*)$ is the same regardless of the initial condition.
Hence, the two initial conditions should be the same, and consequently $\bm x(0)$ is identifiable.
\textit{Case 2)}
 $\hat{t}^* = t^*$. 
In this case, consider $\Tilde{\bm h}(t, \bm x)$ which is the same as 
$\bm h$, but is continuous on $\hat{t}^* $.
Main theorem of uniqueness implies that the two solutions to the vector field $\Tilde{\bm h}$ correspond to the initial conditions $\bm x^1(0)$ and $\bm x^2(0)$ should be the same for the time interval $(\hat{t}^* - \epsilon, \hat{t}^* + \epsilon)$.
On the other hand, 
 the solutions of both $\bm h$ and $\Tilde{\bm h}$ are the same during $(\hat{t}^*  - \epsilon, \hat{t}^*]$.
A similar reasoning as in \textit{Case 1} can then be applied, leading to a contradiction.
 %\tr{Remember to check the literature for existing results.}
\subsection*{Proof of \Cref{thm:the-final-theorem}}
Given \Cref{prop:identifiability-of-x0}, the proof of the theorem is immediate and, accordingly, is omitted.

\subsection*{Proof of \Cref{lem:unidentifiability-of-negative-payoffs}}
Without loss of generality, assume that the $l^\text{th}$ component of the vector payoff of the $i^\text{th}$ group $\bm \Delta \pi_i$, is not positive and
denote its value by $c <0$.
With abuse of notation by $t_0$, we mean $t^*$.
Then, 
the rate of change of $x_i$ over the time interval $(t_{l-1},t_{l})$ reads as $\dot{x}_i(t) = -\kappa x_i(t) \1(c)y(t,c)c$ if $i$ is an imitative group, and $\dot{x}_i(t) = -\kappa x_i(t) \1(c)$ if $i$ is a rationalist group, where $y(t,c)$ is the output when the $l^\text{th}$ component of the vector $\bm \Delta \pi_i$ is $c$.
In either case, $\dot{x}_i(t)$ equals zero for $t \in (t_{l-1},t_{l})$ which yields $x_i(t) = x_i(t_{l-1})$.
Let $\epsilon$ be an arbitrarily positive number.
% By definition we have 
% $$
% \forall t\geq 0 \big(y(t, c) = y(t, c')\big) \implies c = c'.
% $$
Let $c' = c - \epsilon$.
If the value of $c$ is replaced with $c'$, the rate of change of $x_i$ still remains zero during the time interval $(t_{l-1},t_l)$.
Consequently, the value of $y(t,c')$ which is the output when the $l^\text{th}$ component of the vector $\bm \Delta \pi_i$ is replaced with $c'$ is equal to $y(t,c)$ for all $t \geq 0.$
This implies the unidentifiability of the parameter $c$.
\subsection*{Proof of \Cref{prop:unidentifiability-of-rationalist-payoff}}
The input-output equation during the time interval $(T_l,T_{l+1})$, $l \neq 0$, where only some rationalists have positive payoff values, equals $\dot{y}(t) = \kappa (1-y(t) - \bar{x})$, where $\bar{x}$ is the proportion of unvaccinated individuals with non-positive payoff values during the time interval $[T_l,T_{l+1})$.
As the payoff values do not appear in the input-output equation, they are not identifiable.
For the time interval $(0,t^*)$, the input-output equation reads as $\dot{y}(t) - \kappa u(t) = 0$ and, consequently, the payoff values are not identifiable.
The input-output equation for the time interval $[t^*,T_1
)$ is the same as $(T_l,T_{l+1})$, $l \neq 0$, and similar result is obtained.
\subsection*{Proof of \Cref{lem:induction-For-Dot-y}}
For the sake of readability, by $n$ we mean $n_l$, and the notation $c_i$ is used instead of $\hat{\pi}_i(t_l)$ for $i \in \{1,\ldots,n_l\}$.
    We prove by induction.
    
    Base case: from $\dot{y}(t) 
    = \kappa y(t) \sum_{i=1}^{n_l}\hat \pi_i(t_l)\hat{x}_i(t),$ we have 
    $\dot{y} = \kappa y \sum_{i=1}^n c_i\hat{x}_i$ and hence by defining $f_{(1,1)} = \kappa y$, the base case holds.
    Induction step: Assume that for $k-1 \in  \mathbb{Z}_{>0}$,  we have 
     $
    y^{(k-1)} = \sum_{j=1}^{k-1} \big( f_{(j,{k-1})} \sum_{i=1}^n  c_i^j \hat{x}_i \big).
    $
    By taking the derivative of both sides of the above equation and using the relation
$
    \frac{d(\sum_{i=1}^n c_i^k\hat{x}_i)}{dt} = -\kappa y \sum_{i=1}^n c_i^{k+1}\hat{x}_i,
$
    we have
$    y^{(k)} = \sum_{j=1}^{k-1} \frac{d\big({f}_{(j,k-1)}\big)}{dt} \sum_{i=1}^n  c_i^j \hat{x}_i + \sum_{j=1}^{k-1} 
    f_{(j,k-1)}\frac{d(\sum_{i=1}^n  c_i^j \hat{x}_i)}{dt}
 $
 and, in turn,
 $y^{(k)} =
    \sum_{j=1}^{k-1} \dot{f}_{(j,k-1)} \sum_{i=1}^n  c_i^j \hat{x}_i - 
    \sum_{j=1}^{k-1}
    \kappa yf_{(j,k-1)} \sum_{i=1}^n  c_i^{j+1}\hat{x}_i.
    $
    The term $y^{(k)}$ can then be rewritten  as 
 $
   y^{(k)}   = \sum_{j=1}^{k} \bigg(f_{(j,k)} \sum_{i=1}^n  c_i^{j}\hat{x}_i \bigg),
$
    where $f_{(0,\cdot)} = 0$.
    The induction step holds, and this completes the proof.
\subsection*{Proof of \Cref{cor:f-k-k}}
\begin{proof}
In view of \eqref{eq:f-j-k},
\begin{equation*}
    \begin{aligned}
    f_{(k,k)} &= \dot{f}_{(k,k-1)} - \kappa y f_{(k-1,k-1)}, 
   % f_{(k,k)} &= (-1)^{k+1} \kappa^k y^k.
    \end{aligned}
\end{equation*}
and by definition
${f}_{(k,k-1)}$ is equal to zero, implying $\dot{f}_{(k,k-1)}$ is zero.
Hence, 
$f_{(k,k)} = -\kappa f_{(k-1,k-1)}$ which is a recursive equation with initial condition $f_{(1,1)} = \kappa y,$ which makes the result trivial.
\end{proof}
\subsection*{Proof of \Cref{lem:f-j-k-derivative-output}}
\begin{proof}
    We prove by induction.
    Base case: We show that $f_{(j,j+1)}$
    does not have any monomials without the  derivatives of the output.
    We also prove this by induction, where the base case is to show that $f_{(1,2)}$ does not have any monomials without the (higher-order) derivatives of the output.
    This is straightforward, as $f_{(1,2)} = \kappa \dot{y}$.
    The induction hypothesis would be that $f_{(j-1,j)}$ does not contain any monomials without the  derivatives of the output.
    The induction step is then to show that the term
    $f_{(j,j+1)}$ does not contain any monomials without the (higher-order) derivatives of the output.
    The term $f_{(j,j+1)}$ equals $\dot{f}_{(j,j)} - f_{(j-1,j)}$, where  $\dot{f}_{(j,j)}$ equals $(-1)^{j+1} \kappa^jj y^{j-1}\dot{y}$, and the term $f_{(j-1,j)}$ does not include any monomials without the derivatives of the output based on the induction hypothesis.
    Hence, the difference between these two terms also enjoys this property.
    Hence, the induction step of this induction, and, in turn, the base case of the main induction hold true.

    Induction hypothesis: Assume  that for $k>1$, the term $f_{(j,j+k-1)}$
    does not have any monomials without the derivatives of the output.

    Induction Step: We  show that $f_{(j,j+k)}$
    does not have any monomials without the derivatives of the output.
    We show that this holds by using another induction.
    The base case is to show that the induction step holds true for $f_{(1,k+1)}$.
    The term $f_{(1,k+1)}$ equals $\dot{f}_{(1,k)} - \kappa yf_{(0,k)}$. 
    The term $f_{(0,k)}$ is zero. With this and after some mathematical manipulation, we obtain     
%     \begin{equation}
% \begin{aligned}
%   f_{(1,k)} &= \dot{f}_{(1,k-1)} - \underbrace{\kappa y f_{(0,k-1)}}_{0}, 
%   \\
%    f_{(1,k-1)} &= \dot{f}_{(1,k-2)}, 
%    \\
%   \vdots 
%   \\
%   f_{(1,3)} &= \dot{f}_{(1,2)}\\
%    f_{(1,2)} &= \dot{f}_{(1,1)} = \kappa \dot{y}, 
% \end{aligned}
% \end{equation} 
% which is a recurrence relation, and, by backward substitution, we find
$ f_{(1,k)} = \kappa y^{(k-1)}$ and, in turn, 
    $\dot{f}_{(1,k)}$ does not contain any monomials without the  derivatives of the output.
The base case thus holds.
The induction hypothesis would be that $f_{(j-1,j+k-1)}$ does not contain any monomials without the  derivatives of the output.
The induction step is then to show this for $f_{(j,j+k)}$.
The term equals $f_{(j,j+k)} = \dot{f}_{(j,j+k-1)} - \kappa y f_{(j-1,j+k-1)}$.
Based on the induction hypothesis, the second term in the right-hand side of the equation does not contain any monomials without the  derivatives of the output.
The induction hypothesis of the main induction implies that the term ${f}_{(j,j+k-1)}$ for $k>1$ does not contain any monomials without the  derivatives of the output nor does it its derivative. 
This proves the induction step and hence the proof is complete.
\end{proof}
\begin{corollary} \label{cor:betas-M-R-have-derivatives}
    For $k > 0$, the term $y^n\beta_k$, where $\beta_k$ is defined in \eqref{eq:betas}, does not contain  the monomial $y^n$.
\end{corollary}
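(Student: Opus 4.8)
The plan is to read off the structure of $\beta_k$ directly from its recursion \eqref{eq:betas} and to isolate, once and for all, its unique derivative-free contribution from the rest. First I would simplify the constant term: since $(-1)^{k+1}=(-1)^{k-1}$, the summand $(-1)^{k-1}\kappa^k$ in the numerator of \eqref{eq:betas} contributes exactly $1/y^k$, so that
\[
\beta_k \;=\; \frac{1}{y^k} \;-\; \frac{\sum_{j=1}^{k-1} f_{(j,k)}\,\beta_j}{(-1)^{k+1}\kappa^k\, y^k}.
\]
This already isolates the only term of $\beta_k$ that is a pure power of $y$, namely $y^{-k}$.

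Next I would invoke \Cref{lem:f-j-k-derivative-output}: for every $j<k$ the function $f_{(j,k)}$ has a derivative of the output as a factor of each of its monomials. Consequently every monomial appearing in the second summand above carries at least one derivative factor, inherited from the corresponding $f_{(j,k)}$, and this factor is untouched by multiplication by any power of $y$ or by the $\beta_j$. I would then fix an integer $n$ large enough to clear all denominators (the top power of $y$ occurring in the denominators of $\beta_1,\dots,\beta_k$ is bounded recursively exactly as in the proof of \Cref{lem:case2}, via $D(\beta_k)\le k+D(\beta_{k-1})$), so that $y^n\beta_k$ is a genuine differential polynomial and the phrase ``the monomial $y^n$'' is meaningful. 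Multiplying the displayed identity by $y^n$ then yields $y^{n}\beta_k = y^{\,n-k}$ plus a sum of monomials each containing a derivative of the output.

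The conclusion is then immediate: the single derivative-free monomial of $y^n\beta_k$ is $y^{\,n-k}$, whose degree is $n-k<n$ because $k>0$, so it is not the monomial $y^n$; and none of the remaining monomials can equal $y^n$ either, since each of them contains a derivative factor. Hence $y^n\beta_k$ does not contain the monomial $y^n$, which is the claim. A clean way to present this is by strong induction on $k$ (base case $\beta_1=1/y$, giving $y^n\beta_1=y^{n-1}$), but the induction is really only needed to certify that $y^n\beta_k$ is a differential polynomial; the absence of $y^n$ follows directly from the split above.

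I do not expect a genuine obstacle here. The only point requiring care is the denominator bookkeeping, that is, choosing $n$ large enough and checking that multiplying the derivative-bearing terms by positive powers of $y$ never accidentally produces a pure power $y^n$ --- which it cannot, since a derivative factor is never cancelled by multiplication by $y$. Everything else is a one-line consequence of \Cref{lem:f-j-k-derivative-output}, mirroring the reasoning already used for $\alpha_k$ in \Cref{cor:alphas-M-R-have-derivatives} and \Cref{lem:induction-alpha-contains-ydot}.
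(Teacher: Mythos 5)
Your proposal is correct and follows essentially the same route as the paper's own proof: the identical decomposition $y^n\beta_k = y^{n-k} + y^{n-k}\frac{(-1)^{k}}{\kappa^{k}}\sum_{j=1}^{k-1} f_{(j,k)}\beta_j$, combined with \Cref{lem:f-j-k-derivative-output} to see that every monomial of the summation carries a derivative of the output, leaving $y^{n-k}$ (with $k>0$) as the only derivative-free monomial. Your added bookkeeping about choosing $n$ large enough for $y^n\beta_k$ to be a genuine differential polynomial is a point the paper leaves implicit, but it does not change the argument.
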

\begin{proof}
The term $y^n\beta_k$ reads as 
\begin{equation}
    y^n\beta_k = y^{n-k}\frac{(-1)^k}{\kappa^k}\sum_{j=1}^{k-1} f_{(j,k)}\beta_j +y^{n-k}.
\end{equation}
Based on \Cref{lem:f-j-k-derivative-output}, for $j<k$ the term $f_{(j,k)}$, and, in turn, the first term on the right hand side of the above equation does not contain any monomials without the derivatives of the output. This makes the result trivial.
\end{proof}
\subsection*{Proof of \Cref{lem:f-j-j+k-derivative-output}}
\begin{proof}
We prove by induction.

Base case: We  show that $f_{(j, j+1)}$ does not contain $y^{(2)}$ or higher.
$f_{(1,2)}$ equals $ \dot{f}_{(1,1)} - \kappa y f_{(0,k)}$.
By plugging
$\dot{f}_{(1,1)} =\kappa \dot{y}$ and $f_{(0,k)}$ into $f_{(1,2)}$, we have 
$f_{(1,2)} = \kappa \dot{y}$, which does not include $y^{(2)}$ or higher-order derivatives of the output.
Induction hypothesis: It is assumed that
$f_{(j-1,j)}$ does not include $y^{(2)}$ or higher-order derivatives of the output.
Induction step: 
$f_{(j,j+1)}$
equals
$\dot{f}_{(j,j)} - \kappa y f_{(j-1,j)}$
The derivative of ${f}_{(j,j)}$ equals $j(-1)^{j+1} \kappa^j \dot{y} y^{j-1}$ and does not contain $y^{(2)}$ or higher order.
Based on the induction step, the term 
$f_{(j-1,j)}$ does not include $y^{(2)}$ or higher order. With this  $f_{(j,j+1)}$  does not include $y^{(2)}$ or higher order derivatives either. This proves the induction step and hence the base case of our main induction is proven. 

Induction Hypothesis: Assume that the term $f_{(j,j+k-1)}$ does not contain $y^{(k)}$ or higher order.
Induction Step: We should show that $f_{(j,j+k)}$ does not contain $y^{(k+1)}$ or higher order. We have
$
f_{(1,1+k)} = \dot{f}_{(1,k)} - \kappa y f_{(0,k)}.
$
The last second term in the right hand side equals zero, and
based on the induction step, the first term in the right hand-side of the equation does not contain $y^{(k+1)}$ or higher order, nor does $f_{(1,1+k)}$.
Similar to the base case, using induction it can be easily shown that
$f_{(j,j+k)}$ does not include contain $y^{(k+1)}$ or higher orders.
Hence, the induction step holds, and this completes the proof.
\end{proof}
\subsection*{Proof of \Cref{lem:coefficient-of-beta-not-zero}}
 \begin{proof} 
     If the term ${B}(t,t_l)$  is not zero over the entire time interval $(t_l, t_{l+1})$, then there exists some time instant $\hat{t} \in (t_l, t_{l+1})$ at which $  {B}(\hat{t},t_l) \neq 0$. 
     Assume that $  {B}(\hat{t},t_l) = \epsilon$ for an arbitrarily positive value $\epsilon$, a similar reasoning holds true for a negative value $\epsilon$.
       Let 
     $\Tilde{t} = \inf \{ t > \hat{t} \vert  {B}(t,t_l) =0 \}$.
     If $\Tilde{t}$ does not exists or  $\Tilde{t} \geq t_{l+1}$, then \eqref{eq:proof_of_coefficient-of-beta-not-zero} is satisfied for $\delta = t_{l+1} - \hat{t}$.
     Otherwise, 
     in view of \eqref{eq:betas},  the term ${B}(t,t_l)$ is rational and its denominator consists of some order of $y$.
     The value of $y$ is strictly greater than zero  for $t > t^*$, where $t^*$ is defined in \Cref{ass:two-subintervals-shortage-surplus}, and, in turn, for $t \in (t_l, t_{l+1})$. 
     The nominator of function ${B}(t,t_l)$ is a differential polynomial in $y$ and its higher-order derivative.
     The output $y$ and its higher order derivatives are continuously differentiable over the time interval $(t_l, t_{l+1})$.
     Thus ${B}(t,t_l)$ is continuously differentiable over the time interval $(t_l, t_{l+1})$, and, in turn, is Lipschitz continuous over any compact subset of $(t_l, t_{l+1})$, i.e.,
     $\vert {B}(\hat{t},t_l) - {B}(\Tilde{t},t_l) \vert \leq L \vert \hat{t} - \Tilde{t} \vert $ for some positive value $L$.
    Hence, we obtain
$ \vert \epsilon -  {B}(\Tilde{t},t_l) \vert \leq L \vert \hat{t} - \Tilde{t} \vert 
            $ which results in  $\epsilon  \leq L (\Tilde{t} - \hat{t})$ and subsequently $\epsilon / L + \hat{t} \leq \Tilde{t}.
$
    As a result, for each $t$ in the interval $(\hat{t}, \hat{t} + L/\epsilon)$, the term $ {B}(t,t_l)$ remains positive.
      % $\vert \epsilon - f_{\beta}(\Tilde{t}) \vert \leq L \vert t' - \Tilde{t} \vert $,     
 \end{proof}
 \subsection*{Proofs of \Cref{lem:t-existence-imitation} and \Cref{prop:t-discontinuous-imitation}} 
\begin{proof}
For the ease of notation and without loss of generality assume that the first $\m$ groups are imitative and their payoff values change at time instant $\hat{t}$.
The proportion of unvaccinated individuals in each group $i$, $x_i$, for $i=1,2,\ldots, \p$, is continuous for all $t \in [0,T)$ and, in turn,
    the output $y(t)$ satisfies $y(\hat{t}^-) = y(\hat{t}^+)$.
        We prove this lemma by contradiction, that is, we assume that  at time instant $\hat{t}$,
        although $\text{det }(\matr{P}) \neq 0$,
        the output is $(\m+1)$-times differentiable.
        The necessary condition for the differentiability of  the output of order $(\m+1)$, is the existence and the equality of
        $y^{(k)}(\hat{t}^+)$ and $ y^{(k)}(\hat{t}^-)$  for all $k\leq \m$.
         Denote the  value of 
    $\Delta \pi_i(t)$  at $t = \hat{t}^+$ by
    $\bar{c}_i$
    the value of  
    $\Delta \pi_i(t)$ at $t = \hat{t}^-$ by $c_i$.
    % Since each subinterval is in the form
    % $[T_l,T_{l+1})$,
    %  the payoff functions are right continuous and, in turn, the right derivative of $\dot{y}$ exists at all times.
    For the second derivative of ${y}$ to exist at $\hat{t}$, the condition $\dot{y}(\hat{t}^-) = \dot{y}(\hat{t}^+)$ must hold.
    In view of $\dot{y}(t) = \kappa y \sum_{i=1}^n c_ix_i(t)\1(c_i) + \kappa x'(t)$
where $n = \vert \mathcal{M} \vert$ is the total number of imitative groups and $x'(t)$ is the proportion of unvaccinated rationalist individuals with positive payoff values,
    the condition
    holds only if $ \sum_{i=1}^n c_i x_i(\hat{t})\1(c_i) = \sum_{i=1}^n \bar{c}_i x_i(\hat{t})\1(\bar{c}_i)$.
    % It can be shown that with this condition, the second order derivative of the output $\ddot{y}$ exists at $t = \hat{t}$.
    A similar reasoning can be applied for the existence of the higher-order derivatives of the output.
    In summary, if the output is $(\m+1)$-times differentiable, then
    $y^{(k)}(\hat{t}^+) = y^{(k)}(\hat{t}^-)$  must  hold for $k = 1, \ldots, \m$, yielding
    \begin{equation}  \label{eq:proof-t-identifiable-eq1}
        \begin{aligned}
            \sum_{i=1}^n c_i \1(c_i) x_i &= \sum_{i=1}^n \bar{c}_i \1(\bar{c}_i) x_i,\\
              \sum_{i=1}^n c_i^2 \1(c_i) x_i &= \sum_{i=1}^n \bar{c}_i^2 \1(\bar{c}_i) x_i,\\
             & \vdots \\
               \sum_{i=1}^n c_i^\m \1(c_i) x_i &= 
              \sum_{i=1}^n \bar{c}_i^\m \1(\bar{c}_i) x_i,
        \end{aligned}
    \end{equation}
     For a non-positive value $z$, we have $z \1(z) = 0$ and, accordingly,
    the equation \eqref{eq:proof-t-identifiable-eq1} can be rewritten as
     \begin{equation} \label{eq:proof-t-identifiable-eq2}
        \begin{aligned}
        \sum_{i=1}^n \big(c_i \1(c_i) - \bar{c}_i\1(\bar{c}_i)\big)x_i &= 0,\\
         \sum_{i=1}^n \big(c_i^2 \1(c_i) - \bar{c}_i^2\1(\bar{c}_i)\big)x_i &= 0,\\
             & \vdots \\
         \sum_{i=1}^n \big(c_i^\m \1(c_i) - \bar{c}_i^\m\1(\bar{c}_i)\big)x_i &= 0.
        \end{aligned}
    \end{equation}
    For those groups whose payoff values remain unchanged at $\hat{t}$, we 
    have $c_i = \bar{c}_i$. This yields $ \sum_{i=\m+1}^n (c_i^k - \bar{c}_i^k)x_i=0$, for $k>0$.
%    Let $p_l^{[i]}$ denote the sum of monomials of degree $l$ in $c_i$ and $\bar{c}_i$; for example $p_2^{[l]}$ equals $c_l^2 + \bar{c}_l^2 + c_l\bar{c}_l$.
\Cref{eq:proof-t-identifiable-eq2} equals to
\begin{equation} \label{eq:proof-t-identifiable-eq3}
  \begin{pmatrix}
       1 & 1  & \cdots & 1   \\
       p^{[1]}_1 & p^{[2]}_1 & \cdots & p^{[\m]}_1\\
       \vdots & \vdots & \vdots & \vdots   \\
        p^{[1]}_{\m-1} & p^{[2]}_{\m-1} & \cdots & p^{[\m]}_{\m-1}
    \end{pmatrix}
    \begin{pmatrix}
        \big(c_1 \1(c_1) - \bar{c}_1\1(\bar{c}_1)\big)x_1(\hat{t}) \\
         \big(c_2 \1(c_2) - \bar{c}_2\1(\bar{c}_2)\big)x_2(\hat{t}) \\
         \vdots \\
         \big(c_\m \1(c_\m) - \bar{c}_\m\1(\bar{c}_\m)\big)x_\m(\hat{t})
    \end{pmatrix} 
    =
    \begin{pmatrix}
        0 \\
        0 \\
        \vdots \\
        0
    \end{pmatrix}.
\end{equation}
Since we assumed $\text{det }(\matr{P}) \neq 0$, 
the vector 
$$
   \begin{pmatrix}
        \big(c_1 \1(c_1) - \bar{c}_1\1(\bar{c}_1)\big)x_1(\hat{t}) \\
         \big(c_2 \1(c_2) - \bar{c}_2\1(\bar{c}_2)\big)x_2(\hat{t}) \\
         \vdots \\
         \big(c_\m \1(c_\m) - \bar{c}_\m\1(\bar{c}_\m)\big)x_\m(\hat{t})
    \end{pmatrix} 
$$
must be equal to zero, implying that
 either $c_i\1(c_i)$ equals  $\bar{c}_i\1(\bar{c}_i)$ or $x_i$ is zero.
The value of $x_i$, for $i = 1,2,\ldots, n$, cannot be zero over the finite time interval $[0,T)$ and the former case implies that $c_i\1(c_i) = \bar{c}_i\1(\bar{c}_i)$ for all $i=1,2,\ldots, \m$, which contradicts our assumption.
Therefore, should matrix $\matr{P}$ be of full rank, \eqref{eq:proof-t-identifiable-eq3} and consequently \eqref{eq:proof-t-identifiable-eq2} do not hold and, in turn, the output is not differentiable up to order $(\m+1)$.
We reach to a contradiction and this completes the proof of \Cref{lem:t-existence-imitation}.
As for the proof of \Cref{prop:t-discontinuous-imitation}, in view of \Cref{lem:t-existence-imitation}, as long as $\text{det }(\matr{P}) \neq 0$, the output is at most $\m$-differentiable at time instant $\hat{t}$ where the payoff values of $\m$ groups of imitators change.
Hence, by 
excluding the payoff values for which $\text{det }(\matr{P}) = 0$, a time instant cannot be an endpoint if the output is  $(\vert \mathcal{M} +1 \vert)$-times differentiable at that point.
\end{proof}
\subsection{Identifiability theorem}
\begin{theorem}[\cite{eisenberg2013identifiability}] \label{thm:Eisenberg}
   The parameters of a rational function ODE model 
   \begin{equation} 
    \Sigma^{\bm \theta} = 
    \left\{
    \begin{aligned}
        & \dot{\bm x}(t) = \bm f\big({t},\bm x(t), \bm u(t), \bm \theta \big), \\
        & \bm y(t, \bm \theta) = \bm g \big(\bm x(t), \bm \theta \big),
    \end{aligned}
    \right.
\end{equation}
    where $\bm x \in \mathbb{R}^{\mathtt{x}}$,
    $\bm \theta \in \bm \Theta \subset \mathbb{R}^{\theta}$,
    $\bm u \in \bm U \subset \mathbb{R}^{\mathtt{u}}$, $\bm f: \mathbb{R} \times \mathbb{R}^{\mathtt{x}} \times \mathbb{R}^{\mathtt{u}} \times  \bm \Theta  \to \mathbb{R}^{\mathtt{x}}$, $\bm g:  \mathbb{R}^{\mathtt{x}} \times  \bm \Theta  \to \mathbb{R}^{\mathtt{y}}$, 
   are globally 
structurally identifiable if and only if the map  from the
parameters to the coefficients of a set of input–output equations is
injective,
regardless of how the input–output equations are generated.
\end{theorem}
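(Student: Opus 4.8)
The plan is to prove the equivalence by routing through an intermediate object---the \emph{input--output (I/O) ideal}---and then invoking the uniqueness of its characteristic presentation from differential algebra. Collect $\bm z = (\bm y, \bm u)$ together with all of their time derivatives, and for a fixed $\bm\theta$ let $I_{\bm\theta}$ denote the differential ideal of all differential polynomials in $\bm z$ (with real coefficients) that vanish identically along every admissible I/O trajectory generated by $\Sigma^{\bm\theta}$. Because $\bm f$ and $\bm g$ are rational, Ritt's elimination procedure (or any equivalent characteristic-set/Gr\"obner-basis computation) applied to the differential ideal generated by $\dot{\bm x} - \bm f$ and $\bm y - \bm g$, under a ranking that eliminates the state $\bm x$, yields a finite set of differential polynomials $\{\Psi_j(\bm z;\bm\theta)\}$ generating (the radical of) the elimination ideal $I_{\bm\theta}$. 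Normalizing each $\Psi_j$ to be monic in its leading derivative makes the coefficients of its monomials rational functions of $\bm\theta$, which assemble into the map $\bm c:\bm\Theta\to\R^N$, $\bm\theta\mapsto \bm c(\bm\theta)$, whose injectivity is exactly what is at issue.

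The conceptual core is the chain of equivalences
\[
  \bm y(\cdot,\bm\theta)=\bm y(\cdot,\bm\theta')
  \iff I_{\bm\theta}=I_{\bm\theta'}
  \iff \bm c(\bm\theta)=\bm c(\bm\theta').
\]
For the first equivalence I would argue that, for a generic (persistently exciting) input, the single I/O trajectory is Zariski-dense in the differential variety cut out by $I_{\bm\theta}$, so the trajectory determines $I_{\bm\theta}$ as its vanishing ideal and conversely; hence two systems share an I/O map precisely when they share the I/O ideal. For the second equivalence I would use that $I_{\bm\theta}$ is prime for almost every $\bm\theta$ and that a prime differential ideal has, for a fixed ranking, a \emph{unique} monic characteristic presentation; thus the normalized I/O equations, and therefore $\bm c(\bm\theta)$, are recovered uniquely from $I_{\bm\theta}$, while conversely the $\Psi_j$ generate $I_{\bm\theta}$. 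Composing the two equivalences gives $\bm y(\cdot,\bm\theta)=\bm y(\cdot,\bm\theta')\iff\bm c(\bm\theta)=\bm c(\bm\theta')$, and the theorem is then immediate: global structural identifiability says the left-hand equality forces $\bm\theta=\bm\theta'$ for almost every $\bm\theta$, which is precisely injectivity of $\bm c$.

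To discharge the ``regardless of how the input--output equations are generated'' clause, I would note that any two admissible normalized generating sets of $I_{\bm\theta}$ induce coefficient maps with identical \emph{fibres}: each partitions $\bm\Theta$ according to the same equivalence relation $I_{\bm\theta}=I_{\bm\theta'}$. Consequently injectivity of one coefficient map is equivalent to injectivity of any other, so the criterion is intrinsic to the I/O behavior and independent of the elimination ordering or algorithm. The genericity qualifiers (``almost every $\bm\theta$'', ``structural'') are then handled by excising the measure-zero set on which the differential-algebraic structure degenerates---where a normalizing leading coefficient vanishes, where $I_{\bm\theta}$ fails to be prime, or where the chosen input is not sufficiently exciting.

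The step I expect to be the main obstacle is the first equivalence: proving that the I/O map both determines and is determined by the I/O ideal. This requires showing that a generic input excites the system enough that its lone trajectory is Zariski-dense in the I/O variety---so that no spurious relations are introduced and none are missed---together with the uniqueness of the monic characteristic presentation of a prime differential ideal. These are the genuinely differential-algebraic ingredients; once they are in hand, the reduction from identifiability to injectivity of $\bm c$ is purely formal.
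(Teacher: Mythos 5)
A preliminary remark: the paper contains no proof of \Cref{thm:Eisenberg} to compare against---the theorem is imported verbatim from the cited reference \cite{eisenberg2013identifiability} and used as a black box---so your attempt can only be judged on its own terms, against the standard differential-algebra proof in the literature whose architecture you are reconstructing.

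Your skeleton (I/O ideal $I_{\bm\theta}$, characteristic presentation, the two-step equivalence) is the right route, but the two load-bearing steps are asserted exactly where the genuine difficulty---and in one case a known failure mode---lives. The clause ``regardless of how the input--output equations are generated'' cannot be discharged by your fibre argument: that argument presupposes that \emph{every} admissible generating scheme already satisfies $\bm c(\bm\theta)=\bm c(\bm\theta') \iff I_{\bm\theta}=I_{\bm\theta'}$, but this equivalence is precisely what fails for I/O relations that are not a characteristic presentation of the prime I/O ideal. This is the well-documented \emph{solvability} issue: when the solvability condition is violated, the coefficients of a perfectly valid normalized I/O equation need not be identifiable, so one coefficient map can be injective while another is not, and the fibres do \emph{not} coincide across schemes. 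The paper itself flags this in its introduction, citing \cite{saccomani2003parameter} and \cite{hong2020global,ovchinnikov2021computing,ovchinnikov2022input} for the consequences of violating solvability; your word ``admissible'' silently carries the entire missing hypothesis. The uniqueness of the monic characteristic presentation that you invoke is a theorem about prime differential ideals under a \emph{fixed} ranking; it does not by itself cover arbitrary generating sets, arbitrary rankings, or arbitrary elimination algorithms, which is what the ``regardless'' clause demands.

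The second gap is your first equivalence. Zariski-density of the single observed trajectory in the variety of $I_{\bm\theta}$ is not secured by persistent excitation of the input alone: it depends on the initial condition as well. For special (non-generic) initial states---a system started at an equilibrium, or on a proper invariant subvariety---the trajectory satisfies strictly more differential relations than those in $I_{\bm\theta}$, equality of output maps no longer forces equality of I/O ideals, and coefficient identifiability can diverge from parameter identifiability. Note that this is exactly the regime of the present paper's model, where $\bm\gamma(0)=\bm 0$ and the output vanishes at $t=0$; the set of bad initial conditions cannot be excised as a measure-zero set in $\bm\theta$, because the initial condition is fixed rather than generic. You also use primality of $I_{\bm\theta}$ for almost every $\bm\theta$ without argument. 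A complete proof must restrict to characteristic presentations (or impose solvability) and add explicit genericity/accessibility hypotheses on inputs and initial conditions---which is why the literature states this theorem with those caveats, and why the statement as quoted, taken literally with the ``regardless'' clause and no solvability hypothesis, is not provable by the route you sketch.
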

\section{Further elaboration on \Cref{example:first}} \label{sec:further}
\textbf{Identifiability analysis}. %For parameter estimation phase,
% we  assume that the signs of the payoff values during each subinterval are known.
According to \Cref{lem:identifiability-kappa}, the maximum rate of vaccination $\kappa$ is identifiable.
% The identifiability of  the sets of payoff values, $\mathcal{P}(0)$ and $\mathcal{P}(20)$, is guaranteed thanks to  \Cref{lem:case2}.
Recall that if  the ordering of the payoff values of imitators during each subinterval is known, in view of \Cref{cor:identifiability-of-the-cis-case2}, their identifiability is guaranteed.
% It can be shown that the condition in \Cref{cor:identifiability-xbar-IM-R} is satisfied,
% %I checked \matr{\beta} = \sigma_2 - \sigma_1/y + \beta_2, where \beta_2 = y^{-2} + \dot{y}y^{-3}/\kappa, since during the first interval \sigma_1 <1 we can be sure that during this interval that term is not zero
% resulting in the identifiability of the population proportion of vaccine refusers, $\rho_5 + \rho_6$.
Matrix $\matr{A}$ then reads as
\begin{equation} \label{eq:example-A-first}
\begin{blockarray}{ccccccc}
& G_1 & G_2 & G_3 & G_4 & G_5 & G_6 \\
    \begin{block}{c(cccccc)}
    \tilde{t}_1 &  1& 1 & 1& 1 & 1 & 1 \\
   \tilde{t}_2 &  \exp(-\kappa(\tilde{t}_2 - t^*))   & \exp(-\kappa(\tilde{t}_2 - t^*)) &a_{23} & a_{24} & 1 & 1 \\
   \tilde{t}_3 &   \exp(-\kappa(\tilde{t}_3 - t^*))   &  \exp(-\kappa(\tilde{t}_3 - t^*)) &a_{33} & a_{34} & 1 & 1 \\
   \tilde{t}_4 &   \exp(-\kappa(\tilde{t}_4 - t^*))   & \exp(-\kappa(\tilde{t}_4 - t^*)) & a_{43}& a_{44} & 1 & 1 \\
    \tilde{t}_5 &   \exp(-\kappa( \tilde{t}_5 - t^*))   & \exp(-\kappa( \tilde{t}_5 - t^*)) & a_{53}& a_{54} & 1 & 1 \\
  \tilde{t}_6 &   \exp(-\kappa(\tilde{t}_6 - t^*))   & \exp(-\kappa(\tilde{t}_6 - t^*)) & a_{63}& a_{64} & 1 & 1 \\
    \end{block}
    \end{blockarray},
\end{equation}
 where $\tilde{t}_1 = 0$, 
 $\tilde{t}_2 \in (t^*,t_1)$, 
  $\tilde{t}_3 = t_1$,
$\tilde{t}_4,\tilde{t}_5 \in (t_1,T)$, $\tilde{t}_6 = T$, 
 $$ a_{23} =   \exp \big(-\kappa \Delta \pi_s(t^*) \int_{t^*}^{\tilde{t}_2} y(\tau) d\tau  \big), \quad  a_{33} =   \exp \big( -\kappa \Delta \pi_s(t^*) \int_{t^*}^{\tilde{t}_3} y(\tau) d\tau  \big),$$
  $$ a_{24} =   \exp \big(-\kappa \Delta \pi_n(t^*) \int_{t^*}^{\tilde{t}_2} y(\tau) d\tau  \big), \quad  a_{34} =   \exp \big( -\kappa \Delta \pi_n(t^*) \int_{t^*}^{\tilde{t}_3} y(\tau) d\tau  \big),$$
where $\Delta \pi_s$ (resp. $\Delta \pi_n$) is the perceived payoff gain of imitative seniors (resp. non-seniors) and for $j = 4,5,6$ we have
   $$ a_{j3} =   \exp \big(-\kappa \Delta \pi_s(t^*) \int_{t^*}^{t_1} y(\tau) d\tau   -\kappa \Delta \pi_s(t_1) \int_{t_1}^{\tilde{t}_j} y(\tau) d\tau  \big),$$
     $$ a_{j4} =   \exp \big(-\kappa \Delta \pi_n(t^*) \int_{t^*}^{t_1} y(\tau) d\tau   -\kappa \Delta \pi_n(t_1) \int_{t_1}^{\tilde{t}_j} y(\tau) d\tau  \big).$$
The first two columns and the last two columns of matrix $\matr{A}$ are  the same, and, accordingly,  matrix $\matr{A}$ is not full rank.
 The conditions in \Cref{prop:identifiability-of-x-tstar}  are  not then satisfied (\Cref{rem:multiple-refusers-rationalists}).
% The identifiability of the proportion of vaccine refusers was previously shown.
We then consider the total proportion of rationalists, $\rho_1 + \rho_2$, and total proportion of vaccine refusers, $\rho_5 + \rho_6$,  as two unknown parameters and delete two arbitrarily rows of the matrix \eqref{eq:example-A-first}, say the  fourth and fifth ones.
This results in
\begin{equation} \label{eq:example-second-A} 
\begin{blockarray}{ccccc}
& G_1+G_2 & G_3 & G_4  & G_5 + G_6 \\
    \begin{block}{c(cccc)}
    \tilde{t}_1 &  1& 1 & 1 & 1 \\
     \tilde{t}_2  & \exp(-\kappa(\tilde{t}_2 - t^*)) &a_{23} & a_{24} & 1\\
    \tilde{t}_3 &   \exp(-\kappa( \tilde{t}_3 - t^*)) &a_{33} & a_{34}& 1 \\
     \tilde{t}_6 &   \exp(-\kappa( \tilde{t}_6 - t^*))  & a_{63}& a_{64} & 1 \\
    \end{block}
    \end{blockarray},
\end{equation}
which is  full-rank.
%I checked in check_ranking.py, and it was full rank%
Consequently, the proportions of imitative seniors $\rho_3$, imitative non-seniors $\rho_4$,  total rationalists $\rho_1 + \rho_2$, and vaccine refusers $\rho_5 + \rho_6$ are globally identifiable in view of \Cref{prop:identifiability-of-x-tstar}.

\Cref{prop:identifiability-of-x-tstar} requires the information on the ordering of the imitative payoff values, though.
Thankfully, it can be shown that, in this example, the conditions in \Cref{lem:gi-xstar}
are satisfied, indicating that the set $\bm{\mathcal{N}}$ is identifiable and, in turn, the evolution of proportions of unvaccinated imitators and rationalists  during the vaccination program can be retrieved without knowledge of payoff value ordering.
In addition, the evolution of the imitative positive payoff values is also retrievable.
However, the proportion of each imitative group is identifiable if the function from the set $\bm{\mathcal{N}}$ to the set $\mathcal{M}$ is known. 
In this case, the proportions  $\rho_3$ and $\rho_4$ can be calculated.

\textbf{Parameter estimation.} We simulated the model with true parameter values and recorded the output, i.e., the cumulative proportion of vaccinated individuals $y(t)$, and its derivatives up to the third order at six time instants--four time instants in the first interval $[0,20)$ and two time instants in the second interval $[20,40).$
We also recorded the integral of the output at two time instants $t_1 = 20$ and $T =40$.
We then obtained the input-output equation  and estimated the maximum rate of vaccination $\kappa$, the proportion  of vaccine-refusers $\rho_5 + \rho_6$, and the set $\mathcal{P}(0)$  using the four data points in the first time interval and the nonlinear least square method.
In the next step, we used the values of the output and its derivatives at the two data points in the second interval and estimated the set $\mathcal{P}(20)$.
The proportions of the groups were estimated using two approaches.

In the first approach, we assumed that the ordering of the payoff values was known and, in turn, besides the sets $\mathcal{P}(0)$ and $\mathcal{P}(20)$, the values of $\Delta \pi_s(0)$, $\Delta \pi_n(0)$, $\Delta \pi_s(20)$, and $\Delta \pi_n(20)$ were identifiable.
Then, we formed a system of linear equations $\matr{A}' \bm x^* = \bm b$, where matrix $\matr{A}'$ was similar to \eqref{eq:example-second-A} just that the second row and the last column were deleted. 
The vector $\bm b$ was equal to $\big(1-y(t^*) - \rho_5 - \rho_6, 1-y(t_1) - \rho_5 - \rho_6, 1-y(T) - \rho_5 - \rho_6 \big)^\top$ and $\bm x^* = (\rho_1 + \rho_2, \rho_3, \rho_4)^\top.$
By solving the equation $\bm x^* = \text{inv}(\matr{A}')\bm b$, the proportions of $\rho_1 + \rho_2$, $\rho_3$, and $\rho_4$ were obtained.
The estimated values are reported in the third row of the \Cref{tab:example2}.

In the second approach, we did not use the information about the  payoff value ordering.
Using the equations  \eqref{eq:xprime-mainText} and \eqref{eq:gamma-im-br-main-text}, the evolution of the unvaccinated rationalists  and the evolution of unvaccinated imitative groups were obtained. 
Accordingly, the population proportion of rationalists was estimated from \eqref{eq:xprime-mainText}, evaluated at $t = 0.005$. 
Similarly, those of imitators were estimated from evaluating \eqref{eq:gamma-im-br-main-text} at $t = 0.005$ (\Cref{fig:enter-label}).
Note that in this case, the obtained imitative proportions cannot generally be associated to specific groups, i.e., seniors or non-seniors.
Moreover, we set the instant of evaluation to $0.005$, to avoid from division by zero, as the output appears in the denominators of equations  \eqref{eq:xprime-mainText} and \eqref{eq:gamma-im-br-main-text} and  is zero at time zero. 
The estimated  proportions were $0.3908$ for rationalists and $0.15$ and $0.2499$ for the two imitative groups. 

\begin{figure}
    \centering
    \includegraphics[width=0.95\linewidth]{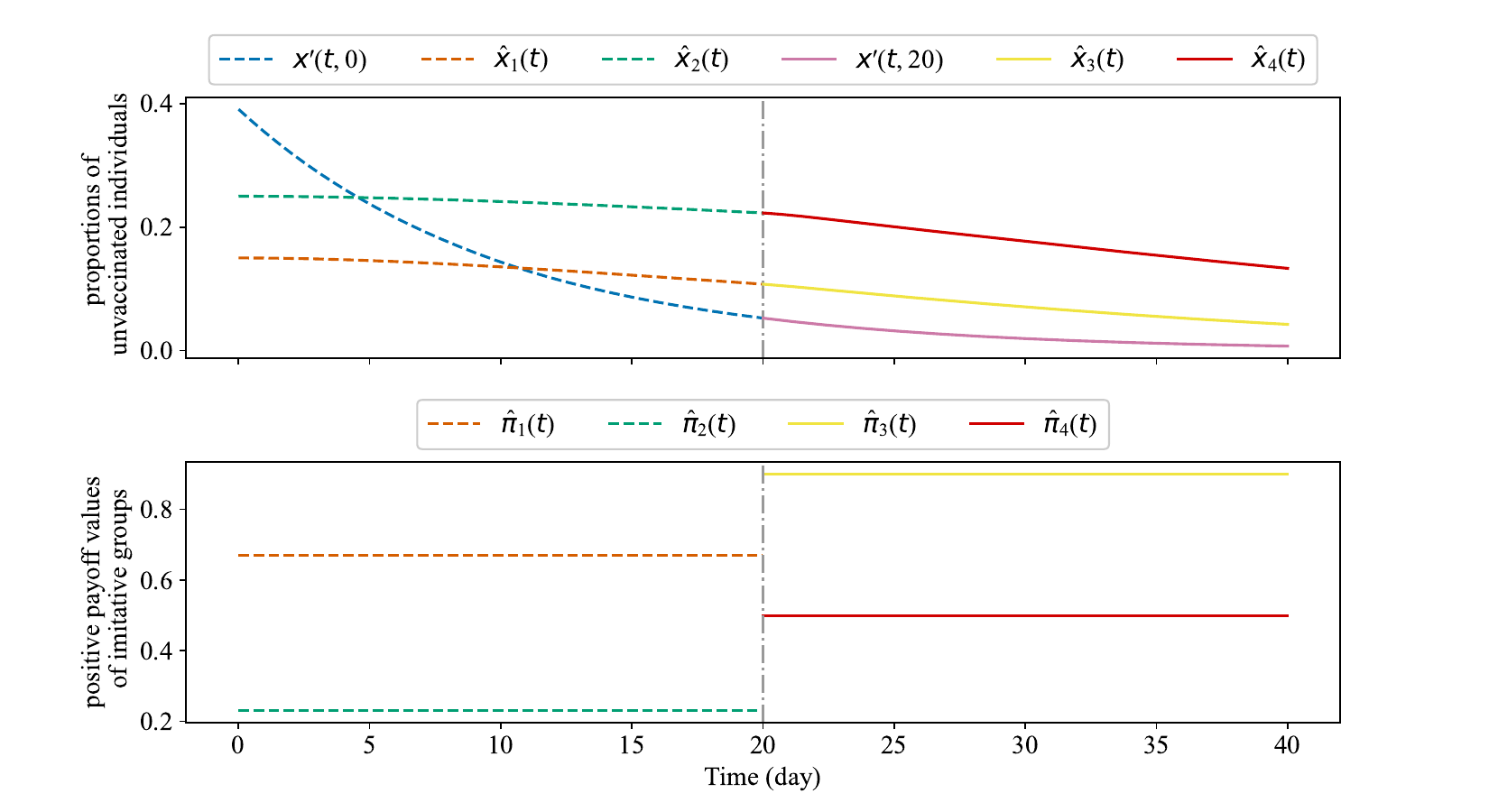}
    \caption{Retrieving the evolution of the proportion of unvaccinated rationalists, as well as the proportions and payoffs of unvaccinated imitators during the vaccination program $[0,40)$ in \Cref{example:first}, without knowledge of payoff value ordering.
   Upper panel:
   over the time interval $[0,20)$ (resp. $[20,40)$), the evolution of the proportion of unvaccinated rationalist $x'(t,0)$ (resp. $x'(t,20)$) was obtained using the equation \eqref{eq:xprime-mainText}.
    In the absence of information about the ordering of the  payoff values, for time interval $[0,20)$ (resp. $[20,40)$), using equation \eqref{eq:gamma-im-br-main-text}, we obtained two graphs ($\hat{x}_1(t), \hat{x}_2(t)$) (resp.
    $\hat{x}_3(t), \hat{x}_4(t)$) each one  corresponding to one imitative group.
    Under the conditions of \Cref{lem:gi-xstar}, the pieces can be concatenated, and for each imitative group, the whole trajectory can be retrieved, i.e., the pieces $\hat{x}_1(t)$ and  $\hat{x}_3(t)$ (resp. $\hat{x}_2(t)$ and  $\hat{x}_4(t)$) are associated with the same imitative group.
    Lower panel:
    the payoff values $\hat{\pi}_1$ and $\hat{\pi}_3$ (resp. $\hat{\pi}_2$ and $\hat{\pi}_4$) are associated with the same imitative group.
    % The set $\bm{\mathcal{N}}$ read as $\{\big((\hat{x}_1(t^*),\hat{x}_3(t)), (\hat{\pi}_1,\hat{\pi}_3))\big),
    % \big((\hat{x}_2(t),\hat{x}_4(t)), (\hat{\pi}_2,\hat{\pi}_4))\big)\}$.
    % Considering the true parameter values, the tuple  $\big((\hat{x}_1(t),\hat{x}_3(t)), (\hat{\pi}_1,\hat{\pi}_3))\big)$ (resp. $\big((\hat{x}_2(t),\hat{x}_4(t)), (\hat{\pi}_2,\hat{\pi}_4))\big)$) is related to the imitative seniors (resp. non-seniors).
     }
    \label{fig:enter-label}
\end{figure}
\begin{table}  
% \begin{scriptsize}
\caption{The estimated parameter values in \Cref{example:first}, where no information on the ordering of the two imitative groups' payoff values was fed to the optimizer. 
 The second (resp. third) row reports the means and the coefficients of variation ($100\% \times $standard deviation divided by the mean), denoted by CV, of the parameters obtained from fitting the model to  $1000$ simulated  data, which were made noisy with Gaussian noise having a mean of zero and a standard deviation equal to ten percent of the mean (resp. Poisson noise structure). 
The initial parameter guesses were drawn from a uniform distribution within $50\%$ of the true values.
}\label{tab:example2-noordering}
\centering %I decided to report only one of the estimatiaon results for the noiseless data
\resizebox{\textwidth}{!}{
\begin{tabular}{|l|r|r|r|r|r|r|r|r|r|r|} 
\hline
  &  $\kappa$ & $\rho_1+ \rho_2$ & $\rho_3$ & $\rho_4$ & $\rho_5 + \rho_6$ & $\Delta \pi_s(0)$ & $\Delta \pi_n(0)$  & $\Delta \pi_s(t_1)$ & $\Delta \pi_n(t_1)$  \\ 
  \hline
         \footnotesize  \makecell[l]{Mean (Gaussian) \\ CV }   & \makecell{$0.106$\\$19\%$}& \makecell[r]{$0.383$\\$16\%$} & \makecell[r]{$0.159$\\$33\%$} & \makecell[r]{$0.262$\\$23\%$} & \makecell[r]{$0.195$\\$30\%$} & \makecell[r]{$0.64$\\$32\%$} & \makecell[r]{$0.225$\\$49\%$} & \makecell[r]{$0.785$\\$20\%$} & \makecell[r]{$0.498$\\$27\%$} 
         \\
    \hline
         \footnotesize \makecell[l]{Mean (Poisson) \\ CV}   & \makecell{$0.106$\\$18\%$}& \makecell[r]{$0.382$\\$16\%$} & \makecell[r]{$0.155$\\$34\%$} & \makecell[r]{$0.26$\\$19\%$} & \makecell[r]{$0.203$\\$25\%$} & \makecell[r]{$0.63$\\$33\%$} & \makecell[r]{$0.24$\\$39\%$} & \makecell[r]{$0.8$\\$17\%$} & \makecell[r]{$0.51$\\$24\%$} \\
         \hline
\end{tabular}
}
 %\addtabletext{}
 % \end{scriptsize}
\end{table}
\begin{figure}
    \centering
\includegraphics[width=0.95\linewidth]{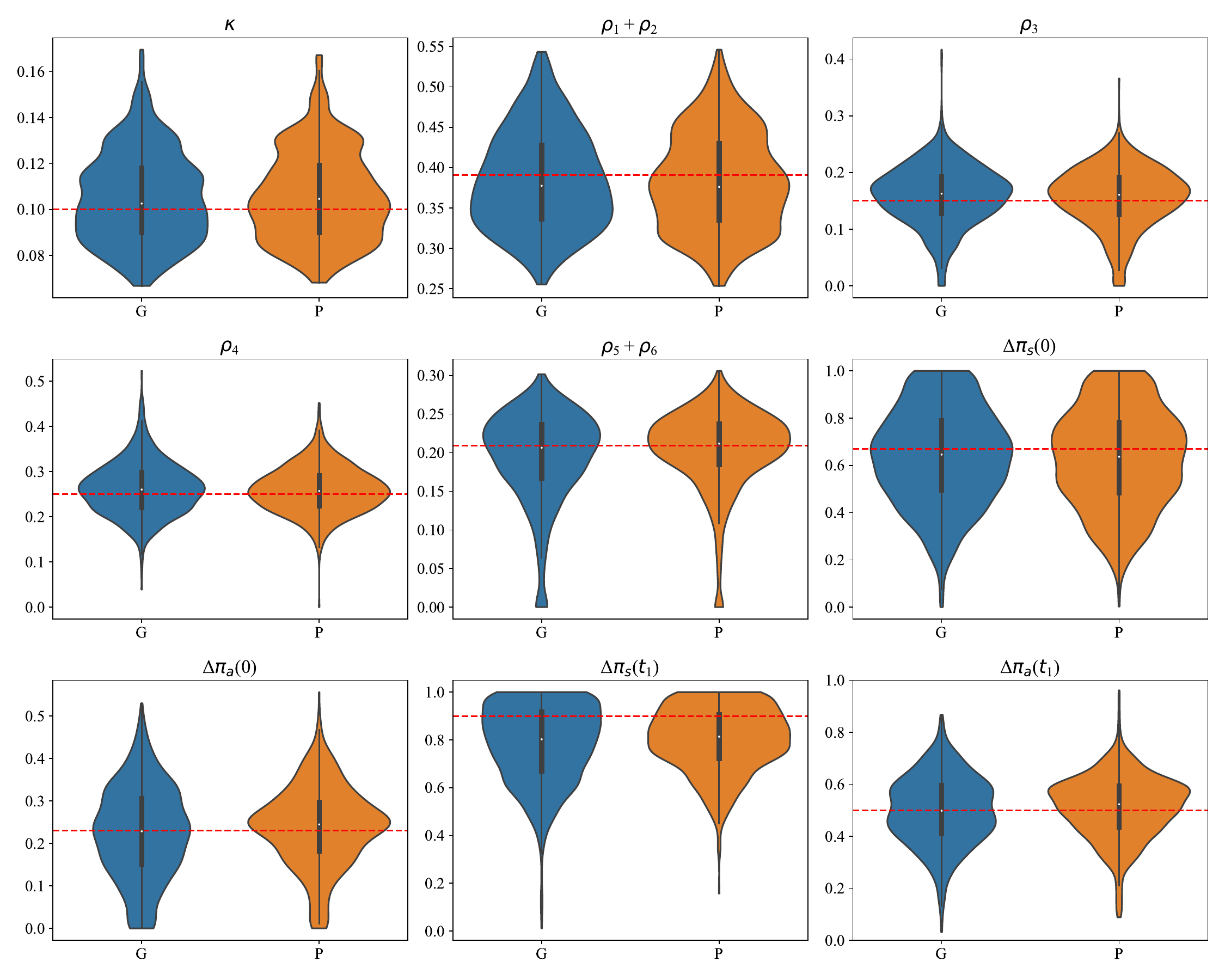}
    \caption{The violin plots of the estimated parameter values obtained from
    fitting the model to  $1000$ simulated noisy data.
    The letter ``G'' (resp. ``P'') stands for Gaussian (resp. Poisson) noise structure.
    The dashed lines show the true parameter values.
    No  ordering  among the payoff values was assumed.
    The distributions of the estimated values using the two noise structures are quite similar  for parameters maximum rate of vaccination $\kappa$ and  the proportions of imitative $\rho_3$ and $\rho_4$.
     }
    \label{fig:violin-no-ordering}
\end{figure}
\textbf{Practical Identifiability. }\Cref{fig:violin-no-ordering} represents the violin plots of the estimated parameter values in \Cref{example:first}, where no ordering among the payoff values are assumed during parameter estimation.
The huge differences between the violin plots of the estimated values for $\Delta \pi_s(0)$ (and $\Delta \pi_s(t_1)$) in 
\Cref{fig:violin} and \Cref{fig:violin-no-ordering} stem from the considered bounds on the parameter values.
When the ordering of the payoff values was known (\Cref{fig:violin}), rather than the value of $\Delta \pi_s$, the  difference between the two parameters $\Delta \pi_s$ and $\Delta \pi_n$ was estimated.
The difference was forced to be non-negative.
For the case where the ordering was not forced (\Cref{fig:violin-no-ordering}), the values of  parameters  $\Delta \pi_n$ and $\Delta \pi_s$ were directly estimated with bounds $[0,1]$.
 \section*{Acknowledgments}
This research was enabled in part by support provided by the Digital Research Alliance of Canada (alliancecan.ca).
\bibliographystyle{siamplain}
\bibliography{references}
\end{document}